\newtheorem{theorem}{Theorem}
\newtheorem{lemma}[theorem]{Lemma}
\theoremstyle{plain}
\newtheorem{definition}[theorem]{Definition}
 \newtheorem{example}[theorem]{Example}
\newtheorem{fact}[theorem]{Fact}
\newcommand{\Dwords}{\S[\Dd]^*}
\newcommand{\bind}[1]{\downarrow #1}
\title{Defining relations on graphs: how hard is it in the presence of
node partitions?}
\author{
M Praveen and B Srivathsan\\
       Chennai Mathematical Institute\\
       Chennai, India\\
}
\date{}
\begin{document}
\newcommand{\lang}{\mathcal{L}}

\newcommand{\Oh}{\mathcal{O}}

\newcommand{\thmref}[1]{Theorem~\ref{#1}}
\newcommand{\figref}[1]{Fig.~\ref{#1}}
\newcommand{\lemref}[1]{Lemma~\ref{#1}}
\newcommand{\factref}[1]{Fact~\ref{#1}}
\newcommand{\defref}[1]{Definition~\ref{#1}}
\newcommand{\exref}[1]{Example~\ref{#1}}
\newcommand{\secref}[1]{Section~\ref{#1}}
\newcommand{\xra}{\xrightarrow}

\newcommand{\rdpqm}{\mathrm{RDPQ_{mem}}}
\newcommand{\rdpqe}{\mathrm{RDPQ_{=}}}

\newcommand{\botk}{\bot^k}

\usetikzlibrary{shapes.multipart}
\usetikzlibrary{decorations.markings}
\newlength{\ml}
\setlength{\ml}{1cm}

\tikzstyle{state}=[draw=black, circle, inner sep=0.01\ml, minimum
height=0.4\ml]
\tikzstyle{gadget}=[draw=black, rectangle, inner sep=0.01\ml, minimum
height=0.3\ml, minimum width=0.3\ml]

\maketitle
\begin{abstract}
  Designing query languages for graph structured data is an active
  field of research. Evaluating a query on a graph results in a
  relation on the set of its nodes. In other words, a query is a
  mechanism for defining relations on a graph. Some relations may not
  be definable by any query in a given language. This leads to the
  following question: given a graph, a query language and a relation
  on the graph, does
  there exist a query in the language that defines the 
  relation? This is called the definability problem. When the given
  query language is standard regular expressions, the definability
  problem is known to be \PSPACE{}-complete.

  The model of graphs can be extended by labeling nodes with values
  from an infinite domain. These labels induce a partition on the set
  of nodes: two nodes are equivalent if they are labeled by the same
  value. Query languages can also be extended to make use of this
  equivalence. Two such extensions are Regular Expressions with Memory
  (REM) and Regular Expressions with Equality (REE).

  In this paper, we study the complexity of the definability problem
  in this extended model when the query language is either REM or
  REE. We show that the definability problem is \EXPSPACE{}-complete
  when the query language is REM, and it is \PSPACE{}-complete when
  the query language is REE. In addition, when the query language is a
  union of conjunctive queries based on REM or REE, we show
  \coNP{}-completeness.

\end{abstract}

%\category{H.4}{Information Systems Applications}{Miscellaneous}

\section{Introduction}
Graph structures representing data have found many applications like
semantic web \cite{PAG2009,GHM2011}, social networks
\cite{RS2009} and biological networks
\cite{L2005}. One model of graph structured data consists of a set of
nodes labeled by values from some infinite domain and directed edges
between the nodes labeled by letters from a finite alphabet. For
example, a graph representing a social network may have a node for
each member. There may be directed edges labeled $\mathit{friend}$
between two nodes if the corresponding members are friends in the
network. Nodes could be labeled by the name of the corresponding
member's favourite movie. These labels from the infinite domain
partition the set of nodes of the graph. Two nodes are equivalent if
they have the same label. An active field of research is designing
languages for querying such graphs, using both the structure of the
graph and the partition induced by labels from the infinite domain
\cite{LV2012,B2013}.

We will use the term \emph{data graphs} for the model where nodes carry
labels from an infinite domain (a nomenclature from \cite{LV2012}).
The labels themselves are called \emph{data values}. One way of querying data
graphs is to simply specify a language $L$ of strings.  Each string in
$L$ has data values in odd positions and a letter from the finite
alphabet in even positions.  Evaluating a query specified by such a
language on a data graph returns the set of all pairs of nodes
$\struct{q_{1}, q_{2}}$ such that there is a path from $q_{1}$ to $q_{2}$
labeled by a string in the specified language. Register automata
\cite{K1994,IS2000,NSV2004} are extensions of standard finite state
automata for handling data values from infinite domains. Using
register automata as the formalism to specify languages, Libkin and
Vrgo\v{c} studied the complexity of evaluating queries
on data graphs~\cite{LV2012}. There, the main reason behind the choice of
register automata over other formalisms is to obtain tractable
complexity for the query evaluation problem. Aiming towards a
practically usable query language, extensions of standard regular
expressions were introduced in \cite{LV2012}. They are named regular
expressions with memory (REM) and (less expressive) regular
expressions with equality (REE). REM are expressively equivalent to
register automata \cite{LV2012b}. The complexity of query containment
for these have also been studied \cite{KRV2014}.

Here we study the complexity of the definability problem: given a
data graph and a set of pairs of nodes, check if the set can be
obtained as the evaluation of some query on the data graph. One of the
motivations for this study is the extraction of schema mappings, which
we illustrate by an example. Given a data graph representing a social
network, suppose we want to create another graph where two nodes are
in the $\mathit{movieLink}$ relation if they represent people having
the same favourite movie and who are linked by a series of friends.
There is a correspondence between the two graphs; in general such
correspondences are called schema mappings. This particular schema
mapping is specified by saying that the relation $\mathit{movieLink}$
is exactly the relation returned by evaluating the query
$\mathit{friend}^{*}$ on the original graph, with the additional
condition that the two nodes have the same data value (i.e., the same
favourite movie). Given the original data graph and the relation
$\mathit{movieLink}$, suppose we want to algorithmically build the
specification of the schema mapping using some query language. Then we
need to check if the query language is capable of defining the
$\mathit{movieLink}$ relation --- this is the definability problem.
Using example instances of source and target schemas for deriving
appropriate source-to-target mappings have been explored in relational
databases \cite{FGPV2009,GS2010,DPGW2010,ATKT2011}. Research on schema
mappings for graph databases has started \cite{CDLV2011,BPR2013},
though data values and extraction from example graphs have not been
considered till now to the best of our knowledge. Example instances
have also been used to derive ``wrapper'' queries for extraction of
relevant information from data sources \cite{GKBHF2004}.

\textbf{Contributions} We study the complexity of the definability
problem in data graphs, using either REM or REE as query languages. We
prove the following results.
\begin{enumerate}
    \item The definability problem with REM as the query language
        is \EXPSPACE{}-complete.
    \item The definability problem for REM with $k$ memory locations is in
        {\text{\sc Space}}$(\Oh(n\d^{k}))$, where $n$ is the number of
        nodes and $\d$ is the number of data
        values used in the data graph.
    \item The definability problem for REE is \PSPACE{}-complete.
    \item The definability problem for union of conjunctive queries
        based on REM or REE is \coNP{}-complete.
\end{enumerate}
For the upper bounds, we have to overcome some challenges.  In the
presence of data values, standard language theoretic tools like
complementation, determinization and decidability of language
inclusion do not work. We have to understand how data values affect
definability, so that we can appeal directly to the more fundamental
idea of pumping lemma, which still works in the presence of data
values. For the lower bounds, we identify how small data graphs can
count exponentially large numbers using data values, which otherwise
require exponentially large graphs.

\textbf{Related work} Apart from derivation of mappings
\cite{FGPV2009,GS2010,DPGW2010,ATKT2011}, studies have also been made
of using data examples to illustrate the semantics of schema mappings
\cite{ACKT2011}. In \cite{CDK2013}, the problem of deriving schema
mappings from data examples is studied from the perspective of
algorithmic learning theory.

In \cite{ANS2013}, the complexity of the definability problem for
graph query languages is studied, but they do not consider data
values. Their main result is that definability using regular
expressions in \PSPACE{}-complete. They also give upper and lower
bounds for various fragments of conjunctive queries based on regular
expressions. We do not study conjunctive queries or their fragments
but instead give tight bounds for union of conjunctive queries, which
also apply to the setting of \cite{ANS2013} where there are no data
values.

The problem of query containment for fragments and extensions of REM
and REE have been studied in \cite{KRV2014}. A query $e_{1}$ is
contained in another query $e_{2}$ if the set defined by $e_{1}$ is a
subset of the set defined by $e_{2}$ on all data graphs. It is shown
in \cite{KRV2014} that for some fragments of REM and REE, query
containment is respectively \EXPSPACE{}-complete and
\PSPACE{}-complete. These are similar to the bounds we get for the
definability problem. However, the upper bounds in \cite{KRV2014}
apply only to the positive fragments of REM and REE, where tests for
inequality of data values are not allowed (query containment in the
general case is undecidable). There is no obvious way to use those
techniques here, since we allow the full syntax for REM and REE. For
the \EXPSPACE{} lower bound, the authors of \cite{KRV2014} use
techniques similar to those used in \cite{BRL2013} to prove
\EXPSPACE{} lower bound for checking the emptiness of parameterized
regular expressions, closely related to REM. The \EXPSPACE{} lower
bound in \cite{BRL2013} is based on succinctly reducing the emptiness
of intersection of several expressions to the emptiness of a single
expression. Here, we need to use a different approach, since we deal
with the definability problem and can not rely on intersections.

%%% Local Variables: 
%%% mode: latex
%%% TeX-master: "main"
%%% End: 

\section{Preliminaries}
\label{sec:preliminaries}

We will recall the basic definitions. The model of
graphs with node labels from an infinite domain are called data graphs
in~\cite{LV2012}. We will follow the same nomenclature here. We will
also make use of many other notations from \cite{LV2012}.

Let $\S$ be a finite alphabet and let $\Dd$ be a countably infinite
set of data values. We write $[n]$ for the set $\{0, 1, \dots, n \}$.

% Data graph
\begin{definition}[Data graph]
  A \emph{data graph} over $\S$ and $\Dd$ is a triple $G = (V, E,
  \rho)$ where:
  \begin{itemize}
  \item $V$ is a finite set of nodes,
  \item $E \incl V \times \S \times V$ is a set of edges with labels
    in $\S$,
  \item $\r: V \to \Dd$ maps every vertex to a data value.
  \end{itemize}
\end{definition}

\begin{example} Figure~\ref{fig:data-graph} gives an example of a data
  graph over $\S=\{ a \}$ and $\Dd = \mathbb{N}$, the set of natural
  numbers. However, a given graph would use only a finite set of data
  values.  The role of data values will become clearer when we define
  query languages for such data graphs. We will
  use this graph as a running example throughout this section.
   \begin{figure}[h]
    \centering
    \begin{tikzpicture}[vertex/.style={circle, inner sep=2pt,draw,
        thick},pin distance=0.1\ml, scale=0.5]
      \node[state, pin=90:$0$] (v1) at (0,0) {\footnotesize $v_1$};
      \node[state, pin=90:$1$] (v2) at (2,0) {\footnotesize $v_2$};
      \node[state, pin=90:$0$] (v3) at (4,0) {\footnotesize $v_3$};
      \node[state, pin=90:$1$] (v4) at (6,0) {\footnotesize $v_4$};
      \node[state, pin=0:$1$] (u) at (0,-2)
      {\footnotesize $z_2$};
      \node [state, pin=90:$3$] (u2) at (-2,-2)
      {\footnotesize $z_1$};
      \node[state, pin=-90:$2$] (w1) at (0,-4) {\footnotesize $v_1'$};
      \node[state, pin=-90:$3$] (w2) at (2,-4) {\footnotesize $v_2'$};
      \node[state, pin=-90:$2$] (w3) at (4,-4) {\footnotesize $v_3'$};
      \node[state, pin=-90:$3$] (w4) at (6,-4) {\footnotesize $v_4'$};
      
      \begin{scope}[->, >=stealth]
        \draw (v1) -- node [above] {\footnotesize $a$} (v2);
        \draw (v2) -- node [above] {\footnotesize $a$} (v3);
        \draw (v3) -- node [above] {\footnotesize $a$} (v4);
        \draw (v1) -- node [left] {\footnotesize $a$} (u);
        \draw (u) --  node [left] {\footnotesize $a$} (w1);
        \draw (w1) -- node [above] {\footnotesize $a$} (w2);
        \draw (w2) -- node [above] {\footnotesize $a$} (w3);
        \draw (w3) -- node [above] {\footnotesize $a$} (w4);
        \draw (u) -- node [above] {\footnotesize $a$} (v2);
        \draw (w2) -- node [near end, left] {\footnotesize $a$} (v4);
        \draw (v3) -- node [near end, right] {\footnotesize $a$} (w3);
        \draw (u2) -- node [above] {\footnotesize $a$} (u);
      \end{scope}

    \end{tikzpicture}
    \caption{Example of a data graph over a unary alphabet $\Sigma =
      \{ a\}$ and using data values $\{0, 1, 2, 3\}$.}
    \label{fig:data-graph}
  \end{figure}
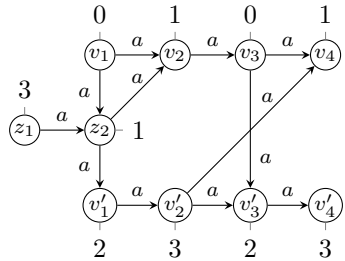
%%% Local Variables: 
%%% mode: latex
%%% TeX-master: "main"
%%% End: 
\end{example}

A \emph{path} in $G$ is a sequence $\xi=v_1 a_1 v_2
a_2 \dots v_{m-1} a_{m-1} v_m$ of nodes in $V$ alternating with
letters in $\Sigma$ such that $(v_i, a_i, v_{i+1})$ is in $E$ for
all $i < m$. The \emph{data path} $w_\xi$ corresponding to a path
$\xi$ is the sequence $\r(v_1) a_1 \r(v_2) a_2 \dots \r(v_{m-1})
a_{m-1} \r(v_m)$ obtained by replacing every node in $\xi$ by its
associated data value. We say that a data path $w$ \emph{connects} node $u$
to $v$ if there is a path $\xi = u a_1 u_1 \dots a_{m-1} v$ in $G$
such that $w_\xi = w$. We write $u \xra{w} v$ in this case.

In general, a data path is a sequence $d_0 a_0 d_1 a_1 \dots a_{m-1}
d_m$ of data values in $\Dd$ alternating with letters in $\Sigma$,
starting and ending with data values. The set of all data paths over
$\S$ and $\Dd$ is denoted by $\S[\Dd]^*$. A \emph{data language} $L
\incl \Dwords$ is a set of data paths. Given two data paths $w_1 = d_0
a_0 d_1 \dots a_{m-1} d_m$ and $w_2 = d_0' b_0 d_1' \dots b_{l-1}
d_l'$ where the last data value of $w_1$ coincides with the first data
value of $w_2$ ($d_m = d_0'$), the
\emph{concatenation} $w_1 \cdot w_2$ is the data path $d_0 a_1 d_1
\dots a_{m-1} d_m b_0 d_1' \dots b_{l-1} d_l'$. This naturally extends
to concatenations of many data paths. We will often write $w_1 w_2$
instead of $w_1 \cdot w_2$.%  If a data path $w = w_1 w_2 \cdots w_l$,
% then we say that it is a \emph{splitting} of $w$ into $w_1, \dots,
% w_l$.

We will now define two formalisms to characterize data
languages. These formalisms would then be used to define query
languages for data graphs. Since $\Dd$ could be infinite, these
formalisms cannot check for the exact data value. They can however
check for equality of two data values.
% Regular expressions with memory
The first formalism is \emph{regular expressions with memory}. They
are extensions of standard regular expressions over the finite
alphabet $\Sigma$, introduced in \cite{LV2012}. They are equipped with
\emph{registers}, that can store data values along a
data path. The stored data values can be used to impose conditions on
the data values allowed in future positions of the data path.

\begin{definition}
  Given a set of registers $r_1, r_2, \dots, r_k$, the set $\Cc_k$ of
  conditions is given by the following grammar:
  \begin{align*}
    c := ~\top~|~r_i^=~|~r_i^{\neq}~|~c \lor c~|~c \land c~|~\neg c, \quad 1
    \le i \le k
  \end{align*}
  The satisfaction is defined with respect to a data value $d \in \Dd$
  and a tuple $\t = (d_1, \dots, d_k) \in (\Dd \cup \bot)^k$ called an
  \emph{assignment}: $d, \t \models \top$ always,  $d, \t  \models r_i^=
    \text{ iff } d_i = d$ and $d, \t  \models r_i^{\neq} \text{ iff }
    d_i \neq d$. 
  % \begin{align*}
  %   d, \t \models \top \text{ always} \quad d, \t  \models r_i^=
  %   \text{ iff } d_i = d  \quad d, \t  \models r_i^{\neq} \text{ iff } d_i \neq d
  % \end{align*}
  The $\bot$ symbol is used to denote an empty register. It satisfies
  $\bot \neq d$ for every data value $d \in \Dd$. Satisfaction for the
  logical operators is as usual.
\end{definition}

% Regular expressions with memory that use registers and the above
% conditions on registers can now be defined.
\begin{definition}[Regular expressions with memory]
  Let $\S$ be a finite alphabet and $r_1, \dots, r_k$ a set of
  registers. Then, \emph{regular expressions with memory (REM)} are
  defined by the following grammar:
  \begin{align*}
    e := \e~|~a~|~e+e~|~e \cdot e~|~e^+~|~e[c]~|~\downarrow \bar{r}.e
  \end{align*}
  where $a\in \Sigma$, $c \in \Cc_k$ and $\bar{r}$ is a tuple of
  registers.
%
%  A regular expression with memory is well-formed if is satisfies two
%  conditions:
%  \begin{itemize}
%  \item Subexpressions $e^+$, $e[c]$ and $\bind{\bar{x}}.e$ are not
%    allowed if $e$ reduces to $\e$. We say that $e$ reduces to $\e$ if
%    it is either $\e$ or it is one of $e_1 + e_2$, $e_1.e_2$, $e_1[c]$
%    or $\bind{\bar{x}}.e_1$ where $e_1$ or $e_2$ reduce to $\e$.
%
%  \item No variable appears in a condition before it appears in
%    $\bind{\bar{x}}.e$.
%  \end{itemize}
%  % The set of well-formed regular expressions with memory is denoted
%  % by
%  % $\REG(\S[x_1,\dots,x_k])$.
\end{definition}

We will use $k$-REM to denote the set of regular expressions with
memory that use at most $k$ registers.  Let $\s$ be an assignment. We will denote by
$\s[\bar{r}\to d]$ the assignment obtained from $\s$ by assigning $d$ to
the registers in $\bar{r}$. The semantics of $k$-REMs are as follows,
reproduced from \cite{LV2012}.
\begin{definition}[Semantics of $k$-REMs] Suppose $e$ is a $k$-REM,
  $w$ is a data path in $\Dwords$ and $\s,\s' \in (\Dd \cup \bot)^k$
  are assignments of the $k$ registers used in $e$. The relation $(e,
  w, \s) \vdash \s'$ is defined by induction on the structure of $e$:
   \begin{align*}
     (\e, w, \s) & \vdash \s' && \text{if $w = d$ for some $d
       \in \Dd$ and $\s = \s'$} \\
     (a, w, \s) & \vdash \s' && \text{if $w = d_1 a d_2$ and $\s' =
       \s$} \\
     (e_1 + e_2, w, \s) & \vdash \s' && \text{if $(e_1, w, \s) \vdash
       \s'$ or $(e_2, w, \s) \vdash \s'$} \\
     (e_1 \cdot e_2, w, \s) & \vdash \s' && \text{if $w = w_1 \cdot
       w_2$ and $\exists~ \s_1 \in (\Dd \cup \bot)^k$ } \\
     &  && \text{ s. t. $(e_1, w_1, \s) \vdash \s_1$ } \\
     & && \text{ and $(e_2, w_2, \s_1) \vdash \s'$}\\
     (e^+, w, \s) & \vdash \s' && \text{if $w =
       w_1 w_2 \dots w_l$
     } \\
     & && \text{and $\exists~ \s_0, \dots, \s_{l} \in (\Dd \cup \bot)^k
       $ s.t.} \\
     & && (e, w_i, \s_{i}) \vdash
     \s_{i+1} \text{ for $i \in [l-1]$} \\
     & && \text{and $\s_0 = \s$, $\s_l = \s'$}\\
     (e[c], w, \s) & \vdash \s' && \text{if $(e, w, \s) \vdash \s'$
       and } \s',d \models c \\
     & && \text{where $d$ is the last data value in $w$} \\
     (\bind{\bar{r}} . e, w, \s) & \vdash \s' && \text{if $(e, w,
       \s[\bar{r}\to d]) \vdash \s'$} \\
     & && \text{where $d$ is the first value in $w$}
   \end{align*}

\end{definition}

The \emph{language} of a $k$-REM $e$ is defined as follows:
\begin{align*}
  \Ll(e) = \{~w\in \Dwords~|~ (e, w, \botk) \vdash \s \text{ for some }
  \s \}
\end{align*}
where $\botk$ denotes the assignment that has $\bot$ in every register.

\begin{example}
  The REM $ {\downarrow} r_1 \cdot a \cdot [r_1^=]$ uses one
  register. The language of this $1$-REM consists of all data paths of
  the form $d a d$ where the first and last data values are the
  same. The $2$-REM ${\downarrow} r_1 \cdot a \cdot {\downarrow} r_2
  \cdot b \cdot a [r_1^=] \cdot b [r_2^{\neq}]$ contains data paths of
  the form $d_1 a d_2 b d_3 a d_4 b d_5$ where $d_1 = d_4$ and $d_2
  \neq d_5$.
\end{example}

The next formalism for characterizing data languages is another
extension of standard regular expressions, called \emph{regular
expressions with equality}, again introduced in \cite{LV2012}. These
are less powerful than REMs, since checking the equivalences between
data values is restricted to a certain form.

\begin{definition}[Regular expressions with equality] Let $\S$ be a
  finite alphabet and $\Dd$ a countably infinite set of data values. A
  \emph{regular expression with equality (REE)} is constructed from
  the following grammar:
  \begin{align*}
    e := \e~|~a~|~e + e~|~e \cdot e~|~e^+~|~e_{=}~|~e_{\neq}
  \end{align*}
  where $a$ belongs to $\Sigma$. The language $\Ll(e)$ of an REE is
  defined as follows:
  \begin{align*}
    \Ll(\e) & ~=~ \{~d~|~d \in \Dd \} \\
    \Ll(a) & ~=~ \{~d_1 a d_2~|~d_1, d_2 \in \Dd \} \\
    \Ll(e_1 + e_2) & ~=~ \Ll(e_1) \cup \Ll(e_2) \\
    \Ll(e_1 \cdot e_2) & ~=~ \Ll(e_1) \cdot \Ll(e_2) \\
    \Ll(e^+) & ~=~ \{ ~w_1 \cdots w_l~| ~l\ge 1 \text{ and each } w_i
    \in
    \Ll(e) ~\} \\
    \Ll(e_=) & ~=~ \{~d_1 a_1 d_2 \dots a_{m-1} d_m \in \Ll(e)~|~ d_1 =
    d_m\} \\
    \Ll(e_{\neq}) & ~=~ \{~d_1 a_1 d_2 \dots a_{m-1} d_m \in \Ll(e)~|~
    d_1 \neq d_m\}
  \end{align*}
\end{definition}

\begin{example}
  The language of the REE $((a)_{\neq} \cdot (b)_{\neq})_{\neq}$
  contains data paths $d_1 a d_2 b d_3$ such that $d_1 \neq d_2$, $d_2
  \neq d_3$ and $d_1 \neq d_3$.
\end{example}

We call a bijection $\pi:\Dd \to \Dd$ an automorphism on $\Dd$, since
it preserves (in)equality.

\begin{definition}
  Let $\pi:\Dd \mapsto \Dd$ be an automorphism on $\Dd$. For a data path
  $w = d_0 a_0 d_1 a_1 \dots d_m$ over $\Sigma[\Dd]^*$, we denote by
  $\pi(w)$ the data path $\pi(d_0) a_0 \pi(d_1) a_1 \dots \pi(d_m)$ obtained by
  applying the automorphism $\pi$ on the data values of $w$.
\end{definition}

An important property of REM and REE is that they cannot distinguish
between automorphic data paths, just like register
automata~\cite{K1994}. 

\begin{fact}[\cite{K1994,LV2012b}]
    \label{fact:autDataPaths}
    For every REM or REE $e$, and for every data path $w \in \Ll(e)$ and automorphism
    $\pi: \Dd \to \Dd$, $\pi(w)$ is also in $\Ll(e)$.
\end{fact}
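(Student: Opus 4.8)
The plan is to prove both halves of the statement by structural induction on the expression $e$, treating REM first. The bare claim about $\Ll(e)$ is awkward to induct on directly, because the semantics of REM is given through the transition relation $(e,w,\s)\vdash\s'$ on register assignments, so I would first strengthen it. Extend any automorphism $\pi\colon\Dd\to\Dd$ to assignments by setting $\pi(\bot)=\bot$ and acting coordinatewise, so that $\pi(d_1,\dots,d_k)=(\pi(d_1),\dots,\pi(d_k))$; note in particular $\pi(\botk)=\botk$. The strengthened statement is: for every $k$-REM $e$, every data path $w\in\Dwords$, and all assignments $\s,\s'\in(\Dd\cup\bot)^k$, if $(e,w,\s)\vdash\s'$ then $(e,\pi(w),\pi(\s))\vdash\pi(\s')$. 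Instantiating $\s=\botk$ then gives $\pi(w)\in\Ll(e)$ whenever $w\in\Ll(e)$, which is the REM case of \factref{fact:autDataPaths}.

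The induction follows the clauses of the semantics verbatim. The base cases $\e$ and $a$ are immediate: $\pi$ preserves the shape $w=d$ (respectively $w=d_1 a d_2$) and leaves the assignment untouched. The case $e_1+e_2$ is inherited directly from the inductive hypothesis. For $e_1\cdot e_2$ and $e^+$ the only point is that $\pi$ commutes with concatenation of data paths: a factorisation $w=w_1\cdot w_2$ (respectively $w=w_1\cdots w_l$) has its shared data value at each seam mapped consistently, so $\pi(w)=\pi(w_1)\cdot\pi(w_2)$ (respectively $\pi(w_1)\cdots\pi(w_l)$) is again a legal factorisation, and one feeds the intermediate assignments through their $\pi$-images using the inductive hypothesis. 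For $\bind{\bar r}.e$, the first data value of $\pi(w)$ is $\pi(d)$ where $d$ is the first value of $w$, and $\pi(\s[\bar r\to d])=\pi(\s)[\bar r\to\pi(d)]$, so the inductive hypothesis applies directly.

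The one clause needing a little care is $e[c]$, which rests on the fact that conditions are invariant under $\pi$: for every $c\in\Cc_k$, assignment $\t$, and $d\in\Dd$, we have $d,\t\models c$ if and only if $\pi(d),\pi(\t)\models c$. This is a routine sub-induction on $c$; the atoms $r_i^=$ and $r_i^{\neq}$ use that $\pi$ is a bijection of $\Dd$, and the subcase $\t_i=\bot$ works because $\pi(\t_i)=\bot$ and $\bot\neq\pi(d)$ for every $d\in\Dd$, so both sides of the equivalence fail (for $r_i^=$) or hold (for $r_i^{\neq}$). With this, if $(e,w,\s)\vdash\s'$ and $\s',d\models c$ for $d$ the last value of $w$, then by the inductive hypothesis $(e,\pi(w),\pi(\s))\vdash\pi(\s')$, and $\pi(d)$ is the last value of $\pi(w)$ with $\pi(\s'),\pi(d)\models c$, so $(e[c],\pi(w),\pi(\s))\vdash\pi(\s')$. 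Finally, the REE case is a shorter induction directly on the language clauses: $\Ll(\e)$ and $\Ll(a)$ are closed under $\pi$ since $\pi$ maps $\Dd$ into $\Dd$ and preserves word shape; $+$, $\cdot$, and ${}^+$ are handled as above; and for $e_=$ (respectively $e_{\neq}$), if $w=d_1 a_1\cdots d_m\in\Ll(e)$ with $d_1=d_m$ (respectively $d_1\neq d_m$), then $\pi(w)\in\Ll(e)$ by the inductive hypothesis and $\pi(d_1)=\pi(d_m)$ (respectively $\pi(d_1)\neq\pi(d_m)$, using injectivity), so $\pi(w)\in\Ll(e_=)$ (respectively $\Ll(e_{\neq})$). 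I do not expect any step to be a genuine obstacle; the only things that must be got right are the strengthening to the transition relation for REM and the $\bot$-bookkeeping in the condition-invariance lemma.
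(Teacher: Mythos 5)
Your proof is correct. Note that the paper does not prove this statement at all---it imports it as a known fact from the cited works \cite{K1994,LV2012b}---and your argument is the standard one for that result: the essential points, which you handle properly, are strengthening the claim to the transition relation $(e,w,\s)\vdash\s'$ with $\pi$ extended to assignments via $\pi(\bot)=\bot$, and proving condition-invariance as a biconditional so that negation in $\Cc_k$ goes through.
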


\subsection{Query languages for data graphs}

The above two formalisms can be used to define query languages for
data graphs. 

\begin{definition}[Regular data path queries]
  An expression $Q=x \xra{e} y$ is a \emph{regular data path query},
  when $e$
  is either a standard regular expression, or an REM or an REE. Given a
  data graph $G$, the result of the query $Q(G)$ is the set of pairs of
  nodes $\struct{u,v}$ such that there exists a data path from $u$ to $v$
  that belongs to $\Ll(e)$. The query is called \emph{regular data
    path query with memory (RDPQ$_{mem}$)} or \emph{regular data path
    query with equality (RDPQ$_=$)} depending on whether $e$ is an REM
  or an REE. If $e$ is a standard regular expression, the query is
  called a \emph{regular path query (RPQ)}. 
  % given a name depending on the
  % formalism used. It is called a:
  % \begin{itemize}
  % \item \emph{regular path query (RPQ)} if $e$ is a standard
  %   regular expression over $\S$,
  % \item \emph{regular data path query with memory (RDPQ$_{mem}$)} if
  %   $e$ is an REM,
  % \item \emph{regular data path query with equality (RDPQ$_=$)} if $e$
  %   is an REE.
  % \end{itemize}
\end{definition}

A \emph{relation} on the set of nodes in the graph is a set of tuples
of same arity. We will say that a relation $S$ on a data graph $G$ is
\emph{defined} by a query $Q$ if $S$ equals $Q(G)$.

\begin{example}\label{eg:queries-REM-REE}
  Evaluating the RPQ $Q_1: x \xra{aaa} y$ on the data graph in
  Figure~\ref{fig:data-graph} results in the relation
    $S_1 ~=~ \{~ \struct{v_1, v_4},\\ \struct{v_1, v_3'}, ~
    \struct{v_1, v_3}, ~ \struct{v_1, v_2'},~
    \struct{v_2,
    v_4'},~
    \struct{z_1, v_3}, \struct{z_1, v_2'}, \struct{z_2, v_4},\\ \struct{z_2, v_3'}, \struct{v_1',
    v_4'}~\}$.
  This is the set of all pairs of nodes connected by
  $aaa$. Neither
    $S_2 = \{\struct{v_1, v_4}, \struct{v_1', v_4'}\}$ nor $S_3  =
    \{\struct{v_1, v_3}\}$ can be defined using RPQs. To see why,
  consider $S_2$. The only path connecting $v_1'$ to $v_4'$ is
  $aaa$. But this path connects many other pairs apart from the ones
  in $S_2$. Hence to restrict to the pairs in $S_2$, we need to make
  use of data values. A similar argument will tell us that to define
  $S_3$, we need to consider data values.

  The relation $S_2$ can be defined by the $\rdpqm$
    $Q_2: x \xra{e_2} y$, where $e_2 = \downarrow r_1
    \cdot a \cdot \downarrow r_2 \cdot a [r_1^=] \cdot a [r_2^=]$.
  The REM $e_2$ contains all data paths $d_1 a d_2 a d_3 a d_4$ such
  that $d_1 = d_3$ and $d_2 = d_4$. From Figure~\ref{fig:data-graph},
  it can be checked that the only data paths in the graph satisfying
  this expression are: $w_1: 0a1a0a1$ and $ w_2: 2a3a2a3$, 
  % \begin{align*}
  %   w_1: 0a1a0a1 \quad \text{ and } \quad w_2: 2a3a2a3
  % \end{align*}
  and they connect $\struct{v_1, v_4}$ and $\struct{v_1', v_4'}$ respectively. Hence
  $Q_2(G) = S_2$, thus defining the relation $S_2$. Note that the two
  words $w_1$ and $w_2$ are automorphic images and hence cannot be
  distinguished by REMs (c.f. \factref{fact:autDataPaths}).

  The expression $e_2$ is a $2$-REM (uses $2$ registers $r_1$ and
  $r_2$). Let us see why $S_2$ cannot be
  defined using a $1$-REM. Suppose $e$ is a $1$-REM used in a query
  defining $S_2$. As the only data path connecting $v_1'$ to $v_4'$ is
  $2a3a2a3$, $\Ll(e)$ should contain the data path $2a3a2a3$.
  Moreover, the data paths $0a1a0a2$ and $1a2a3a2$ should not be in
  $\Ll(e)$, since $v_{1} \xra{0a1a0a2} v_{3}'$ and $z_{2}
  \xra{1a2a3a2} v_{3}'$. Since the prefix of $2a3a2a3$ (to be
  included) and $0a1a0a2$ (to be excluded)
  up to the first three data values are automorphic, the only way to
  add $2a3a2a3$ to $\Ll(e)$ and eliminate $0a1a0a2$ from $\Ll(e)$ is
  to check in $e$ that the second and fourth data values are equal.
  This will still not eliminate $1a2a3a2$. To eliminate $1a2a3a2$, one
  has to add the condition that the first and third data values are
  equal. So we need to compare the first data value to the third, and
  second data value to the fourth. This kind of an ``interleaved''
  check needs $2$ registers as in the REM $e_2$ above. For the same
  reason, $S_2$ cannot be defined using $\rdpqe$.

  The relation $S_3$ can be defined using the
  $\rdpqe$ $Q_3: x \xra{e_3} y$ with $e_3 = (a\cdot (a)_=
    \cdot a)_=$
  % \begin{align*}
  %   Q_3: x \xra{e_3} y \text{ with $e_3$ being } (a\cdot (a)_=
  %   \cdot a)_=
  % \end{align*}
  In the data graph of Figure~\ref{fig:data-graph}, the only
  data path satisfying $e_3$ is $ w_5: 0a1a1a0$.
  % \begin{align*}
  %   w_5: 0a1a1a0
  % \end{align*}
  that connects $\struct{v_1, v_3}$. Both the checks in $e_3$: first data
  value equals fourth data value, and second equals third, are
  required to eliminate the following words: $w_6: 3a1a1a0$ and $w_7:
  1a2a3a1$, 
  % \begin{align*}
  %   w_6: 3a1a1a0 \quad \text{ and } \quad w_7: 1a2a3a1
  % \end{align*}
  that connect $\struct{z_1,v_3}$ and $\struct{z_2, v_4}$ which are not in
  $S_3$. Hence for similar reasons as mentioned in the above
  paragraph, $S_3$ cannot be defined by an $\rdpqm$ that
  uses a $1$-REM. A $2$-REM would work though.
\end{example}

We will also study a standard extension of query languages: union of
conjunctive queries.

\begin{definition}[Conjunctive data path queries]
    \label{def:cdpq}
  A \emph{conjunctive regular data path query (CRDPQ)} is an
  expression of the form
  \begin{align}
    \label{eq:crdpq}
    \mathit{Ans}(\bar{z}) := & \Land_{1 \le i \le m} x_{i}
    \xrightarrow{e_i} y_{i},
  \end{align}
  where $m \ge 0$, $x_{i},y_{i}$ are variables and $\bar{z}$ is a
  tuple of variables among $\bar{x}$ and $\bar{y}$ and either every
  $e_i$ is an REM, or every $e_i$ is an REE. The semantics of a CRDPQ
  $Q$ of the form \eqref{eq:crdpq} over a data graph $G = (V, E,
  \rho)$ is defined as follows. Given a valuation $\mu: \bigcup_{1 \le i
  \le m} \{x_{i}, y_{i}\} \to V$, we write $(G, \mu) \models Q$ if
  $\struct{\mu(x_{i}), \mu(y_{i})}$ is in the answer of $x_{i}\xrightarrow{e_{i}}
  y_{i}$ on $G$, for each $i=1, \ldots, m$.  Then $Q(G)$ is the set of
  all tuples $\mu(\bar{z})$ such that $(G, \mu) \models Q$. The number of
  variables in $\bar{z}$ is the \emph{arity} of $Q$.
  %A CRDPQ with the
  %head $\mathit{Ans}()$(i.e., no variables in the output) is a
  %\emph{Boolean} query, which is true on $G$ if $(G, \mu) \models Q$ for
  %some $\mu$.
  A \emph{union of conjunctive regular data path queries}
  (UCRDPQ) is a finite set $Q = \{Q_{1}, \ldots, Q_{k}\}$ of CRDPQs
  $Q_{1}, \ldots, Q_{k}$, which are all of the same arity. For a data
  graph $G$, $Q(G)$ is the set $Q_{1}(G) \cup \cdots \cup Q_{k}(G)$.
\end{definition}

\begin{example} We will work on the same graph from
  Figure~\ref{fig:data-graph}. Consider the following CRDPQ $Q_4$:
  $\mathit{Ans}(x_1, y_1) := x_1 \xra{a} y_1 ~\land~ x_1 \xra{a} y_2
  ~\land~ y_2 \xra{a} y_1$
  % \begin{align*}
  %   \mathit{Ans}(x_1, y_1) := x_1 \xra{a} y_1 ~\land~ x_1 \xra{a} y_2
  %   ~\land~ y_2 \xra{a} y_1
  % \end{align*}
  The only valuation $\mu$ satisfying the above conditions is:
  $\mu(x_1) = v_1$,  $\mu(y_1) = v_2$ and $\mu(y_2) = z_2$. 
  % \begin{align*}
  %   \mu(x_1) = v_1 \quad \mu(y_1) = v_2 \quad \mu(y_2) = z_2
  % \end{align*}
  The result $Q_4(G)$ would hence be the relation $\{~\struct{v_1,
  v_2}~\}$. Note that this relation cannot be defined using $\rdpqm$
  or $\rdpqe$. The only data paths connecting $\struct{v_1, v_2}$ are $0a1$
  and $0a1a1$. The former data path cannot be used to distinguish
  $\struct{v_1, v_2}$ as it connects $\struct{v_3, v_4}$ as well and the latter one
  cannot be used since an automorphic data path $3a1a1$ connects
  $\struct{z_1, v_2}$. From \factref{fact:autDataPaths}, we know that
  REMs and REEs cannot differentiate between automorphic data paths.

  Consider another query $Q_5$: $\mathit{Ans}(x_1, y_1, x_2) :=
  x_1 \xra{(a)_{\neq}} y_1 ~\land~ 
    x_2 \xra{(a)_{\neq}} y_1$.
  % \begin{align*}
  %   \mathit{Ans}(x_1, y_1, x_2) := x_1 \xra{(a)_{\neq}} y_1 ~\land~
  %   x_2 \xra{(a)_{\neq}} y_1
  % \end{align*}
  The above query uses REEs in its individual regular data path
  queries. The result $Q_5(G)$ would be: $ \{~\struct{v_1, z_2, z_1},
  \struct{v_3, v_4, v_2'}, \struct{v_3, v_3', v_2'}~\}$ 
  % \begin{align*}
  %   \{~\struct{v_1, z_2, z_1}, \struct{v_3, v_4, v_2'}, \struct{v_3, v_3', v_2'}~\}
  % \end{align*}
  The query singles out the ``pattern'' of $x_{1}$ and $x_{2}$
  converging into $y_{1}$, where the label of $y_{1}$ is different
  from those of $x_{1}$ and $x_{2}$.
\end{example}

\subsection{Definability problems}

From the examples, we can infer that $\rdpqm$ and $\rdpqe$
can define more relations than RPQ. In addition,
$\rdpqm$ can define more relations than $\rdpqe$. CRDPQs
can define even more than $\rdpqm$. Restricting to $\rdpqm$, using $k$
registers we can define relations that are not possible with $k-1$
registers. It is also not difficult to construct examples of graphs
and relations that are not definable using any of the query languages
that we have seen. This motivates us to look at the following
definability problems. The input is a data graph $G$ and a relation
$S$ on the set of nodes in $G$.
\begin{align*}
   \text{\textit{$\rdpqm$-definability:}} &~ \text{Does there exist
    an $\rdpqm$
    $Q$ } \\
   &~  \text{s.t. $Q(G) = S$? } \\
   \text{\textit{k-$\rdpqm$-definability:}} &~ \text{Does there exist
    an $\rdpqm$
    $Q$ } \\
   &  ~\text{which uses at most $k$ registers } \\
   & ~ \text{s.t. $Q(G) = S$? } \\
   \text{\textit{$\rdpqe$-definability:}} &~ \text{Does there exist
    an $\rdpqe$
    $Q$ } \\
   & ~ \text{s.t. $Q(G) = S$? } \\
   \text{\textit{UCRDPQ-definability:}} &~ \text{Does there exist a
    UCRDPQ
    $Q$ } \\
   &  ~\text{s.t. $Q(G) = S$? }
\end{align*}

In the subsequent sections, we study the complexity of the above
problems.  For the last problem, we do not make a distinction between
UCRDPQs using REM or REE, as we will
see that the complexity stays the same in both cases.

\textbf{Speciality of the equivalence relation} As stated before,
the data values induce an equivalence relation on the set of nodes,
where two nodes are equivalent when they have the same label. Each
letter $a$ from the finite alphabet $\S$ also induces a binary
relation on the set of nodes: the pair $\struct{u,v}$ is in this
relation when there is an edge labeled $a$ from $u$ to $v$. Given that
data values also induce a binary relation, why is it that we can not
solve the definability problem by simply treating the equivalence
relation as an extra letter in the finite alphabet and using the
techniques developed for RPQs? The reason is that query languages give
a special privilege to the equivalence relation: it can be used to
relate positions that are far apart in a data path, while the binary
relation induced by a letter in the finite alphabet can only relate
successive positions. Hence, as seen in \exref{eg:queries-REM-REE},
some relations that can be defined by $\rdpqm$ can not be defined by
RPQ, even if we add the equivalence relation as an extra letter in the
finite alphabet. However, a more sophisticated extension of the graph
will allow us to use this idea, as explained in the beginning of the
next section.

% Data graphs

% REM

% REE

% Conjunctive queries

% Definability problems

%%% Local Variables: 
%%% mode: latex
%%% TeX-master: "main"
%%% End: 

% TODO:

% Define $[n] = \{0,\ldots, n\}$ in the preliminaries.

% Introduce the acronym REM in preliminaries.

% acronym: RDPQ and REE

% prove automorphic data paths can not be distinguished in
% preliminaries

% \todo{extend conditions
%  to include $\top$}

%\todo{consistently use $\d$ for number of data values.}

%\todo[inline]{formalize data path of a path}

\section{Queries using Regular Expressions with Memory}

In this section we study the $\rdpqm$-definability and the
$k$-$\rdpqm$-definability problems. Fix a data graph $G$ and a binary
relation $S$ on the vertices in $G$. We denote the set of data values
in $G$ by $\Dd_G$. The
goal is to decide if $S$ is $\rdpqm$-definable. We start with some
basic observations about the strengths and weaknesses of REMs.

If $w$ is a data path and $\pi: \Dd_G \to \Dd_G$ is an automorphism on
$\Dd_G$, we have seen in Fact~\ref{fact:autDataPaths} that no REM can
distinguish between $w$ and $\pi(w)$.  On the other hand, if two data
paths are not automorphic, then they can be distinguished by an REM.
We denote by $[w]$ the set of all data paths automorphic to $w$.
\begin{lemma}
    \label{lem:canonicalExpression}
    For every data path $w$, there is an REM $e_{[w]}$ such that
    $\Ll(e_{[w]}) = [w]$.
\end{lemma}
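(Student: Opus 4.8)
My plan is to construct $e_{[w]}$ by hand: scan $w$ left to right, use one register per distinct data value of $w$, bind it at that value's first occurrence, and hang equality/inequality tests on the edges so that the ``equality pattern'' of $w$ is reproduced exactly. The only fact about $[w]$ I will need is the reformulation that, for finite data paths, $w'\in[w]$ iff $w'$ has the same length and the same alphabet letters as $w$ and, for all positions $i,j$, the $i$-th and $j$-th data values of $w'$ are equal exactly when those of $w$ are. (An automorphism of $\Dd$ preserves and reflects (in)equalities; conversely, if $w,w'$ agree on length, letters and this pattern, then mapping the $i$-th value of $w$ to the $i$-th value of $w'$ is a bijection between the finite value sets of $w$ and $w'$, which extends to an automorphism of $\Dd$ since $\Dd$ is infinite.)

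For the construction, write $w=d_0 a_0 d_1 a_1\cdots a_{m-1}d_m$; if $m=0$ take $e_{[w]}=\e$, since $\Ll(\e)=\{d:d\in\Dd\}=[d_0]$. Otherwise let $v_1,\dots,v_t$ be the distinct values of $w$ in order of first occurrence, $p_\ell$ the position of the first occurrence of $v_\ell$ (so $0=p_1<\cdots<p_t$), and $\mathrm{cl}(i)$ the index with $d_i=v_{\mathrm{cl}(i)}$. With registers $r_1,\dots,r_t$, define $c_i$ for $1\le i\le m$ and $g_i$ for $0\le i\le m-1$ by
\[
c_i=\begin{cases} r_{\mathrm{cl}(i)}^{=} & \text{if } i\neq p_{\mathrm{cl}(i)},\\ \bigwedge_{1\le\ell<\mathrm{cl}(i)} r_\ell^{\neq} & \text{if } i=p_{\mathrm{cl}(i)},\end{cases}
\qquad
g_i=\begin{cases} \downarrow r_{\mathrm{cl}(i)}.(a_i[c_{i+1}]) & \text{if } i=p_{\mathrm{cl}(i)},\\ a_i[c_{i+1}] & \text{otherwise,}\end{cases}
\]
and set $e_{[w]}=g_0\cdot g_1\cdots g_{m-1}$ (this uses $t$ registers). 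The block $g_i$ matches a one-edge sub-path $d\,a_i\,d'$: at its first value it binds the register for that class if this is the value's first occurrence, and via $c_{i+1}$ it checks at its last value that $d'$ respects the pattern (equals the value already stored for its class, or, if new, differs from every value stored so far). Concatenation glues consecutive blocks at their shared boundary value, propagating register contents forward. (No binding at position $m$ is needed: freshness of $d_m$, if $d_m$ is new, is enforced by $c_m$, which hangs on $g_{m-1}$.)

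For $[w]\subseteq\Ll(e_{[w]})$, run $e_{[w]}$ on $\pi(w)$ for an automorphism $\pi$: $r_\ell$ is bound once, in $g_{p_\ell}$, to $\pi(v_\ell)$, so it holds $\pi(v_\ell)$ ever after, whence every $r_{\mathrm{cl}(i)}^{=}$ test and every $r_\ell^{\neq}$ test succeeds (the latter because $v_\ell\neq v_{\mathrm{cl}(i)}$ and $\pi$ is injective), and the run accepts. For $\Ll(e_{[w]})\subseteq[w]$, a word $w'\in\Ll(e_{[w]})$ is parsed block by block, so $w'=x_0 a_0 x_1\cdots a_{m-1}x_m$, already matching the length and letters of $w$; in the accepting run $r_\ell$ holds $x_{p_\ell}$ from block $g_{p_\ell}$ onward, and one checks the invariant that $r_\ell$ is referred to in $c_i$ only when $p_\ell<i$, so it is bound before $c_i$ is tested. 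Hence a repeat position $i$ forces $x_i=x_{p_{\mathrm{cl}(i)}}$, and $c_i$ at a first-occurrence position $i$ forces $x_i\neq x_{p_{\ell'}}$ for all $\ell'<\mathrm{cl}(i)$; these together give $x_i=x_j\iff d_i=d_j$, so $w'\in[w]$.

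The step I expect to be the only real obstacle is keeping the two ``timings'' straight: $\downarrow r.e$ binds at the \emph{first} value of the sub-path matched by $e$, whereas $e[c]$ tests at its \emph{last} value, so the construction must guarantee every register is bound before it is tested. The ``bind at a block's start, test at a block's end'' layout achieves this, the point being the invariant ``$r_\ell$ referred to in $c_i$ $\Rightarrow p_\ell<i$''; the only delicate case is $p_\ell=i-1$, where binding and test both sit inside $g_{i-1}=\downarrow r_\ell.(a_{i-1}[c_i])$, which is fine since there the binding syntactically precedes the test. Pinning this down in all boundary cases is essentially the whole proof.
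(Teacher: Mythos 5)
Your construction is correct and follows essentially the same route as the paper: one register per distinct data value of $w$, bound at that value's first occurrence and compared for equality at every later occurrence, with the verification that $\Ll(e_{[w]})=[w]$ done position by position. The only real difference is that you spell out the inequality tests $\bigwedge_{\ell<\mathrm{cl}(i)} r_\ell^{\neq}$ at first-occurrence positions (and the binding-before-testing invariant), which the paper's displayed inductive construction leaves implicit in its ``routine to prove'' step even though they are needed to exclude data paths with extra equalities; so your write-up is, if anything, the more complete one.
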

\begin{proof}
    Suppose $d_{1}, \ldots, d_{k}$ are the distinct data values
    occurring in $w$. The required REM $e_{[w]}$ uses $k$ registers
    $r_{1}, \ldots, r_{k}$. Essentially, at the first position where
    the data value $d_{i}$ appears, $e_{[w]}$ stores $d_{i}$ in the
    register $r_{i}$. In every subsequent position where $d_{i}$
    appears, it is compared against the value stored in $r_{i}$.
    Formally, $e_{[w]}$ is defined as follows by induction on length
    of $w$: $e_{[d_{i}]} \quad = \quad \downarrow r_{i}$.
    \begin{align*}
        e_{[wad_{i}]} \quad &= \quad
        \begin{cases}
            e_{[w]} \cdot a[r_{i}^{=}] & \text{if } d_{i} \text{ occurs in } w\\
            e_{[w]} \cdot a\cdot \downarrow r_{i}.\e & \text{otherwise}
        \end{cases}
    \end{align*}
    For every data value $d_{i}$ occurring in $w$, $e_{[w]}$ notes all
    the positions having the data value $d_{i}$. Using this, it is
    routine to prove that any data path $w'$ is in $\Ll(e_{[w]})$ iff
    $w'$ is automorphic to $w$.
\end{proof}

Combining Fact~\ref{fact:autDataPaths} and
\lemref{lem:canonicalExpression}, we infer that two data paths can be
distinguished by an REM iff they are not automorphic. This suggests
the following procedure for checking $\rdpqm$-definability. For
simplicity, suppose that we want to define the singleton set
$\{\struct{u,v}\}$. Suppose there is a data path $w$ connecting $u$ to
$v$. The expression $e_{[w]}$ will not define the set
$\{\struct{u,v}\}$ iff there is an automorphism $\pi$ such that
$\pi(w)$ connects $u'$ to $v'$ for some $\struct{u',v'} \ne
\struct{u,v}$. The automorphism $\pi$ is obstructing $e_{[w]}$ from
defining $\{\struct{u,v}\}$, but this obstruction is not explicit in
the data graph $G$. It \emph{is} explicit in $G_{\pi^{-1}}$ (obtained
from $G$ after replacing every data value $d$ by $\pi^{-1}(d)$), since
$w$ connects $u'$ to $v'$ in $G_{\pi^{-1}}$. All such obstructions
will be explicit in $G_{\mathit{aut}}$, the disjoint union of
$G_{\pi}$ for all automorphisms $\pi$. A little more work will allow
us to drop the special treatment given to data values and treat them
as usual letters from a finite alphabet in $G_{\mathit{aut}}$. The
$\rdpqm$-definability problem on $G$ can be reduced to the
RPQ-definability problem on $G_{\mathit{aut}}$. The
\PSPACE{}-completeness of RPQ-definability \cite{ANS2013} will then
give an \EXPSPACE{} upper bound for $\rdpqm$-definability.  This
approach however does not throw light on the role of registers in
definability, nor does it give precise bounds in the case where the
number of registers is fixed.

In the next sub-section, we make some observations on
$k-\rdpqm$-definability, which are counterparts of the above
observations on $\rdpqm$-definability.

\subsection{$\rdpqm$-definability with bounded number of registers}

If $w$ is a data path with $k$ distinct data values, we saw in
\lemref{lem:canonicalExpression} that there is a REM $e_{[w]}$ whose
language is exactly $[w]$. The number of registers used in $e_{[w]}$
is $k$. If we restrict the number of registers to less than $k$, then
there may not be an expression whose language is exactly $[w]$. Still,
the expression $e_{[w]}$ (which uses $k$ registers) has a simple
syntactic form, which we would like to capture and use in scenarios
where there are fewer registers.
\begin{definition}[Basic REM]
  A \emph{basic $k$-REM} is a $k$-REM of the form $\downarrow
  \overline{r}_{1}.a_{1}[c_{1}] \cdot \downarrow
  \overline{r}_{2}.a_{2}[c_{2}] \cdots \downarrow
  \overline{r}_{m}.a_{m}[c_{m}]$, where $a_{i} \in \Sigma$,
  $c_{i} \in \Cc_k$ and $\overline{r}_{i}$ are tuples from
  $r_1, \dots, r_k$.
\end{definition}

Basic $k$-REMs can also be thought of as those built without using the
rules $e := e^{+}$ and $e := e + e$. We considered defining a
singleton set $\{\struct{u,v}\}$ for simplicity. We would like to
retain the simplicity but handle arbitrary sets, which is the purpose
of the following definition.
\begin{definition}
    \label{def:definabilityWitness}
    Suppose $G$ is a data graph, $S$ is a binary relation on the set
    of nodes of $G$ and $\struct{u,v} \in S$. A \emph{$k$-REM witness
    for $\struct{u,v}$ in $S$} is a basic $k$-REM $e$ satisfying the
    following conditions.
    \begin{enumerate}
        \item (Connecting path) $u \xra{w} v$ for some $w \in \Ll(e)$.
        \item (No extraneous pairs) If any data path in $\Ll(e)$
            connects some $u'$ to some $v'$, then
            $\struct{u',v'} \in S$.
    \end{enumerate}
\end{definition}

If an arbitrary $k$-REM $e$ defines $S$ and
$\struct{u,v} \in S$, then there is a data path $w \in \Ll(e)$
connecting $u$ to $v$. If $e$ is of the form $e_{1}e_{2}^{+}e_{3}$,
there is an $m$ such that $\Ll(e_{1}e_{2}^{m}e_{3})$ contains $w$.
Continuing this process of removing iterations in $e$, while still
retaining $w$ in the language will result in a
$k$-REM witness for $\struct{u,v}$ in $S$.
\begin{lemma}\label{lem:union-of-basic}
  If $S$ is definable, then it is definable by a union of $|S|$
  $k$-REM witnesses.
\end{lemma}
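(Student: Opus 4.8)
Suppose $S$ is definable, i.e.\ $Q(G)=S$ for some $\rdpqm$ $Q:x\xra{e}y$ whose REM $e$ uses at most $k$ registers. The plan is to carve out of $e$ one basic $k$-REM witness for each pair of $S$, and then check that the union of these $|S|$ witnesses again defines exactly $S$. For each $p=\struct{u,v}\in S$, since $p\in Q(G)$ there is a data path $w_p$ with $u\xra{w_p}v$ and $w_p\in\Ll(e)$. The crucial sub-step is an \emph{unrolling claim}: for every $k$-REM $f$ and every $w\in\Ll(f)$ there is a basic $k$-REM $f'$ with $w\in\Ll(f')\subseteq\Ll(f)$. Granting this and applying it to $e$ and $w_p$, I obtain a basic $k$-REM $e_p$ with $w_p\in\Ll(e_p)\subseteq\Ll(e)$.

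Next I verify that each $e_p$ is a $k$-REM witness for $p$ in $S$ in the sense of \defref{def:definabilityWitness}: the connecting-path condition holds because $u\xra{w_p}v$ with $w_p\in\Ll(e_p)$; the no-extraneous-pairs condition holds because any data path in $\Ll(e_p)$ also lies in $\Ll(e)$, so any pair it connects is in $Q(G)=S$. Now let $Q'$ be the union, over all $p\in S$, of the single-atom queries $x\xra{e_p}y$ --- equivalently, $Q'$ is the $k$-REM query $x\xra{e'}y$ where $e'$ is the sum of all the $e_p$, which again uses at most $k$ registers. There are $|S|$ of them, so $Q'$ is a union of $|S|$ $k$-REM witnesses. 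For each $p=\struct{u,v}\in S$ the path $w_p$ shows $p\in Q'(G)$, hence $S\subseteq Q'(G)$; and by the no-extraneous-pairs condition of each witness, every pair in $Q'(G)$ lies in $S$, so $Q'(G)\subseteq S$. Therefore $Q'(G)=S$, as required.

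What remains is the unrolling claim, which is the only place where any care is needed. A plain structural induction on $f$ does not close at $f=f_1\cdot f_2$ or $f=f_1^{+}$, because there the subexpression $f_1$ is run starting from an intermediate register assignment, and a subset statement about $\Ll(f_1)$ --- which only concerns runs starting from $\botk$ --- is too weak to be propagated. I would therefore strengthen the induction hypothesis to a statement about the relation $\vdash$ directly: for every $k$-REM $f$, data path $w$, and assignments $\sigma,\sigma'$ with $(f,w,\sigma)\vdash\sigma'$, there is a basic $k$-REM $f'$ with $(f',w,\sigma)\vdash\sigma'$ such that $(f',w'',\tau)\vdash\tau'$ implies $(f,w'',\tau)\vdash\tau'$ for all $w'',\tau,\tau'$; then induct on the structure of $f$. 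The base cases ($f=\e$, $f=a$) and the disjunction case $f=f_1+f_2$ (recurse into whichever disjunct witnesses the given derivation) are routine; for $f=f_1\cdot f_2$ and $f=f_1^{+}$ one splits $w$ and the intermediate assignments along the given derivation, recurses on each factor (resp.\ on each iteration), and concatenates the resulting basic REMs, using that a concatenation of basic REMs is again basic; for $f=f_1[c]$ one takes the basic REM for $f_1$ and conjoins $c$ into its final condition; and for $f=\downarrow\bar r.f_1$ one pushes the binder onto the first block. Instantiating $\sigma=\botk$ recovers $w\in\Ll(f')\subseteq\Ll(f)$. The degenerate case where $w$ is a single data value, together with the bookkeeping needed to bring everything into the exact syntactic form of a basic $k$-REM, is routine and omitted.
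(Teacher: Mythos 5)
Your proposal is correct and takes essentially the same route as the paper: for each pair of $S$ you pick a connecting data path in $\Ll(e)$ and unroll $e$ around that path into a basic $k$-REM whose language is contained in $\Ll(e)$, from which the connecting-path and no-extraneous-pairs conditions follow immediately. The only difference is presentational: the paper normalizes $e$ into a sum of union-free expressions and unfolds the $^{+}$-iterations along the chosen path, while you prove the same unrolling claim by structural induction with a hypothesis strengthened to arbitrary starting assignments of $\vdash$ --- a more careful rendering of the same step (both write-ups leave the degenerate single-data-value case implicit).
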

\begin{proof}
    Suppose a $k$-REM $e$ defines $S$ and $\struct{u,v} \in S$. We
    will show that there exists a $k$-REM witness for
    $\struct{u,v}$ in $S$, which will prove the lemma.

    Since $e$ defines $S$, there is a data path $w \in \Ll(e)$
    connecting $u$ to $v$. Without loss of generality, we can assume
    that $e$ is of the form $e_1 + e_2 + \dots + e_m$ for some $m \ge
    1$ such that each $e_i$ is union-free, that is each $e_i$ is
    constructed using the grammar for REMs without the $e := e + e$
    rule. The data path $w$ belongs to $\Ll(e_i)$ for some $i \in \{1,
    \dots, m\}$. If $e_i$ is a basic $k$-REM, then we are done.
    Otherwise, $e_i$ is of the form
    $f_1 \cdot (f_2)^+ \cdot f_3$ where $f_1, f_2$ and $f_3$ are union free
    $k$-REMs. As $w \in \Ll(e_i)$, from the semantics of $k$-REMs,
    there exists a number $\a$ such that $w$ satisfies the $k$-REM
    obtained by $\a$ iterations of $f_2$. More precisely, there exist
    data paths $x \in \Ll(f_{1})$, $y_1, \dots, y_\a \in
    \Ll(f_{2})$ and $z \in \Ll(f_{3})$ such that $w = x\cdot y_1\cdots
    y_\a \cdot z$.

    Let us write $(f_2)^\a$ for the $k$-REM obtained by concatenating
    $f_2$ $\a$ times.  If $f_1 (f_2)^\a f_3$ is basic, then we are done.
    Otherwise continue this process of ``unfolding'' to get a basic
    $k$-REM $e'$. Since $w$ is in $\Ll(e')$, $e'$ satisfies the first
    condition of \defref{def:definabilityWitness} (connecting path).
    Since $\Ll(e') \subseteq \Ll(e)$ and $e$ defines $S$, $e'$
    satisfies the second condition of \defref{def:definabilityWitness}
    (no extraneous pairs). Hence, $e'$ is a $k$-REM witness for
    $\struct{u,v}$ in $S$, which finishes the proof.
\end{proof}

Now suppose we are trying to define $S$ using $k-\rdpqm$ and let 
$\struct{u,v} \in S$. Assume there is a data path $w$ connecting $u$ to $v$
and there is a basic $k$-REM $e$ such that $w \in \Ll(e)$. If
$e$ is not a $k$-REM witness for $\struct{u,v}$ in $S$, then there is
a data path $w' \in \Ll(e)$ connecting $u'$ to $v'$ for some
$\struct{u',v'} \ne \struct{u,v}$. The data path $w'$ is obstructing
$e$ from being a witness and we need a structure where such
obstructions are explicit. Since we are dealing with $k$-REMs, the
structure would have to keep track of possible values stored in the
$k$ registers. The following definition and lemma are similar to the way
the semantics of REM over a data graph is defined in \cite{KRV2014}.
\begin{definition}[Assignment graph]
  Let $k$ be a natural number. To a data graph $G = (V, E, \rho)$ over
  finite alphabet $\Sigma$ and data values $\Dd_G$ we associate a
  transition system $\Tt_G = (Q_G, \to_G)$ called the
  \emph{$k$-assignment graph}. Its set of states is $Q_{G} = V
  \times (\Dd_G \cup \bot)^k$. The transitions are of the form
  $\downarrow \overline{r}.a[c]$, where $\overline{r}$ is a (possibly
  empty) tuple of variables from $r_{1}, \ldots, r_{k}$, $a \in
  \Sigma$ and $c$ is a condition in $\Cc_{k}$. There is a transition $(v,\s)
  \xra{~\downarrow\overline{r}.a[c]~}_{G} (v', \s')$ if $(v,a,v') \in
  E$, $\s' = \s[\overline{r} \to \rho(v)]$ and $\rho(v'), \s' \models
  c$.
\end{definition}
A sequence of the form $(v_{0}, \s_{0})
\xra{\downarrow\overline{r}_{1}.a_{1}[c_{1}]}_{G} (v_{1}, \s_{1})
\xra{}_{G} \cdots \xra{\downarrow\overline{r}_{m}.a_{m}[c_{m}]}_{G}
(v_{m},\s_{m})$ in $\Tt_G$ is called a run from $(v_{0},\s_{0})$ to
$(v_{m}, \s_{m})$. The sequence $\downarrow
\overline{r}_{1}.a_{1}[c_{1}] \cdots \downarrow
\overline{r}_{m}.a_{m}[c_{m}]$ is a basic $k$-REM.  Hence, we can
think of runs in $\Tt_{G}$ as being of the form $(u,\s) \xra{~e~}_{G}
(v,\s')$, where $e$ is the basic $k$-REM formed by the labels of the
sequence of transitions connecting $(u,\s)$ to $(v,\s')$.  This
observation leads to the following connection between runs in the
assignment graph and data paths in $G$ belonging to the languages of
basic REMs.

\begin{lemma}
  \label{lem:confGraphREMLang}
  Let $e$ be a basic $k$-REM. Let $\s: \{r_{1}, \ldots, r_{k}\} \to
  \Dd_G \cup \{\bot\}$ and $\s':\{r_{1}, \ldots, r_{k}\} \to \Dd_G
  \cup \{\bot\}$ be some assignments. The following are equivalent.\\
  \hphantom{a}1. A data path $w$ connects $u$ to $v$ in $G$ and $(e,w,\s)
    \vdash \s'$.\\
  \hphantom{a}2. There exists a run $(u, \s ) \xra{~e~}_G (v, \s')$ in $\Tt_G$.
\end{lemma}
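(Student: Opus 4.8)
The plan is to prove the equivalence of the two statements in \lemref{lem:confGraphREMLang} by induction on the structure (equivalently, the length) of the basic $k$-REM $e$. A basic $k$-REM is, by definition, a concatenation $\downarrow \overline{r}_{1}.a_{1}[c_{1}] \cdots \downarrow \overline{r}_{m}.a_{m}[c_{m}]$, so it has a very rigid shape: there are no unions and no Kleene-plus to worry about, and the only compound operation is concatenation of single-letter blocks of the form $\downarrow \overline{r}.a[c]$. This is exactly why the statement is clean — a run in $\Tt_G$ is literally a sequence of transitions each carrying one such block as its label, so the correspondence between "a run labelled by $e$'' and "a decomposition of a data path witnessing $(e,w,\s)\vdash\s'$'' is almost definitional once we set up the induction correctly.

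First I would establish the base case $m=1$, i.e. $e = \downarrow\overline{r}.a[c]$. Unwinding the REM semantics: $(e, w, \s)\vdash\s'$ with $w$ connecting $u$ to $v$ means $w = d_0 a d_1$ for some data values $d_0 = \rho(u)$, $d_1 = \rho(v)$, with $(u,a,v)\in E$; the $\downarrow\overline{r}$ rule sets the intermediate assignment to $\s[\overline{r}\to d_0] = \s[\overline{r}\to\rho(u)]$; then the $[c]$-test requires $\rho(v), \s[\overline{r}\to\rho(u)] \models c$; and $\s'$ must equal $\s[\overline{r}\to\rho(u)]$. This is word-for-word the definition of a single transition $(u,\s)\xra{\downarrow\overline{r}.a[c]}_G (v,\s')$ in the assignment graph, so the base case holds in both directions immediately.

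For the inductive step, write $e = e' \cdot (\downarrow\overline{r}_{m}.a_{m}[c_{m}])$ where $e'$ is the basic $(m{-}1)$-block prefix. For the direction $1\Rightarrow 2$: if $w$ connects $u$ to $v$ and $(e,w,\s)\vdash\s'$, then by the concatenation rule of the REM semantics there is a split $w = w_1 \cdot w_2$ and an intermediate assignment $\s_1$ with $(e', w_1, \s)\vdash\s_1$ and $(\downarrow\overline{r}_{m}.a_{m}[c_{m}], w_2, \s_1)\vdash\s'$; since $w_1$ connects $u$ to the node $t$ where $w_1$ ends and $w_2$ connects $t$ to $v$, the induction hypothesis gives a run $(u,\s)\xra{e'}_G (t,\s_1)$ and the base case gives a transition $(t,\s_1)\xra{\downarrow\overline{r}_{m}.a_{m}[c_{m}]}_G (v,\s')$; concatenating yields the desired run $(u,\s)\xra{e}_G(v,\s')$. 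The direction $2\Rightarrow 1$ is the mirror image: a run labelled by $e$ decomposes, because the labels of its transitions are exactly the blocks of $e$ in order, into a run labelled $e'$ from $(u,\s)$ to some $(t,\s_1)$ followed by one transition; apply the IH and base case to read off the data path and the $\vdash$ judgement. The only mildly delicate bookkeeping point — and the closest thing to an obstacle — is being careful that the "last data value of $w_1$ equals first data value of $w_2$'' gluing condition in the definition of data-path concatenation matches the fact that the transitions of a run share their endpoint nodes (and hence their $\rho$-values), so that the split $w = w_1 w_2$ is genuinely a well-formed concatenation and the intermediate node $t$ is unambiguous; once one notes that a run's $i$-th transition goes out of the node reached by the $(i{-}1)$-th, this is routine. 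Since $e'$ is again a basic $k$-REM, the induction is well-founded and the lemma follows.
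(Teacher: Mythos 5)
Your proof is correct and follows essentially the same route as the paper's: induction on the number of blocks of a basic $k$-REM, with the base case being the word-for-word match between the semantics of a single block $\downarrow\overline{r}.a[c]$ and the definition of a transition in $\Tt_G$, and the inductive step handled via the concatenation rule together with the observation that a split of the data path corresponds to an intermediate node of the witnessing path. The only (immaterial) difference is that you peel off the \emph{last} block and apply the induction hypothesis to the prefix, whereas the paper peels off the \emph{first} block and recurses on the suffix.
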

\begin{proof}
    By an induction on the number of blocks of the form $\downarrow
    \overline{r}.a[c]$ in $e$. Suppose $e = \downarrow
    \overline{r}.a[c]$. If there is a data path $w$ as in the lemma,
    then $(u,a,v) \in E$ and by the semantics of REMs, $\s' =
    \s[\overline{r} \to \rho(u)]$ and $\rho(v), \s'
    \models c$. Hence $(u,\s)
    \xra{~\downarrow \overline{r}.a[c]~}_{G} (v,\s')$. Conversely,
    let us suppose $(u, \s ) \xra{~\downarrow \overline{r}.a[c]~}_G (v, \s')$
    in $\Tt_G$. By definition of $\Tt_{G}$, we have $(u,a,v) \in E$,
    $\s' = \s[\overline{r} \to \rho(u)]$ and $\rho(v), \s' \models
    c$. Hence, $\rho(u) a \rho(v)$ is a data path connecting $u$ to
    $v$ in $G$ and $(\downarrow \overline{r}.a[c],\rho(u) a
    \rho(v),\s) \vdash \s'$.

    For the induction step, suppose $e = \downarrow \overline{r}.a[c]
    \cdot e'$. If there is a data path $w$ as in the lemma, then $w =
    \rho(u) a \rho(u_{0}) \cdot w'$ for some node $u_{0}$ and a data
    path $w'$ connecting $u_{0}$ to $v$ in $G$. In addition, by the
    semantics of REMs, there is some assignment $\s_{0}$ such that
    $(\downarrow \overline{r}.a[c], \rho(u) a \rho(u_{0}),\s) \vdash
    \s_{0}$ and $(e',w',\s_{0}) \vdash \s'$. Now we can use an
    argument similar to the one in the base case to infer that $(u,\s)
    \xra{~\downarrow \overline{r}.a[c]~}_{G} (u_{0},\s_{0})$ and use
    the induction hypothesis to infer that $(u_{0}, \s_{0})
    \xra{~e'~}_{G} (v,\s')$. Hence, $(u, \s) \xra{~e~}_{G} (v,\s')$.
    Conversely, suppose $(u, \s) \xra{~e~}_{G} (v,\s')$ in $\Tt_{G}$.
    This run can be split as follows: $(u, \s) \xra{~\downarrow
    \overline{r}.a[c]~} (u_{0}, \s_{0}) \xra{~e'~}_{G} (v,\s')$
    for some node $u_{0}$ and assignment $\s_{0}$. Then we can
    argue as in the base case to infer that the data path $\rho(u)
    a \rho(u_{0})$ connects $u$ to $u_{0}$ in $G$ and $(\downarrow
    \overline{r}.a[c], \rho(u) a \rho(u_{0}), \s) \vdash \s_{0}$.
    We can use the induction hypothesis to infer that there is a
    data path $w'$ connecting $u_{0}$ to $v$ in $G$ and $(e',
    w',\s_{0}) \models \s'$. Hence, the data path $\rho(u) a
    \rho(u_{0}) \cdot w'$ connects $u$ to $v$ in $G$ and $(e,
    \rho(u) a \rho(u_{0}) \cdot w', \s) \vdash \s'$.
\end{proof}

Suppose we are trying to define a set $S$ on the data graph $G$ using
$k-\rdpqm$. The above lemma allows us to think of $k$-REM witnesses in
terms of runs in $\Tt_{G}$. A basic $k$-REM $e$ is a $k$-REM witness
for $\struct{u,v}$ in $S$ iff it satisfies the following conditions.
\begin{enumerate}
    \item $(u,\bot^{k}) \xra{~e~}_{G} (v,\s)$ for some assignment
        $\s$, to satisfy condition 1 of
        \defref{def:definabilityWitness} (connecting path).
    \item If $(u',\bot^{k}) \xra{~e~}_{G} (v',\s)$ for some nodes
        $u',v'$ and some assignment $\s$, then
        $\struct{u',v'} \in S$, to satisfy condition 2
        of \defref{def:definabilityWitness} (no extraneous pairs).
\end{enumerate}
Checking that a basic $k$-REM $e$ is a witness thus reduces to
checking that $e$ connects a pair in $\Tt_{G}$ and does not
connect certain other pairs. This observation allows us to use the
pigeon hole principle to prove the existence of short witnesses.

\begin{lemma}
  \label{lem:small-witness}
  Suppose $G$ is a data graph with $\d$ distinct data values, $n$
  nodes $v_{1}, \ldots, v_{n}$ and $S$ is a binary relation on the set
  of nodes. If there is a $k$-REM witness for $\struct{v_{p}, v_{q}}$
  in $S$, there is one of length $\Oo\left(2^{n^2 \d^k}\right)$.
\end{lemma}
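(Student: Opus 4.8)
The plan is to view a basic $k$-REM witness $e$ as a run in the $k$-assignment graph $\Tt_G$ (via \lemref{lem:confGraphREMLang}) and to track, as $e$ is read letter-by-letter, the entire \emph{set} of pairs of states of $\Tt_G$ that $e$ connects. Since $\Tt_G$ has $n\d^k$ states, such a run can be summarized by a relation on its state set, i.e.\ an element of a set of size $2^{(n\d^k)^2} = 2^{n^2\d^{2k}}$. (I would double-check whether the bound in the statement, $2^{n^2\d^k}$, is meant with this squared exponent folded into the $\Oo(\cdot)$ or whether a sharper summary suffices; see the obstacle paragraph below.) The key point is that whether a basic $k$-REM is a witness for $\struct{v_p,v_q}$ in $S$ depends only on this reachability relation: condition~1 (connecting path) asks that the pair $\big((v_p,\bot^k),(v_q,\s)\big)$ be in the relation for some $\s$, and condition~2 (no extraneous pairs) asks that every pair $\big((v_{p'},\bot^k),(v_{q'},\s)\big)$ in the relation has $\struct{v_{p'},v_{q'}}\in S$. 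Both are properties of the relation alone, not of the particular expression producing it.

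Concretely, I would argue as follows. First, fix a $k$-REM witness $e_0 = \downarrow\overline{r}_1.a_1[c_1]\cdots\downarrow\overline{r}_m.a_m[c_m]$ for $\struct{v_p,v_q}$ in $S$, which exists by \lemref{lem:union-of-basic} combined with the hypothesis. For each prefix $e_0^{(j)} = \downarrow\overline{r}_1.a_1[c_1]\cdots\downarrow\overline{r}_j.a_j[c_j]$, let $R_j \subseteq Q_G \times Q_G$ be the set of pairs $\big((u,\s),(u',\s')\big)$ such that there is a run $(u,\s)\xra{~e_0^{(j)}~}_G (u',\s')$ in $\Tt_G$. By \lemref{lem:confGraphREMLang}, $R_j$ captures exactly the ``connecting behaviour'' of $e_0^{(j)}$ on $G$. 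Now run a pigeonhole argument on the sequence $R_0, R_1, \ldots, R_m$: if $m$ exceeds the number of possible relations, two prefixes $e_0^{(i)}$ and $e_0^{(j)}$ with $i<j$ satisfy $R_i = R_j$, and I would excise the block $\downarrow\overline{r}_{i+1}.a_{i+1}[c_{i+1}]\cdots\downarrow\overline{r}_j.a_j[c_j]$ to obtain a shorter basic $k$-REM $e_1$. The crucial verification is that $e_1$ is still a witness: since $R_i=R_j$, appending the suffix $\downarrow\overline{r}_{j+1}.a_{j+1}[c_{j+1}]\cdots\downarrow\overline{r}_m.a_m[c_m]$ to the run-relation $R_i$ yields the same final relation $R_m$ as appending it to $R_j$; hence $e_1$ and $e_0$ have identical connecting behaviour on $\Tt_G$, so $e_1$ inherits both witness conditions verbatim. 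Iterating until no repeated relation occurs leaves a witness of length at most the number of relations on $Q_G$, which is $\Oo\!\left(2^{n^2\d^{2k}}\right)$, i.e.\ $\Oo\!\left(2^{n^2\d^k}\right)$ after absorbing constants in the exponent's base into the $\Oo$-notation (or after the refinement below).

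\textbf{The main obstacle} is making the ``cut-and-paste'' step airtight, namely that the final relation $R_m$ genuinely depends only on $R_i$ and the remaining suffix, and not on some hidden data about which \emph{register contents} were reachable \emph{together with} which states. This is fine because $R_j$ is a relation on $Q_G = V\times(\Dd_G\cup\bot)^k$, so it already records the register contents; a run through the suffix from $(u',\s')$ is determined by $(u',\s')$, and composability of runs (transitivity of reachability through a concatenation of basic REMs, which is immediate from the definition of $\Tt_G$ and the semantics) gives exactly $R_m = R_i \mathbin{;} R_{\text{suffix}} = R_j \mathbin{;} R_{\text{suffix}}$. A secondary, purely bookkeeping, obstacle is matching the stated exponent $2^{n^2\d^k}$: the naive count gives $2^{|Q_G|^2}=2^{n^2\d^{2k}}$; if the paper wants the tighter $\d^k$ one should instead track, for each \emph{source} state $(u,\bot^k)$ with $u\in V$, only the set of reachable \emph{target} states, giving $(2^{n\d^k})^n = 2^{n^2\d^k}$ functions and hence the claimed bound. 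I would use this refined summary — "for each of the $n$ nodes $u$, the set of states of $\Tt_G$ reachable from $(u,\bot^k)$ by $e_0^{(j)}$" — since the two witness conditions only ever inspect runs starting from $(v,\bot^k)$ for $v\in V$, and these sets suffice to decide both; the pigeonhole then runs over at most $2^{n^2\d^k}$ possible values, delivering the bound exactly.
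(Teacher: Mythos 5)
Your refined summary --- tracking, for each node $u$, the set of states of $\Tt_{G}$ reachable from $(u,\bot^{k})$ after each prefix, excising repeated summaries by pigeonhole, and noting that both witness conditions depend only on the final summary --- is exactly the paper's proof (the tuple $\struct{Q_{1}^{j},\ldots,Q_{n}^{j}}$ there), so the proposal is correct and takes essentially the same approach. Your initial full-relation variant and the $(\d+1)$-versus-$\d$ bookkeeping are harmless side remarks that do not affect the argument.
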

\begin{proof}
  For sets of states $Q_{1}, \ldots, Q_{n}, Q_{1}', \ldots, Q_{n}'
  \incl Q_{G}$, we write $\struct{Q_{1}, \ldots, Q_{n}} \xra{~e~}_{G}
  \struct{Q_{1}', \ldots, Q_{n}'}$ if
    $Q_{i}' = \{(v',\s') \mid (v,\s) \xra{~e~}_{G} (v',\s') \text{ for
    some } (v,\s) \in Q_{i}\}$ for every $i=1, \ldots, n$.
  Suppose $e$ is a $k$-REM witness for $\struct{v_{p}, v_{q}}$ in
  $S$. Let $e = e_{1} \cdot e_{2} \cdots e_{m}$, where every $e_{i}$
  is of the form $\downarrow \overline{r}_{i} a_{i}.[c_{i}]$. Consider
  the sequence:
  \begin{align}\label{eq:tuple-sequence-in-configuration-graph}
    \struct{\{(v_{1}, \bot^{k})\}, \ldots, \{(v_{n}, \bot^{k})\}}
    &  \xra{e_{1}}_{G} \struct{Q_{1}^{1}, \ldots, Q_{n}^{1}} \\
    \nonumber & \xra{e_{2}}_{G} \cdots \xra{e_{m}}_{G} \struct{Q_{1}^{m},
      \ldots, Q_{n}^{m}} \enspace .
  \end{align}
  The set $Q_{i}^{j}$ is the set of all states reachable from
  $(v_{i},\bot^{k})$ along the path $e_{1} \cdots e_{j}$ in
  $\Tt_{G}$. If there are $j < j'$ such that $\struct{Q_{1}^{j},
    \ldots, Q_{n}^{j}} = \struct{Q_{1}^{j'}, \ldots, Q_{n}^{j'}}$,
  then removing the part of this sequence between $j$ and $j'$ will
  lead to the same final tuple $\struct{Q_{1}^{m}, \ldots,
    Q_{n}^{m}}$. We claim that after this removal, the resulting
  $k$-REM $e_{1} \cdots e_{j} \cdot e_{j'+1} \cdots e_{m}$ is a
  $k$-REM witness for $\struct{v_{p}, v_{q}}$ in $S$. The reason is as follows:
  from \lemref{lem:confGraphREMLang}, the following two conditions are
  equivalent to the original hypothesis that $e$ is a $k$-REM witness
  for $\struct{v_{p}, v_{q}}$ in $S$.
  \begin{enumerate}
  \item For some assignment $\s$, $(v_{q}, \s) \in Q_{p}^{m}$.
  \item For any $i = 1, \ldots, n$ and any $(v,\s) \in Q_{i}^{m}$,
    $(v_{i}, v) \in S$.
  \end{enumerate}
  Hence, any basic $k$-REM that ends in the same $n$-tuple
  $\struct{Q_{1}^{m}, ~\ldots~, Q_{n}^{m}}$ is also a $k$-REM witness
  for $\struct{v_{p}, v_{q}}$ in $S$.  As long as there are duplicate
  tuples along the sequence
  (\ref{eq:tuple-sequence-in-configuration-graph}), we can remove part
  of it to get a shorter witness. By pigeon hole principle, we
  conclude that there is a witness no longer than the total number of
  distinct tuples $\struct{Q_{1}, \ldots, Q_{n}}$.

  There are at most $n(\d+1)^{k}$ states in $\Tt_{G}$. Hence, there
  are at most $2^{n^{2}(\d+1)^{k}}$ tuples $\struct{Q_{1}, \ldots,
  Q_{n}}$. From the argument in the previous paragraph, we infer that
  if there is a $k$-REM witness for $\struct{v_{p}, v_{q}}$ in $S$,
  there is one of length at most $2^{n^{2}(\d+1)^{k}}$.
\end{proof}

For graphs without data considered in \cite{ANS2013}, the solution to
RPQ-definability looks at the graph as a finite automaton. This paves
the way for using language theoretic tools, which are
ultimately based on a pumping argument. In our case, we
cannot view a data graph directly as a register automaton. Hence we
need to construct the assignment graph on which we can apply the
pumping argument.

\begin{theorem}
    \label{thm:kREMDefinabilityBound}
    The $k-\rdpqm$-definability problem is in {\text{\sc
    NSpace}}$(\Oo(n^{2}\d^{k}))$, where $n$ is the number of nodes
    and $\d$ is the number of distinct data values.
\end{theorem}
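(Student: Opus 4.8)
The plan is to test definability of $S$ pair by pair: for each $\struct{v_p,v_q}\in S$ we check whether a $k$-REM witness for $\struct{v_p,v_q}$ in $S$ exists, and this check is a bounded reachability problem over powerset tuples of the $k$-assignment graph $\Tt_G$, solvable by a nondeterministic walk whose length is capped via \lemref{lem:small-witness}.

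The correctness of the pair-by-pair test rests on the statement: \emph{$S$ is $k$-$\rdpqm$-definable iff for every $\struct{v_p,v_q}\in S$ there is a $k$-REM witness for $\struct{v_p,v_q}$ in $S$} (the case $S=\emptyset$ being trivial). The forward direction is \lemref{lem:union-of-basic}. For the converse, given a $k$-REM witness $e_{p,q}$ for each $\struct{v_p,v_q}\in S$, the REM $\sum_{\struct{v_p,v_q}\in S} e_{p,q}$ still uses at most $k$ registers, connects every pair of $S$ by condition~1 of \defref{def:definabilityWitness} applied to a suitable summand, and connects no pair outside $S$ by condition~2 applied to every summand; hence it defines $S$. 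So the algorithm must decide, for each of the $\le n^2$ pairs in $S$, whether a $k$-REM witness exists, and accept exactly when all of them do.

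For a single pair I would reuse the powerset-tuple bookkeeping from the proof of \lemref{lem:small-witness}. A basic $k$-REM $e$ is a finite transition sequence of $\Tt_G$, and it takes the tuple $\struct{\{(v_1,\botk)\},\ldots,\{(v_n,\botk)\}}$ to $\struct{Q_1^e,\ldots,Q_n^e}$, where $Q_i^e$ is the set of states of $\Tt_G$ reachable from $(v_i,\botk)$ along $e$; by \lemref{lem:confGraphREMLang}, $e$ is a $k$-REM witness for $\struct{v_p,v_q}$ in $S$ exactly when the reached tuple is \emph{accepting}, i.e.\ (a) $(v_q,\s)\in Q_p^e$ for some $\s$ and (b) $\struct{v_i,v}\in S$ for all $i$ and all $(v,\s)\in Q_i^e$. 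The sub-procedure for $\struct{v_p,v_q}$ starts from the initial tuple, repeatedly guesses a transition label of $\Tt_G$ and overwrites the stored tuple by its successor, and halts successfully once the current tuple is accepting for $\struct{v_p,v_q}$; a counter incremented at each step abandons the branch once it passes the number $2^{n^2(\d+1)^k}$ of distinct tuples, which by \lemref{lem:small-witness} (a witness no longer than the number of tuples exists whenever any does) discards no witness. The full algorithm runs these sub-procedures in sequence over all pairs of $S$, abandoning the branch on any failure and accepting once all succeed; since witnesses for different pairs can be chosen independently, a computation path accepts iff every pair of $S$ has a witness, i.e.\ iff $S$ is definable.

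For the space bound: $\Tt_G$ has at most $n(\d+1)^k$ states, so each $Q_i$ is a bit-vector of that length, a stored tuple occupies $\Oh(n^2(\d+1)^k)$ bits, and the step counter occupies $\Oh(n^2(\d+1)^k)$ bits; computing a successor tuple and evaluating (a), (b) only scans the tuple, the edges of $G$ and $S$ with a constant number of pointers, staying within that budget --- here one uses that for fixed $k$ the effect of a transition on $\Tt_G$ is captured by finitely much data (the register tuple $\overline{r}$, the letter $a$, and the Boolean function on subsets of $\{1,\ldots,k\}$ that the condition induces), so each guess ranges over a set of size $\Oh(|\S|)$. As $(\d+1)^k=\Oh(\d^k)$ for fixed $k$, the total workspace is $\Oh(n^2\d^k)$ and the algorithm is nondeterministic, giving the claimed bound. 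The one subtle point is that the outer ``for all pairs there exists a witness'' is a $\forall\exists$ statement, which a single nondeterministic run does not obviously certify; it works because the existential choices for distinct pairs are independent, so they can be resolved one pair at a time while reusing the workspace, and because \lemref{lem:small-witness} bounds each guessed walk so that a polynomially sized counter enforces termination.
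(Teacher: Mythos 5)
Your proposal is correct and follows essentially the same route as the paper: reduce to checking, pair by pair, the existence of a $k$-REM witness via \lemref{lem:union-of-basic}, bound the witness length by \lemref{lem:small-witness}, and decide it by a nondeterministic walk over tuples of subsets of the assignment graph with a binary counter, giving the $\Oo(n^{2}\d^{k})$ space bound. The only difference is that you spell out two points the paper leaves implicit (that the union of per-pair witnesses defines $S$, and that the pairs are processed sequentially with reused workspace), which is fine.
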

\begin{proof}
    Suppose we are trying to define the set $S$. From
    \lemref{lem:union-of-basic}, it is enough to check that there are
    $|S|$ $k$-REM witness, one for each pair $\struct{u,v}$ in
    $S$. From \lemref{lem:small-witness}, we infer that it is enough
    to check for witnesses of length at most
    $2^{n^{2}(\d+1)^{k}}$. We will now give a non-deterministic
    algorithm to do this in space $\Oo(n^{2}\d^{k})$.

    First we note that given a tuple $\struct{Q_{1}, \ldots, Q_{n}}$
    of subsets of $Q_{G}$ and a $k$-REM $e=\downarrow
    \overline{r}a.[c]$, we can compute in space polynomial in
    $(n\d^{k})$ the tuple $\struct{Q_{1}', \ldots, Q_{n}'}$ such that
    $\struct{Q_{1}, \ldots, Q_{n}} \xra{~e~}_{G} \struct{Q_{1}',
    \ldots, Q_{n}'}$. Suppose $v_{1}, \ldots, v_{n}$ are the nodes of
    $G$. Now we give a non-deterministic algorithm to check if there
    exists a $k$-REM witness for $\struct{v_{1}, v_{p}}$ in $S$ of
    length at most $2^{n^{2}(\d+1)^{k}}$. The algorithm maintains a
    counter initialized to $0$ and a tuple of subsets of $Q_{G}$,
    initialized to $\struct{\{(v_{1}, \bot^{k})\}, \ldots,
    \{(v_{n}, \bot^{k})\}}$. The algorithm performs the following
    steps as long as the counter does not exceed
    $2^{n^{2}(\d+1)^{k}}$.
    \begin{enumerate}
        \item Increment the counter.
        \item Guess a $k$-REM $e = \downarrow\overline{r}a.[c]$.
        \item Replace current tuple $\struct{Q_{1}, \ldots,
            Q_{n}}$ with $\struct{Q_{1}', \ldots, Q_{n}'}$, where
            $\struct{Q_{1}, \ldots, Q_{n}} \xra{~e~} \struct{Q_{1}',
            \ldots, Q_{n}'}$.
        \item Check if $(v_{p}, \s) \in Q_{1}'$ for some assignment
            $\s$ and that for every $i = 1, \ldots, n$ and for every
            $(v',\s) \in Q_{i}'$, the pair $\struct{v_{i},v'}$ belongs
            to $S$. If yes, accept and
            terminate. If not, go back to step 1.
    \end{enumerate}

    From the proof of \lemref{lem:small-witness}, we conclude that
    some run of the above non-deterministic algorithm will accept if
    there is a $k$-REM witness for $\struct{v_{1}, v_{p}}$ in
    $S$. If there is no such witness, then clearly no run will accept.
    The algorithm needs space to store the counter, the tuple of
    subsets of $Q_{G}$ and the space to compute the successor tuple.
    The counter can be implemented in space $\Oo(n^{2}\d^{k})$ using
    binary counting. One state of $Q_{G}$ needs $(\log n \cdot k \cdot
    \log \d)$ bits. There are at most $n(\d+1)^{k}$ states in $Q_{G}$.
    Hence, the tuple of subsets and the space needed for intermediate
    computations can all be accommodated in space $\Oo(n^{2}\d^{k})$.
\end{proof}

\subsection{$\rdpqm$-definability}
\label{sec:rem-definability}

We can now tackle $\rdpqm$-definability, where there is no bound on
the number of registers.
\begin{lemma}
    \label{lem:definabilityMaxDRegisters}
    Suppose $G$ is a data graph with $\d$ distinct data values. A
    relation $S$ is $\rdpqm$-definable if and only if it is
    $\d$-$\rdpqm$-definable.
\end{lemma}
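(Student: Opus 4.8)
The plan is to prove the two directions of the equivalence. The direction "$\d$-$\rdpqm$-definable implies $\rdpqm$-definable" is immediate, since a $\d$-REM is a special case of an REM. So the work is all in the forward direction: if some REM $e$ with an arbitrary number of registers defines $S$, then there is an REM $e'$ using at most $\d$ registers that also defines $S$. By \lemref{lem:union-of-basic} it suffices to show that each pair $\struct{u,v} \in S$ has a $k$-REM witness with $k \le \d$; taking the union of these $\d$-REM witnesses then defines $S$.

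So fix $\struct{u,v}\in S$. By \lemref{lem:union-of-basic} there is a basic $k$-REM witness $e$ for $\struct{u,v}$ in $S$, for some $k$ possibly much larger than $\d$. The key observation is that in the data graph $G$ only $\d$ distinct data values ever appear, so at any point along a run in $\Tt_G$, the content of the $k$ registers is a tuple in $(\Dd_G \cup \{\bot\})^k$, which takes at most $(\d+1)^k$ values but really only ever "sees" at most $\d$ distinct non-$\bot$ values. The idea is to simulate the behaviour of $e$ on $G$ using only $\d$ registers: intuitively, whenever $e$ stores a value, we only ever need to remember, per distinct data value currently present in $G$, one register holding it, because two registers holding the same data value are indistinguishable by conditions in $\Cc_k$ (conditions only test equality/inequality of the current data value against register contents). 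Concretely, I would define a map that, given the basic $k$-REM $e = \downarrow\bar r_1.a_1[c_1]\cdots\downarrow\bar r_m.a_m[c_m]$ together with the specific run $(u,\bot^k)\xra{e}_G(v,\s)$ witnessing the connecting-path condition, produces a basic $\d$-REM $e'$ that is faithful to this run: maintain a dynamic injective partial assignment from the (at most $\d$) data values currently stored to the $\d$ physical registers $r_1,\dots,r_\d$, rewrite each $\downarrow\bar r_i$ block so that it binds the single physical register associated with the incoming data value $\rho(v_{i-1})$, and rewrite each condition $c_i$ by replacing each literal $r_j^{=}$ (resp. $r_j^{\ne}$) by the literal referring to the physical register currently holding the value that register $r_j$ held, or by $\bot$-behaviour ($\neq d$ always, i.e. a literal that mimics an empty register) if no physical register currently holds that value — but one must argue this last case is harmless for the run being simulated.

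The heart of the argument is the correctness claim for this translation, which I expect to be the main obstacle: I need to show (1) the chosen connecting data path $w$ (with $u\xra{w}v$, $w\in\Ll(e)$) is still in $\Ll(e')$, giving condition 1 of \defref{def:definabilityWitness}, and (2) $\Ll(e')$ introduces no extraneous pairs, i.e. every data path $w'$ accepted by $e'$ connecting $u'$ to $v'$ forces $\struct{u',v'}\in S$. For (2) the cleanest route is to show $\Ll(e')\cap(\text{data paths realizable in }G)\subseteq\Ll(e)$ up to automorphism — more precisely, that any run of $e'$ in $\Tt_G$ corresponds to a run of $e$ in $\Tt_G$ on the same or automorphic data path — and then invoke \factref{fact:autDataPaths} together with the fact that $e$ is a witness. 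The subtlety is handling the moment when $e$ reuses a register $r_j$ whose old content has been "forgotten" by our $\d$-register bookkeeping because that data value no longer occupied any physical register; here one uses that such a forgotten value, being no longer in any register, can only have been tested against future positions via inequality in a way consistent with treating a fresh physical register as newly bound. Rather than a purely syntactic translation, it may be cleaner to run the argument semantically on $\Tt_G$: quotient the state space $V\times(\Dd_G\cup\bot)^k$ by "same node, and same partition of register-indices-by-stored-value restricted to values currently in $G$", observe this quotient is a bisimulation-like invariant for the relevant runs, and read off a $\d$-register basic REM from a path in the quotient. Either way, once correctness is established, the lemma follows by taking the union over all $\struct{u,v}\in S$ and appealing to \lemref{lem:union-of-basic}.
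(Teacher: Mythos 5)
Your reduction of the problem to producing, for each $\struct{u,v} \in S$, a witness that uses at most $\d$ registers (via \lemref{lem:union-of-basic}) is the right frame, and the easy direction is handled correctly. But the core of your argument --- translating the given basic $k$-REM witness $e$ into a $\d$-register expression $e'$ by dynamically reallocating registers along one particular accepting run --- is left at the level of a plan, and the step you yourself flag as ``the main obstacle'' is exactly where it is not clear the construction can be made to work. The rewriting you describe is guided by a single run: logical registers of $e$ that happen to hold the \emph{same} data value at some point of that run get merged into one physical register, and tests against values your bookkeeping no longer tracks get weakened to always-true ($\bot$-like) behaviour. On a different data path realizable in $G$, those merged registers may be bound at positions carrying \emph{different} values, so $e'$ ends up testing against the wrong value (or not testing at all) compared to $e$; consequently the inclusion you need for condition 2 of \defref{def:definabilityWitness} --- every path accepted by $e'$ in $G$ corresponds, up to automorphism, to a path accepted by $e$ --- is not evident and is not established by the sketch (the proposed quotient of the assignment graph is likewise only named, not shown to be an invariant for all runs of $e'$ rather than just the chosen one). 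As it stands, $\Ll(e')$ could connect extraneous pairs, so the proof is incomplete at its decisive step.

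The missing idea is that no simulation of $e$ is needed at all. Since the connecting data path $w$ with $u \xra{w} v$ and $w \in \Ll(e)$ lies inside $G$, it uses at most $\d$ distinct data values, so the canonical expression $e_{[w]}$ of \lemref{lem:canonicalExpression} is a basic $\d$-REM with $\Ll(e_{[w]}) = [w]$; and by \factref{fact:autDataPaths}, $w \in \Ll(e)$ forces $[w] \subseteq \Ll(e)$. Hence $e_{[w]}$ inherits both witness conditions from $e$ for free: it contains $w$ (connecting path), and its language is a subset of $\Ll(e)$ (no extraneous pairs). Taking one such $e_{[w]}$ per pair of $S$ and applying \lemref{lem:union-of-basic} finishes the argument, with none of the bookkeeping your translation requires. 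This is the route the paper takes; if you want to salvage your simulation, you would have to prove the run-independence property you only conjecture, which is substantially harder than the lemma itself.
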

\begin{proof}
    The right to left implication is obvious. For the other direction,
    suppose $S$ is $k-\rdpqm$-definable for some $k$. From
    \lemref{lem:union-of-basic}, for every pair $\struct{u,v} \in S$,
    there is a $k$-REM witness $e$ for $\struct{u,v}$ in S. Hence,
    there exists a data path $w \in \Ll(e)$ connecting $u$ to
    $v$. From \lemref{lem:canonicalExpression}, $e_{[w]}$ is a
    $\d$-REM and from \factref{fact:autDataPaths}, $\Ll(e_{[w]}) \incl
    \Ll(e)$. Hence, $e_{[w]}$ is a $\d$-REM witness for
    $\struct{u,v}$ in $S$. Such witnesses exist for every pair in
    $S$ and hence, $S$ is $\d-\rdpqm$-definable.
\end{proof}

The next theorem follows from the previous two results.
\begin{theorem}
    \label{thm:REMDefinabilityExpspace}
    $\rdpqm$-definability is in \EXPSPACE{}.
\end{theorem}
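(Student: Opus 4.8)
By \lemref{lem:definabilityMaxDRegisters}, a relation $S$ on a data graph $G$ with $\d$ distinct data values is $\rdpqm$-definable if and only if it is $\d$-$\rdpqm$-definable. So the algorithm for $\rdpqm$-definability need only decide $\d$-$\rdpqm$-definability. By \thmref{thm:kREMDefinabilityBound}, the $k$-$\rdpqm$-definability problem is in $\text{\sc NSpace}(\Oo(n^{2}\d^{k}))$; instantiating $k = \d$ gives $\text{\sc NSpace}(\Oo(n^{2}\d^{\d}))$.

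It remains only to check that this is an \EXPSPACE{} bound measured in the size of the input. The input is the data graph $G$ (with $n$ nodes and $\d$ data values) together with the relation $S$; in particular the input size is at least $\max(n,\d)$, and $n^{2}\d^{\d} \le 2^{p(n+\d)}$ for a suitable polynomial $p$ (since $\d^{\d} = 2^{\d \log \d}$ and $n^{2} = 2^{2\log n}$), so the space used is at most exponential in the input size. Finally, $\text{\sc NSpace}(f) \subseteq \text{\sc DSpace}(f^{2})$ by Savitch's theorem, and the square of an exponential function is still exponential, so the deterministic version also runs in \EXPSPACE{}. Hence $\rdpqm$-definability is in \EXPSPACE{}.

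There is essentially no obstacle here: the work has all been done in \lemref{lem:definabilityMaxDRegisters} (which lets us cap the number of registers at $\d$) and \thmref{thm:kREMDefinabilityBound} (which gives the space bound as a function of $k$, $n$, $\d$). The only thing to be a little careful about is that the bound $n^{2}\d^{k}$ is exponential only because we are now allowed to plug in $k = \d$ — for fixed $k$ it is polynomial, matching the $\text{\sc Space}(\Oo(n\d^{k}))$ claim advertised in the introduction — so the theorem genuinely relies on the register-elimination step and not on \thmref{thm:kREMDefinabilityBound} alone.
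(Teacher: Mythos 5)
Your proposal is correct and follows exactly the paper's argument: reduce to $\d$-$\rdpqm$-definability via \lemref{lem:definabilityMaxDRegisters}, instantiate \thmref{thm:kREMDefinabilityBound} with $k=\d$ to get {\text{\sc NSpace}}$(\Oo(n^{2}\d^{\d}))$, and apply Savitch's theorem. The extra remarks on why this bound is exponential in the input size are a welcome (if routine) elaboration, but there is no difference in approach.
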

\begin{proof}
    From \lemref{lem:definabilityMaxDRegisters}, it is equivalent to
    checking $\d-\rdpqm$-definability, where $\d$ is the number of
    distinct data values in the given data graph. From
    \thmref{thm:kREMDefinabilityBound}, $\d-\rdpqm$-definability is in
    {\text{\sc NSpace}}$(\Oo(n^{2}\d^{\d}))$. From Savitch's theorem,
    we then get a deterministic exponential space algorithm.
\end{proof}

Next we give a matching lower bound.

%%% Local Variables: 
%%% mode: latex
%%% TeX-master: "main"
%%% End: 

%\subsection{Lower bound}
%\label{sec:lower-bound-rem}
\begin{theorem}
    \label{thm:remLowBound}
    The $\rdpqm$-definability problem in data graphs is
    \EXPSPACE{}-hard.
\end{theorem}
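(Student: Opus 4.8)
The plan is to reduce from a standard EXPSPACE-complete problem — say, the acceptance problem for a deterministic Turing machine $M$ running in space $2^n$ on an input $x$ of length $n$ — to $\rdpqm$-definability. The central idea, hinted at in the introduction ("we identify how small data graphs can count exponentially large numbers using data values"), is that a configuration of $M$ is a string of length $2^n$, which is too long to lay out explicitly in a polynomial-size data graph; but a data path can traverse a small gadget $2^n$ times, using $n$ data values (or $n$ "bit" nodes carrying data values from a two-element set, compared in an interleaved fashion by $n$ registers) to encode the binary address of the current tape cell. So I would build a data graph $G$ of size polynomial in $|M|+n$ together with a relation $S$ such that $S$ is $\rdpqm$-definable if and only if $M$ rejects $x$ (or accepts — whichever polarity makes the construction cleaner, using that EXPSPACE is closed under complement).

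Concretely, I would design $G$ so that the data paths through a distinguished "main" portion of the graph correspond to candidate accepting computations of $M$: a data path spells out configuration $0$, then configuration $1$, and so on, where each configuration is a block of $2^n$ cells and each cell is encoded by a short sub-path carrying (i) a tape-symbol/state letter from $\S$ and (ii) an $n$-bit address realized via $n$ data positions. The key is that a single REM with $O(n)$ registers can enforce all the local consistency conditions of a valid computation: that successive addresses increment correctly, that the tape content at address $j$ in configuration $i{+}1$ is determined by the contents at addresses $j{-}1, j, j{+}1$ in configuration $i$ (the window condition), and that the first configuration is the initial one and some configuration is accepting. Following the pattern of \lemref{lem:canonicalExpression} and \exref{eg:queries-REM-REE}, comparing the address bits of "the cell at position $j$ in block $i$" with "the cell at position $j$ in block $i{+}1$" is exactly the kind of interleaved equality test that registers provide, and $n$ registers suffice to remember one $n$-bit address across one block. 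The graph $G$ itself will be a fixed branching gadget that allows any such encoding as a data path (choices of symbols/states/bits appear as branches), together with auxiliary nodes so that the relation $S$ is, say, the set of node-pairs connected by *some* data path that does *not* encode an accepting computation — so that $S$ is definable (by an REM that "recognizes the non-accepting paths", or more precisely whose language carves out exactly $S$) precisely when no accepting computation exists, i.e. when $M$ rejects $x$.

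The crucial design work is getting the reduction to respect \emph{both} conditions of definability rather than just language containment: we must ensure that whenever $M$ does accept, \emph{no} REM — with any number of registers — can define $S$, because any REM whose language includes a path witnessing a pair in $S$ is automorphism-closed (\factref{fact:autDataPaths}) and therefore must also include an automorphic path that, in $G$, connects a forbidden pair outside $S$. This is where the introduction's remark that "we need a different approach, since we deal with the definability problem and cannot rely on intersections" bites: unlike emptiness-of-intersection reductions, here I must plant, for every "bad" data path (one encoding an accepting run), an automorphic twin sitting on a different node-pair, so that any attempt to include a bad path in the defining language is immediately punished. Engineering these automorphic twins — typically by adding a parallel copy of the graph with data values permuted — while keeping $G$ of polynomial size and keeping the "good" (non-accepting) part of $S$ genuinely definable, is the main obstacle. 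Once the gadgets and the twin construction are in place, correctness splits into the routine "if $M$ rejects then the canonical REM for the non-accepting paths works" direction and the "if $M$ accepts then any candidate REM fails one of the two witness conditions" direction, the latter using \factref{fact:autDataPaths} exactly as in \exref{eg:queries-REM-REE}. Finally I would note that the size of $G$, the size of $S$, and the number of data values are all polynomial in $|M|+n$, so this is a polynomial-time reduction and establishes \EXPSPACE-hardness, which together with \thmref{thm:REMDefinabilityExpspace} gives \EXPSPACE-completeness.
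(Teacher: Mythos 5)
There is a genuine gap, and it lies exactly where you locate ``the main obstacle'': the polarity of your reduction is incompatible with the automorphic-twin mechanism you invoke. You propose to plant, for every path encoding an \emph{accepting} computation, an automorphic twin on a pair outside $S$, and to conclude that when $M$ accepts no REM can define $S$. This fails for two reasons. First, the punishment argument does not apply: an REM defining $S$ (the pairs connected by some non-accepting path) has no need to include any accepting-computation path in its language; a single REM whose language is ``all encodings containing at least one local error'' (such an REM is constructible by guessing the error type and position, using $n$ registers for the address bits) would cover every pair of $S$ while avoiding your planted twins, so $S$ stays definable whether or not $M$ accepts, and the reduction does not decide acceptance. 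Second, even setting that aside, the twin gadget you need cannot be built: the set of paths encoding \emph{globally valid} accepting computations is not a union of polynomially many locally checkable conditions, so a polynomial-size data graph cannot admit exactly (automorphic copies of) those paths --- tracking which $2^{n}$-bit address a cell carries across an entire configuration is precisely what a small graph cannot do. (A smaller issue: your relation $S$, ``pairs connected by some non-accepting path,'' is not obviously computable in polynomial time, so it cannot simply be output by the reduction.)

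The paper's proof (Theorem~\ref{thm:remLowBound}) uses the dual polarity, and this choice is what makes the construction possible. It reduces from exponential-width corridor tiling, takes $S=\{\struct{p_{2},q_{2}}\}$ where the $p_{2}$-to-$q_{2}$ part admits encodings of \emph{all} candidate tilings, and plants automorphic twins from $p_{1}$ to $q_{1}$ only for the \emph{illegal} (error-containing) encodings. This is feasible because an illegal tiling is witnessed by a single local error, so one small gadget per error type suffices, each using only $2n$ fixed data values $d_{1},e_{1},\ldots,d_{n},e_{n}$ and relying on the fact that only an automorphic copy of the offending path is needed (\factref{fact:autDataPaths}). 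The punishment argument then does bite: any REM defining $\{\struct{p_{2},q_{2}}\}$ must include \emph{some} $p_{2}$-to-$q_{2}$ path, and if that path were illegal its twin would connect $p_{1}$ to $q_{1}$, so a defining REM exists iff a legal tiling exists, in which case the explicit REM of \eqref{eq:defRem} (storing the $n$ address seeds in registers) defines the singleton. Your other ingredients --- addressing exponentially wide configurations with $n$ register-checked bits, exploiting automorphism closure, and the observation that only automorphic copies need appear in the graph --- are the right ones; the fix is to plant twins for the locally erroneous paths rather than for the certificates, and to make $S$ a single pair whose definability forces the existence of a certificate.
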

\begin{proof}
    We reduce the exponential width corridor tiling problem to the
    $\rdpqm$-definability problem. An instance of the tiling problem consists
    of a set $T$ of tile types, a relation $C_{h} \subseteq T \times
    T$ of horizontally compatible tile types and a relation $C_{v}
    \subseteq T \times T$ of vertically compatible tile types, an
    initial tile type $t_{i}$, a final tile type $t_{f}$ and a number
    $n$ (in unary). The problem is to check if there exists a number
    $R$ and a tiling $\tau: [R]\times[2^{n}-1] \to T$ that is
    \emph{legal} --- $\tau(0,0) = t_{i}$, $\tau(R,2^{n}-1) = t_{f}$,
    $(\tau(i,j), \tau(i,j+1)) \in C_{h}$ and $(\tau(i,j), \tau(i+1,
    j)) \in C_{v}$ for all $i,j$. The intention here is that
    $\tau(i,j)$ is the tile type at the $i$\textsuperscript{th} row
    $j$\textsuperscript{th} column of a corridor with $R + 1$ rows and
    $2^{n}$ columns. This problem is known to be \EXPSPACE{}-complete
    (e.g., see \cite{Boas1997}). To be precise, we need to allow any
    exponential function in place of $2^{n}$. Our proof works in that
    case also; we use $2^{n}$ to reduce notational clutter.

    Let $\overline{T} = \{\overline{t} \mid t \in T\}$ be a disjoint
    copy of $T$. Given an instance of the tiling problem, we reduce it
    to the $\rdpqm$-definability problem in data graphs, where
    the finite alphabet is $T \cup \overline{T} \cup \{\$, \alpha\}$.
    A tiling $\tau$ is encoded by data paths in the language of the
    following REM:
    \begin{align}
        \begin{matrix}
            \$ \cdot & \downarrow r_{n} \cdot & \alpha \cdot &
            \downarrow r_{n-1} \cdot & \alpha \cdots \alpha \cdot &
            \downarrow r_{1} \cdot & \tau(0,0) & \\
            & [r_{n}^{=}] \cdot & \alpha & [r_{n-1}^{=}] \cdot &
            \alpha \cdots \alpha & [r_{1}^{\ne}] \cdot &
            \tau(0,1) & \\
            & [r_{n}^{=}] \cdot & \alpha & \cdots \alpha &
            [r_{2}^{\ne}] \cdot \alpha & [r_{1}^{=}] \cdot &
            \tau(0,2) & \\
            & & & & \vdots & & & \\
            & [r_{n}^{\ne}] \cdot & \alpha & [r_{n-1}^{\ne}] \cdot &
            \alpha \cdots \alpha & [r_{1}^{\ne}] \cdot &
            \overline{\tau(0,2^{n}-1)} & \\
            & [r_{n}^{=}] \cdot & \alpha & [r_{n-1}^{=}] \cdot &
            \alpha \cdots \alpha & [r_{1}^{=}] \cdot &
            \tau(1,0) & \\
            & & & & \vdots & & & \\
            & [r_{n}^{\ne}] \cdot & \alpha & [r_{n-1}^{\ne}] \cdot &
            \alpha \cdots \alpha & [r_{1}^{\ne}] \cdot &
            \overline{\tau(R, 2^{n}-1)} \cdot & \$
        \end{matrix}
        \label{eq:defRem}
    \end{align}
    The expression lists the tile types used in the tiling
    sequentially from left column to right column, bottom row to top
    row. The first $n$ data values are stored in the registers $r_{n},
    \ldots, r_{1}$. In later positions, $[r_{k}^{=}]$
    (resp.~$[r_{k}^{\ne}]$) indicates that the $k$\textsuperscript{th} bit
    is $0$ (resp.~$1$). The $n$ conditions preceding $\tau(i,j)$ in
    the expression denote the binary representation of $j$. Tile
    types in the last column are represented by letters in
    $\overline{T}$, so that we need not check them for horizontal
    compatibility with the next tile. The data graph is the
    disjoint union of two graphs $p_{1} \xrightarrow{\$}
    \fbox{\text{illegal tilings}} \xrightarrow{\$} q_{1}$ and $p_{2}
    \xrightarrow{\$} \fbox{\text{all tilings}} \xrightarrow{\$} q_{2}$
    satisfying the following conditions.
    \begin{enumerate}
        \item \label{it:connectivityGuard} Any data path starting and
            ending with the letter $\$$ may only connect $p_{1}$ to
            $q_{1}$ or $p_{2}$ to $q_{2}$.
        \item \label{it:allTilingsEncoded} Every tiling can be encoded
            by some data path connecting $p_{2}$ to $q_{2}$.
        \item \label{it:noLegalTilingsEncoded} None of the data paths
            connecting $p_{1}$ to $q_{1}$ are encodings of legal tilings.
        \item \label{it:allIllegalsEncoded} For every data path $w$
            connecting $p_{2}$ to $q_{2}$ that is not the encoding of
            a legal tiling, there exists a data path automorphic to
            $w$ connecting $p_{1}$ to $q_{1}$.
    \end{enumerate}

    We claim that there exists a legal tiling iff $\{\struct{p_{2},
    q_{2}}\}$
    is $\rdpqm$-definable. Indeed, suppose there exists a legal tiling
    $\tau$. Conditions \ref{it:connectivityGuard},
    \ref{it:allTilingsEncoded} and \ref{it:noLegalTilingsEncoded}
    ensure that the REM in \eqref{eq:defRem} defines $\{\struct{p_{2},
    q_{2}}\}$. Conversely, suppose $\{\struct{p_{2}, q_{2}}\}$ is definable.
    There exists a defining REM $e$ and a data path $w$ in
    $\lang(e)$ connecting $p_{2}$ to $q_{2}$. If $w$ does not encode
    a legal tiling, then condition \ref{it:allIllegalsEncoded} above
    implies that there is a data path $w'$ automorphic
    to $w$ (and hence in $\lang(e)$) connecting $p_{1}$
    to $q_{1}$, contradicting the hypothesis that $e$ defines
    $\{\struct{p_{2}, q_{2}}\}$. Hence, $w$ encodes a legal tiling.
    The data graph can be constructed in polynomial time (details
    follow) and hence the $\rdpqm$-definability problem in data graphs
    is \EXPSPACE{}-hard. At a high level, the strategy of this proof is similar to
    that of \cite[Theorem 3.7]{KRV2014} in the sense that a small
    gadget differentiates between the set of all tilings and the set
    of illegal tilings. However, \cite[Theorem 3.7]{KRV2014} can not
    be used here directly, since that is about containment of one
    query in another while we are concerned about the definability of
    a relation in a given data graph. There is also a subtle
    difference between the proof strategies which will be
    highlighted in the details that follow.

    We now give the details of the data graph. Nodes are denoted
    by circles, with data values written outside. The data values of
    some nodes are skipped when they are not important. The data
    values $d_{n}, e_{n}, \ldots, d_{1}, e_{1}$ are all distinct. The
    portion of the data graph containing $p_{2}$ and $q_{2}$ is as
    follows.

    \begin{center}
        \begin{tikzpicture}[>=stealth, pin distance=0.1\ml]
    \node[state] (p2) at (0\ml, 0\ml) {$p_{2}$};
    \node[state, pin=70:$d_{n}$] (dn) at ([xshift=1.5\ml,yshift=0.5\ml]p2) {};
    \node[state, pin=70:$d_{n-1}$] (dnm1) at ([xshift=1\ml]dn) {};
    \node[state, pin=110:$d_{1}$] (d1) at ([xshift=2\ml]dnm1) {};
    \node[state, draw=none] (d2) at ([xshift=-1\ml]d1) {};
    \node[state, pin=-70:$e_{n}$] (en) at ([xshift=1.5\ml,yshift=-0.5\ml]p2) {};
    \node[state, pin=-70:$e_{n-1}$] (enm1) at (en -| dnm1) {};
    \node[state, pin=-110:$e_{1}$] (e1) at (en -| d1) {};
    \node[state, draw=none] (e2) at ([xshift=-1\ml]e1) {};
    \node[state] (lt) at ([xshift=1.5\ml]barycentric cs:d1=1,e1=1) {};
    \node[state] (q2) at ([xshift=1\ml]lt) {$q_{2}$};
    \node[coordinate] (ft) at ([xshift=1\ml]p2) {};

    \begin{scope}[decoration={markings,mark = at position 0.5 with {\arrow{stealth}}}]
        \draw[postaction={decorate}] (p2) -- node[auto=left, pos=0.2] {$\$$} (ft);
    \end{scope}
    \begin{scope}[decoration={markings,mark = at position 0.9 with {\arrow{stealth}}}]
        \draw [postaction={decorate}](ft) .. controls ([xshift=0.5\ml]ft) and ([yshift=-0.5\ml]dn.center) .. (dn);
        \draw [postaction={decorate}](ft) .. controls ([xshift=0.5\ml]ft) and ([yshift=0.5\ml]en.center) .. (en);
    \end{scope}
    \draw[->] (dn) -- node[auto=right] {$\alpha$} (dnm1);
    \draw[->] (en) -- node[auto=left] {$\alpha$} (enm1);
    \draw[->] (dn) -- (enm1);
    \draw[->] (en) -- (dnm1);
    \draw[->] (d2) -- node[auto=right] {$\alpha$} (d1);
    \draw[->] (e2) -- node[auto=left] {$\alpha$} (e1);
    \draw[->] (d2) -- (e1);
    \draw[->] (e2) -- (d1);
    \draw[densely dotted] (dnm1) -- (d2);
    \draw[densely dotted] (enm1) -- (e2);

    \node[coordinate] (t1) at ([xshift=-0.5\ml,yshift=1\ml]d1) {};
    \node[coordinate] (t2) at ([xshift=0.5\ml,yshift=1\ml]dn) {};
    \begin{scope}[decoration={markings,mark = at position 0.5 with {\arrow{stealth}}}]
        \draw [postaction={decorate}](t1) -- node[auto=right] {$T \cup \overline{T}$} (t2);
        \draw [postaction={decorate}](d1) .. controls ([xshift=0.5\ml, yshift=0.5\ml]d1.center) and ([xshift=0.5\ml]t1) .. (t1);
        \draw [postaction={decorate}](e1) .. controls ([xshift=1\ml, yshift=1\ml]e1.center) and ([xshift=1.5\ml]t1) .. (t1);
    \end{scope}
    \begin{scope}[decoration={markings,mark = at position 0.9 with {\arrow{stealth}}}]
        \draw [postaction={decorate}](t2) .. controls ([xshift=-0.5\ml]t2) and ([xshift=-0.5\ml, yshift=0.5\ml]dn.center) .. (dn);
        \draw [postaction={decorate}](t2) .. controls ([xshift=-0.5\ml]t2) and ([xshift=-1.5\ml, yshift=1.5\ml]en.center) .. (en);
    \end{scope}

    \node[coordinate] (t3) at ([xshift=-1\ml]lt) {};
    \begin{scope}[decoration={markings,mark = at position 0.2 with {\arrow{stealth}}}]
        \draw [postaction={decorate}](d1) .. controls ([yshift=-0.5\ml]d1.center) and ([xshift=-0.5\ml]t3) .. (t3);
        \draw [postaction={decorate}](e1) .. controls ([yshift=0.5\ml]e1.center) and ([xshift=-0.5\ml]t3) .. (t3);
    \end{scope}
    \draw[->] (t3) -- node[auto=left, pos=0.8] {$\overline{T}$} (lt);
    \draw[->] (lt) -- node[auto=left] {$\$$} (q2);
\end{tikzpicture}
    \end{center}

    For every data path connecting $p_{2}$ to $q_{2}$ that is not the
    encoding of a legal tiling, we now add automorphic data paths
    connecting $p_{1}$ to $q_{1}$ through gadgets. This will ensure
    that the data graph satisfies condition
    \ref{it:allIllegalsEncoded} above. We also ensure that every path
    we add satisfies conditions \ref{it:connectivityGuard} and
    \ref{it:noLegalTilingsEncoded}.    

    \textbullet\hphantom{a} \emph{In a data path $w$ connecting $p_{2}$ to
            $q_{2}$, the sequence of $n$ data values preceding
            $\tau(0,1)$ does not represent $1$}.  This could be due to any one
            (or more) of the $n$ bits being wrong; following is the gadget
            for checking that the $k$\textsuperscript{th} bit is $1$ (at
            node $q$) instead of $0$. There are $n$ such gadgets, one for
            each bit.
            \begin{center}
                \begin{tikzpicture}[>=stealth, pin distance=0.1\ml]
    \node[state] (p1) at (0\ml,0\ml) {$p_{1}$};
    \node[state, pin=90:$d_{n}$] (dn) at ([xshift=1.5\ml]p1) {};
    \node[state, pin=90:$d_{k}$] (di) at ([xshift=1.5\ml]dn) {$p$};
    \node[state, pin=90:$d_{1}$] (d1) at ([xshift=1.5\ml]di) {};
    \node[gadget, pin=-90:$D$, fill=gray!30] (dnp) at ([yshift=-0.7\ml]dn) {};
    \node[state, pin=-90:$e_{k}$] (dip) at ([xshift=1.5\ml]dnp) {$q$};
    \node[gadget, pin=-90:$D$, fill=gray!30] (d1p) at ([xshift=1.5\ml]dip) {};
    \node[state] (q1) at ([xshift=3.2\ml]d1p) {$q_{1}$};

    \draw[->] (p1) -- node[auto=left, pos=0.2] {$\$$} (dn);
    \draw[->] (dn) -- node[auto=left] {$\alpha$} ([xshift=0.5\ml]dn.center);
    \draw[densely dotted] ([xshift=0.5\ml]dn.center) -- ([xshift=-0.5\ml]di.center);
    \draw[->] ([xshift=-0.5\ml]di.center) -- node[auto=left] {$\alpha$} (di);
    \draw[->] (di) -- node[auto=left] {$\alpha$} ([xshift=0.5\ml]di.center);
    \draw[densely dotted] ([xshift=0.5\ml]di.center) -- ([xshift=-0.5\ml]d1.center);
    \draw[->] ([xshift=-0.5\ml]d1.center) -- node[auto=left] {$\alpha$} (d1);
    \draw[->, rounded corners=0.05\ml] (d1) -- ([yshift=-0.3\ml]d1.center) -- ([yshift=-0.3\ml]p1.center) -- (p1 |- dnp) -- node[auto=left] {$T$} (dnp);
    \draw[->] (dnp) -- node[auto=left] {$\alpha$} ([xshift=0.5\ml]dnp.center);
    \draw[densely dotted] ([xshift=0.5\ml]dnp.center) -- ([xshift=-0.5\ml]dip.center);
    \draw[->] ([xshift=-0.5\ml]dip.center) -- node[auto=left] {$\alpha$} (dip);
    \draw[->] (dip) -- node[auto=left] {$\alpha$} ([xshift=0.5\ml]dip.center);
    \draw[densely dotted] ([xshift=0.5\ml]dip.center) -- ([xshift=-0.5\ml]d1p.center);
    \draw[->] ([xshift=-0.5\ml]d1p.center) -- node[auto=left] {$\alpha$} (d1p);
    \draw[->] (d1p) -- node[auto=left] {$(T \cup \overline{T} \cup \{\a\})^{*} \cdot \$$} (q1);
\end{tikzpicture}\\*
            \end{center}
            The gadget simply checks that the $k$\textsuperscript{th}
            data value preceding $\tau(0,1)$ in the data path $w$
            ($e_{k}$ at node $q$) is unequal to the $k$\textsuperscript{th}
            data value preceding $\tau(0,0)$ ($d_{k}$ at node $p$).
            This will ensure that the $k$\textsuperscript{th} bit
            preceding $\tau(0,1)$ is $1$ instead of $0$.  If the
            $k$\textsuperscript{th} data value preceding $\tau(0,0)$
            is $e_{k}$, the gadget can still imitate $w$ modulo an
            automorphism that interchanges $d_{k}$ and $e_{k}$.  In
            the above diagram, every gray box marked $D$ is actually a
            gadget with $2n$ nodes, with each node having a distinct
            data value from the set $\{d_{n}, e_{n}, \ldots, d_{1},
            e_{1}\}$. For every edge coming in to a gray box, there is
            an edge coming in to each of the $2n$ nodes. For every
            edge coming out of a gray box, there is an edge coming out
            of each of the $2n$ nodes. The edge coming in to the node
            $q_{1}$ is labeled by the REM $(T \cup \overline{T} \cup
            \{\a\})^{*} \cdot \$$, whose language consists of all the
            data paths having exactly one occurrence of the letter
            $\$$, which occurs at the end. A gadget admitting exactly
            this set of data paths can be easily designed using
            polynomially many nodes; the gadget is not shown in the
            diagram since it is easier to understand the expression.

        \textbullet\hphantom{a} \emph{Some sequence of $n$ conditions does not encode
            the successor of the preceding $n$ conditions}. The
            following gadget checks that the $k$\textsuperscript{th}
            bit flips from $1$ (in node $2$) to $0$ (in node $4$) but
            the $(k+1)$\textsuperscript{th} bit stays at $0$ (in nodes
            $1$ and $3$). There are $\Oh(n)$ such gadgets for checking
            all such errors.
            \begin{center}
                \begin{tikzpicture}[>=stealth, pin distance=0.2\ml]
    \node[state] at (0\ml,0\ml) {$p_{1}$};
    \node[state,pin=90:$d_{n}$] (dn1) at ([xshift=2.5\ml]p1) {};
    \node[state,pin=90:$d_{k+1}$] (dip11) at ([xshift=2.5\ml]dn) {};
    \node[state,pin=90:$d_{k}$] (di1) at ([xshift=1\ml]dip11) {};
    \node[state,pin=90:$d_{1}$] (d11) at ([xshift=2\ml]di1) {};
    \node[gadget,pin=-45:$D$, fill=gray!30] (dn2) at ([yshift=-0.9\ml]dn1) {};
    \node[state,pin=-45:$d_{k+1}$] (dip12) at (dip11 |- dn2) {$1$};
    \node[state,pin=-45:$e_{k}$] (di2) at (di1 |- dn2) {$2$};
    \node[gadget,pin=0:$D$, fill=gray!30] (d12) at (d11 |- dn2) {};
    \node[gadget,pin=-45:$D$, fill=gray!30] (dn3) at ([yshift=-1.2\ml]dn2) {};
    \node[state,pin=-45:$d_{k+1}$] (dip13) at (dip12 |- dn3) {$3$};
    \node[state,pin=-45:$d_{k}$] (di3) at (di2 |- dn3) {$4$};
    \node[gadget,pin=0:$D$, fill=gray!30] (d13) at (d12 |- dn3) {};
    \node[state] (q1) at ([yshift=-1.3\ml]dip13) {$q_{1}$};

    \draw[->] (p1) -- node[auto=left] {$\$$} (dn1);
    \draw[->] (dn1) -- node[auto=left] {$\alpha$} ([xshift=0.5\ml]dn1.center);
    \draw[densely dotted] ([xshift=0.5\ml]dn1.center) -- ([xshift=-0.5\ml]dip11.center);
    \draw[->] ([xshift=-0.5\ml]dip11.center) -- node[auto=left] {$\alpha$} (dip11);
    \draw[->] (dip11) -- node[auto=left] {$\alpha$} (di1);
    \draw[->] (di1) -- node[auto=left] {$\alpha$} ([xshift=0.5\ml]di1.center);
    \draw[densely dotted] ([xshift=0.5\ml]di1.center) -- ([xshift=-0.5\ml]d11.center);
    \draw[->] ([xshift=-0.5\ml]d11.center) -- node[auto=left] {$\alpha$} (d11);
    \draw[->, rounded corners=0.05\ml] (d11) -- ([yshift=-0.3\ml]d11.center) -- ([yshift=-0.3\ml]p1.center) -- (p1|- dn2) -- node[auto=left] {$(T \cup \overline{T} \cup \{\a\})^{*}$} (dn2);
    \draw[->] (dn2) -- node[auto=left] {$\alpha$} ([xshift=0.5\ml]dn2.center);
    \draw[densely dotted] ([xshift=0.5\ml]dn2.center) -- ([xshift=-0.5\ml]dip12.center);
    \draw[->] ([xshift=-0.5\ml]dip12.center) -- node[auto=left] {$\alpha$} (dip12);
    \draw[->] (dip12) -- node[auto=left] {$\alpha$} (di2);
    \draw[->] (di2) -- node[auto=left] {$\alpha$} ([xshift=0.5\ml]di2.center);
    \draw[densely dotted] ([xshift=0.5\ml]di2.center) -- ([xshift=-0.5\ml]d12.center);
    \draw[->] ([xshift=-0.5\ml]d12.center) -- node[auto=left] {$\alpha$} (d12);
    \draw[->, rounded corners=0.05\ml] (d12) -- ([yshift=-0.75\ml]d12.center) -- ([yshift=-0.75\ml]p1 |- dn2) -- (p1 |- dn3) -- node[auto=left] {$T \cup \overline{T}$} (dn3);
    \draw[->] (dn3) -- node[auto=left] {$\alpha$} ([xshift=0.5\ml]dn3.center);
    \draw[densely dotted] ([xshift=0.5\ml]dn3.center) -- ([xshift=-0.5\ml]dip13.center);
    \draw[->] ([xshift=-0.5\ml]dip13.center) -- node[auto=left] {$\alpha$} (dip13);
    \draw[->] (dip13) -- node[auto=left] {$\alpha$} (di3);
    \draw[->] (di3) -- node[auto=left] {$\alpha$} ([xshift=0.5\ml]di3.center);
    \draw[densely dotted] ([xshift=0.5\ml]di3.center) -- ([xshift=-0.5\ml]d13.center);
    \draw[->] ([xshift=-0.5\ml]d13.center) -- node[auto=left] {$\alpha$} (d13);
    \draw[->, rounded corners=0.05\ml] (d13) -- ([yshift=-0.7\ml]d13.center) -- ([yshift=-0.7\ml]d13 -| p1) -- (p1 |- q1) -- node[auto=left] {$(T \cup \overline{T} \cup \{\a\})^{*} \cdot \$$} (q1);
\end{tikzpicture}
            \end{center}

        \textbullet\hphantom{a} \emph{The sequence of $n$ conditions before a letter in
                $\overline{T}$ does not represent $2^{n}-1$}. This can
                be due to any one (or more) of the bits being $0$
                instead of $1$. The following gadget checks that the
                $k$\textsuperscript{th} bit is $0$ (at node $q$).
                There are $n$ such gadgets, one for each bit.\\*
                \begin{center}
                    \begin{tikzpicture}[>=stealth, pin distance=0.2\ml]
    \node[state] at (0\ml,0\ml) {$p_{1}$};
    \node[state,pin=90:$d_{n}$] (dn1) at ([xshift=2.5\ml]p1) {};
    \node[state,pin=90:$d_{k}$] (di1) at ([xshift=2\ml]dn1) {};
    \node[state,pin=90:$d_{1}$] (d11) at ([xshift=2\ml]di1) {};
    \node[gadget,pin=-45:$D$, fill=gray!30] (dn2) at ([yshift=-0.9\ml]dn1) {};
    \node[state,pin=-45:$d_{k}$] (di2) at (di1 |- dn2) {$q$};
    \node[gadget,pin=0:$D$, fill=gray!30] (d12) at (d11 |- dn2) {};
    \node[state] (q1) at ([xshift=1\ml, yshift=-1.4\ml]dn2) {$q_{1}$};
    
    \draw[->] (p1) -- node[auto=left, pos=0.2] {$\$$} (dn1);
    \draw[->] (dn1) -- node[auto=left] {$\alpha$} ([xshift=0.5\ml]dn1.center);
    \draw[densely dotted] ([xshift=0.5\ml]dn1.center) -- ([xshift=-0.5\ml]di1.center);
    \draw[->] ([xshift=-0.5\ml]di1.center) -- node[auto=left] {$\alpha$} (di1);
    \draw[->] (di1) -- node[auto=left] {$\alpha$} ([xshift=0.5\ml]di1.center);
    \draw[densely dotted] ([xshift=0.5\ml]di1.center) -- ([xshift=-0.5\ml]d11.center);
    \draw[->] ([xshift=-0.5\ml]d11.center) -- node[auto=left] {$\alpha$} (d11);
    \draw[->, rounded corners=0.05\ml] (d11) -- ([yshift=-0.3\ml]d11.center) -- ([yshift=-0.3\ml]p1.center) -- (p1 |- dn2) -- node[auto=left] {$(T \cup \overline{T} \cup \{\a\})^{*}$} (dn2);
    \draw[->] (dn2) -- node[auto=left] {$\alpha$} ([xshift=0.5\ml]dn2.center);
    \draw[densely dotted] ([xshift=0.5\ml]dn2.center) -- ([xshift=-0.5\ml]di2.center);
    \draw[->] ([xshift=-0.5\ml]di2.center) -- node[auto=left] {$\alpha$} (di2);
    \draw[->] (di2) -- node[auto=left] {$\alpha$} ([xshift=0.5\ml]di2.center);
    \draw[densely dotted] ([xshift=0.5\ml]di2.center) -- ([xshift=-0.5\ml]d12.center);
    \draw[->] ([xshift=-0.5\ml]d12.center) -- node[auto=left] {$\alpha$} (d12);
    \draw[->, rounded corners=0.05\ml] (d12) -- ([yshift=-0.8\ml]d12.center) -- ([yshift=-0.8\ml]d12 -|p1) -- (p1 |- q1) -- node[auto=left] {$\overline{T} \cdot (T \cup \overline{T} \cup \{\a\})^{*} \cdot \$$} (q1);
\end{tikzpicture}
                \end{center}

        \textbullet\hphantom{a} \emph{The $n$ conditions before a letter in
                $T$ represent $2^{n} - 1$}.
                \begin{center}
                    \begin{tikzpicture}[>=stealth, pin distance = 0.2\ml]
    \node[state] at (0\ml,0\ml) {$p_{1}$};
    \node[state,pin=90:$d_{n}$] (dn1) at ([xshift=2.5\ml]p1) {};
    \node[state,pin=90:$d_{1}$] (d11) at ([xshift=1.5\ml]dn1) {};
    \node[state,pin=-90:$e_{n}$] (dn2) at ([yshift=-0.9\ml]dn1) {};
    \node[state,pin=-90:$e_{1}$] (d12) at (d11 |- dn2) {};
    \node[state] (q1) at ([xshift=3.5\ml]d12) {$q_{1}$};
    
    \draw[->] (p1) -- node[auto=left, pos=0.2] {$\$$} (dn1);
    \draw[->] (dn1) -- node[auto=left] {$\alpha$} ([xshift=0.5\ml]dn1.center);
    \draw[densely dotted] ([xshift=0.5\ml]dn1.center) -- ([xshift=-0.5\ml]d11.center);
    \draw[->] ([xshift=-0.5\ml]d11.center) -- node[auto=left] {$\alpha$} (d11);
    \draw[->, rounded corners=0.05\ml] (d11) -- ([yshift=-0.3\ml]d11.center) -- ([yshift=-0.3\ml]p1.center) -- (p1 |- dn2) -- node[auto=left] {$(T \cup \overline{T} \cup \{\a\})^{*}$} (dn2);
    \draw[->] (dn2) -- node[auto=left] {$\alpha$} ([xshift=0.5\ml]dn2.center);
    \draw[densely dotted] ([xshift=0.5\ml]dn2.center) -- ([xshift=-0.5\ml]d12.center);
    \draw[->] ([xshift=-0.5\ml]d12.center) -- node[auto=left] {$\alpha$} (d12);
    \draw[->] (d12) -- node[auto=left] {$T \cdot (T \cup \overline{T} \cup \{\a\})^{*} \cdot \$$} (q1);
\end{tikzpicture}
                \end{center}

        \textbullet\hphantom{a} \emph{The tiling does not begin with the tile type
                $t_{i}$}.

                \begin{center}
                    \begin{tikzpicture}[>=stealth, pin distance=0.2\ml]
    \node[state] (p1) at (0\ml,0\ml) {$p_{1}$};
    \node[state] (q1) at ([xshift=7\ml]p1) {$q_{1}$};

    \draw[->] (p1) -- node[auto=left] {$\$ \cdot \a^{n} \cdot (T \cup \overline{T} \setminus \{t_{i}\}) \cdot (T \cup \overline{T} \cup \{\a\})^{*} \cdot \$$} (q1);
\end{tikzpicture}

                \end{center}

        \textbullet\hphantom{a} \emph{The tiling does not end with the tile type
                $t_{f}$}.

                \begin{center}
                    \begin{tikzpicture}[>=stealth, pin distance=0.2\ml]
    \node[state] (p1) at (0\ml,0\ml) {$p_{1}$};
    \node[state] (q1) at ([xshift=6.5\ml]p1) {$q_{1}$};

    \draw[->] (p1) -- node[auto=left] {$\$ \cdot (T \cup \overline{T} \cup \{\a\})^{*} \cdot (T \cup \overline{T} \setminus \{\overline{t_{f}}\}) \cdot \$$} (q1);
\end{tikzpicture}

                \end{center}

        \textbullet\hphantom{a} \emph{Two adjacent tiles in the same row are not
            horizontally compatible}. The following gadget checks that
            the tile type $t_{2}$ is adjacent to the horizontally
            incompatible tile type $t_{1}$ in the same row. There is
            one such gadget for every pair $(t_{1}, t_{2})$ of
            horizontally incompatible tile types.

            \begin{center}
                \begin{tikzpicture}[>=stealth, pin distance=0.2\ml]
    \node[state] (p1) at (0\ml,0\ml) {$p_{1}$};
    \node[state] (p11) at ([xshift=6\ml]p1) {};
    \node[state] (q1) at ([xshift=6\ml, yshift=-0.8\ml]p1) {$q_{1}$};

    \draw[->] (p1) --node[auto=left] {$\$ \cdot (T \cup \overline{T}\cup \{\a\})^{*} \cdot t_{1} \cdot \a^{n} \cdot
    (t_{2} + \overline{t_{2}}) $} (p11);
    \draw[->, rounded corners=0.05\ml] (p11)-- ([yshift=-0.3\ml]p11.center) -- ([yshift=-0.3\ml]p11 -| p1) -- (p1 |- q1) -- node[auto=left] {$(T \cup \overline{T} \cup \{\a\})^{*} \cdot \$$}
    (q1);
\end{tikzpicture}
            \end{center}

        \textbullet\hphantom{a} \emph{Two adjacent tiles in the same column are not
            vertically compatible}. The gadget below checks that the
            tile type $t_{2}$ (seen just after node $2$) is adjacent
            to the vertically incompatible tile type $t_{1}$(seen just
            after node $1$) in the last column. The tiles are matched
            from the same column, since the data values seen just
            before $\overline{t_{1}}$ are same as the data values seen
            just before $\overline{t_{2}}$. The tiles are matched from
            adjacent rows, since $\overline{t_{1}}$ is the only letter
            from $\overline{T}$ allowed between the nodes $1$ and $2$.
            There is one such gadget for every pair $(t_{1}, t_{2})$
            of vertically incompatible tile types.
            \begin{center}
                \begin{tikzpicture}[>=stealth, pin distance=0.2\ml]
    \node[state] (p1) at (0\ml,0\ml) {$p_{1}$};
    \node[state, pin=90:$e_{n}$] (dn1) at ([xshift=3\ml]p1) {};
    \node[state, pin=90:$e_{1}$] (d11) at ([xshift=1.5\ml]dn1) {$1$};
    \node[state, pin=-90:$e_{n}$] (dn2) at ([yshift=-0.9\ml]dn1) {};
    \node[state, pin=-90:$e_{1}$] (d12) at (d11 |- dn2) {$2$};
    \node[state] (q1) at ([xshift=3.5\ml]d12) {$q_{1}$};

    \draw[->] (p1) -- node[auto=left] {$\$ \cdot (T \cup \overline{T} \cup \{\a\})^{*}$} (dn1);
    \draw[->] (dn1) -- node[auto=left] {$\alpha$} ([xshift=0.5\ml]dn1.center);
    \draw[densely dotted] ([xshift=0.5\ml]dn1.center) -- ([xshift=-0.5\ml]d11.center);
    \draw[->] ([xshift=-0.5\ml]d11.center) -- node[auto=left] {$\alpha$} (d11);
    \draw[->, rounded corners=0.05\ml] (d11) -- ([yshift=-0.3\ml]d11.center) -- ([yshift=-0.3\ml]p1.center) -- (p1 |- dn2) -- node[auto=left] {$\overline{t_{1}} \cdot (T \cup \{\a\})^{*}$} (dn2);
    \draw[->] (dn2) -- node[auto=left] {$\alpha$} ([xshift=0.5\ml]dn2.center);
    \draw[densely dotted] ([xshift=0.5\ml]dn2.center) -- ([xshift=-0.5\ml]d12.center);
    \draw[->] ([xshift=-0.5\ml]d12.center) -- node[auto=left] {$\alpha$} (d12);
    \draw[->] (d12) -- node[auto=left] {$\overline{t_{2}} \cdot (T \cup \overline{T} \cup \{\a\})^{*} \cdot \$$} (q1);
\end{tikzpicture}
            \end{center}
            We now highlight a subtle difference between this proof
            and that of \cite[Theorem 3.7]{KRV2014}. There, to check
            that the distance between two positions is exactly
            $2^{n}$, a gadget checks $n$ bits of the first position
            against $n$ bits of the second position. There the gadget
            is built using REM and hence it can check the bits
            individually for equality. Here, we need to build a
            similar gadget using data graphs.  If we did this check by
            matching each bit explicitly, the gadget would be
            exponentially larger. We avoid it by observing that we
            need not admit the exact data path encoding an illegal
            tiling, but only an \emph{automorphic copy}. Thus, if a
            data path has the data value $d_{1}$ just before
            $\overline{t_{1}}$ and $\overline{t_{2}}$, the above
            gadget will still catch it through an automorphism that
            interchanges $d_{1}$ and $e_{1}$.

            The gadget below checks that the tile type $t_{2}$
            is adjacent to the vertically incompatible tile type
            $t_{1}$ in a column other than the last one. There
            is one such gadget for every pair $(t_{1}, t_{2})$ of
            vertically incompatible tile types.

            \begin{center}
                \begin{tikzpicture}[>=stealth, pin distance=0.2\ml]
    \node[state] (p1) at (0\ml,0\ml) {$p_{1}$};
    \node[state, pin=90:$d_{n}$] (dn1) at ([xshift=3.6\ml]p1) {};
    \node[state, pin=90:$d_{1}$] (d11) at ([xshift=1.5\ml]dn1) {};
    \node[state, pin=-45:$d_{n}$] (dn2) at ([yshift=-0.9\ml]dn1) {};
    \node[state, pin=-45:$d_{1}$] (d12) at (d11 |- dn2) {};
    \node[state] (q1) at ([yshift=-1.3\ml]dn2) {$q_{1}$};

    \draw[->] (p1) -- node[auto=left] {$\$. (T \cup \overline{T} \cup \{\a\})^{*}$} (dn1);
    \draw[->] (dn1) -- node[auto=left] {$\alpha$} ([xshift=0.5\ml]dn1.center);
    \draw[densely dotted] ([xshift=0.5\ml]dn1.center) -- ([xshift=-0.5\ml]d11.center);
    \draw[->] ([xshift=-0.5\ml]d11.center) -- node[auto=left] {$\alpha$} (d11);
    \draw[->, rounded corners = 0.05\ml] (d11) -- ([yshift=-0.3\ml]d11.center) -- ([xshift=-1\ml,yshift=-0.3\ml]p1.center)--
    ([xshift=-1\ml]p1 |- dn2) -- node[auto=left] {$t_{1} \cdot (T \cup \{\a\})^{*} \cdot \overline{T} \cdot (T \cup \{\a\})^{*}$} (dn2);
    \draw[->] (dn2) -- node[auto=left] {$\alpha$} ([xshift=0.5\ml]dn2.center);
    \draw[densely dotted] ([xshift=0.5\ml]dn2.center) -- ([xshift=-0.5\ml]d12.center);
    \draw[->] ([xshift=-0.5\ml]d12.center) -- node[auto=left] {$\alpha$} (d12);
    \draw[->, rounded corners=0.05\ml] (d12) -- ([yshift=-0.7\ml]d12.center) -- ([xshift=-1\ml,yshift=-0.7\ml]d12 -| p1) -- ([xshift=-1\ml]p1 |- q1) -- node[auto=left] {$t_{2} \cdot (T \cup \overline{T} \cup \{\a\})^{*} \cdot \$$} (q1);
\end{tikzpicture}
            \end{center}
\end{proof}

%%% Local Variables: 
%%% mode: latex
%%% TeX-master: "main"
%%% End: 

\newcommand{\bin}{\textsf{Bin}}
\newcommand{\defn}{\textsf{Def}}
\newcommand{\comp}{\circ}
% REE
\section{Queries using Regular expressions with equality}

We study the $\rdpqe$-definability problem in this section. These are
queries using REE. As seen before, REE have the additional $e_{=}$ and
$e_{\neq}$ constructs on top of standard regular expressions. In
Example~\ref{eg:queries-REM-REE}, we have seen that they are less
powerful than REM in defining relations. However, due to $e_{=}$ and
$e_{\neq}$, they can define more relations than RPQs.

In Section 3, we have shown that $\rdpqm$-definability is
$\EXPSPACE$-complete.
RPQ-definability is known to be
$\PSPACE$-complete~\cite{ANS2013}. We will now prove that
$\rdpqe$-definability is $\PSPACE$-complete as well.

The main idea is the following observation. Suppose we have an
expression $e_1 \cdot e_2$. If $e_1$ and $e_2$ are REMs, it is possible
that there is a register getting bound in $e_1$ by $\downarrow r$, and
used in a condition in $e_2$. So one cannot reason about $e_1\cdot
e_2$ by independently reasoning about $e_1$ and $e_2$. Such a
situation does not arise with REE. The relation defined by $e_1 \cdot
e_2$ can in fact be obtained as a composition of the relations defined
by $e_1$ and $e_2$. This makes it possible to solve the problem in \PSPACE{}.

For this section, fix a data graph $G = (V, E, \rho)$ over edge
alphabet $\Sigma$.
Let $\bin = V \times V$ be the set of binary relations over $V$. Our
first goal would be to define operators over the set $\bin$ and to
generate relations in a hierarchical manner by making use of
these operators in a certain way.

%%% \todo{restriction instead of projection?}

% Some of the relations in $\bin$ are REE-definable and some of them
% are not. Our first goal would be to divide the REE-definable
% relations into different levels based on the number of nested
% equality checks needed for an REE to define it.

\begin{definition}
  Given two relations $S_1, S_2 \in \bin$, we define the following
  operators:
  \begin{align*}
    S_1 + S_2 & ~=~ \{ \struct{u,v}~|~ \struct{u,v} \in S_1 \text{ or
    } \struct{u,v} \in
    S_2 \}     \\
    S_1 \comp S_2 & ~=~ \{ \struct{u,v}~|~ \exists z: \struct{u,z} \in
    S_1 \text{
      and } \struct{z,v} \in S_2 \} \\
    S^= & ~=~ \{ \struct{u,v} \in S~|~ \rho(u) = \rho(v) \} \\
    S^{\neq} & ~=~ \{ \struct{u,v} \in S~|~ \rho(u) \neq \rho(v) \}
  \end{align*}
\end{definition}
The $+$ and $\comp$ are the \emph{union} and \emph{composition}
operators. We call $S^=$ and $S^{\neq}$ as the \emph{$=$-restriction}
and the \emph{$\neq$-restriction} respectively.  From the definitions,
it is easy to see that $+$ is commutative and associative.
Furthermore, the operator $\circ$ is associative and distributes over
$+$.

For an REE $e$, let us write $S_e$ for the binary relation defined by
it: $ S_e = \{ \struct{u,v} \in V \times V~|~ \exists w \in \Ll(e)
\text{ s.t. } u \xra{w} v \}$.
% \begin{align*}
%   S_e = \{ \struct{u,v} \in V \times V~|~ \exists w \in \Ll(e)
%   \text{ s.t. } u \xra{w} v \}
% \end{align*}

% Constructing all definable sets We now divide the binary relations
% into different levels based on the operators defined above.

\begin{definition}
  Consider the set $\bin$ of binary relations. We will define a
  sequence $L_0, L_1, \dots$ of subsets of $\bin$, called
  \emph{levels}, as follows:
  \begin{align*}
    B_0 =~ & \{S_\e\} ~\cup ~\{S_a~|~a \in \Sigma \} \\
    L_0 =~ & \text{closure of $B_0$ under $+$ and $\comp$} \\
    \text{for $i \ge 1$, } B_i =~ & \{ S^=~|~ S \in L_{i-1} \} \cup
    \{S^{\neq}~|~ S \in L_{i-1} \}
    \cup L_{i-1} \\
    L_i =~ & \text{closure of $B_i$ under $+$ and $\comp$}
  \end{align*}
  
\end{definition}

Intuitively, the set $B_{0}$ consists of those relations that are
defined using the atomic expressions $\e$ and $a$. The set $L_0$
closes these relations under union and composition. The base sets
$B_1$ of the next level are formed by adding to $L_0$ the $=$ and
$\neq$-restrictions of relations in $L_0$. These are now closed under
union and composition to get the set $L_1$. This process continues. Of
course, this cannot continue beyond $2^{n^2}$ steps, which is the
total number of relations in $\bin$.
%We will now observe some
%properties of this hierarchy of relations.
%
%
%To obtain new relations in $L_i$, the $+$ and $\comp$ operators can
%act in many different ways on the relations in $B_i$. The following
%lemma gives a simple structure to the way union and composition can be
%used to yield new sets.  Let $n$ denote the number of nodes in $G$.
%
% \begin{lemma}\label{lem:form-of-relation-union-comp}
%   For each $i \ge 1$, every relation $S \in L_i$ which is not in $B_i$
%   is of the following form:
%   \begin{align*}
%     S & = T_1 + T_2 + \dots + T_m && \text{where } m \le n^2 \\
%     \text{ and each } \quad T_j & = (R_1 \comp R_2 \comp \dots \comp
%     R_p) &&
%     \text{s.t. } p \le 2^{n^2}, \text{ and } \\
%     &&& R_k \in B_i \text{ for $k \le p$. }
%   \end{align*}
% \end{lemma}
% $L_{n^2} = L_{n^2 + 1}$.
The next lemma further restricts it to $n^{2}$ steps.
\begin{lemma}
  \label{lem:polynomial-height}
  For all $j \ge n^2$, $L_j = L_{n^2}$.
\end{lemma}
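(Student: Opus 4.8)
The plan is to show that the sequence of levels $L_0 \subseteq L_1 \subseteq \cdots$ stabilizes within $n^2$ steps, even though a priori it could only be bounded by $2^{n^2}$ (the total number of binary relations on $V$). The key point is that the chain is monotone and, more importantly, that $L_j = L_{j+1}$ implies $L_{j+1} = L_{j+2}$, so once the sequence stalls it stays stalled; combined with a counting argument on how much can be gained at each step, this gives the $n^2$ bound.

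First I would establish monotonicity: $L_{i-1} \subseteq B_i \subseteq L_i$ directly from the definition, since $B_i$ explicitly contains $L_{i-1}$ and $L_i$ is a superset (closure) of $B_i$. Next I would prove the crucial "no new growth" step: if $L_{i} = L_{i-1}$, then $B_{i+1} = B_i$ and hence $L_{i+1} = L_i$. This holds because $B_{i+1}$ is built from $L_i$ by the same recipe ($=$- and $\neq$-restrictions, plus $L_i$ itself) that built $B_i$ from $L_{i-1}$, so equality of the inputs forces equality of the outputs, and then taking closures preserves the equality. This shows the chain $L_0 \subseteq L_1 \subseteq \cdots$ is strictly increasing until it stabilizes, after which it is constant forever.

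The remaining ingredient is to bound the number of strict inclusions by $n^2$. Here I would argue that $L_i$ is determined by the set of \emph{all} relations it contains that are not already in $L_{i-1}$, and the genuinely new relations added at level $i$ (before taking closure) come only from $=$- and $\neq$-restrictions of relations in $L_{i-1}$. The restriction operators $S \mapsto S^=$ and $S \mapsto S^{\ne}$ are idempotent-like and only ever \emph{remove} pairs: every relation obtainable by any sequence of restrictions from a fixed $S$ is of the form $S \cap R$ where $R$ ranges over the Boolean combinations of the fixed "equal-data-value" relation $D = \{\struct{u,v} : \rho(u) = \rho(v)\}$ and its complement — but $D$ and its complement partition $V\times V$, so there are essentially only two such $R$'s, and iterating restrictions does not produce anything new after one application. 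The honest way to get the $n^2$ bound is: each strict step $L_{i-1} \subsetneq L_i$ must add at least one new relation, and one shows by induction that the relations appearing at level $i$ but not level $i-1$ have a "restriction depth" at most $i$ that is reflected in a structural parameter bounded by $n^2$ — concretely, any relation in $\bigcup_i L_i$ is a finite union of compositions of relations each of which is a restriction (possibly nested) of a relation in $L_0$, and nested restrictions collapse, so after finitely many — in fact at most a polynomial number tied to the number of distinct node pairs, i.e. $|V|^2 = n^2$ — rounds nothing new can appear.

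The main obstacle I anticipate is pinning down precisely why $n^2$ rather than some other polynomial is the right bound — i.e., exhibiting the monovariant that strictly decreases (or increases) at each strict step and is bounded by $n^2$. The cleanest route is probably to observe that a relation survives to level $i$ but not $i-1$ only by virtue of a new restriction being applicable, and the "new restrictions available" are controlled by which pairs in $V \times V$ have been separated so far; since there are only $n^2$ pairs, and each strict level must resolve the status of at least one fresh pair (or else it collapses by the no-growth lemma), the process terminates in at most $n^2$ strict steps. Once $L_j = L_{j-1}$ for some $j \le n^2$, the no-growth lemma gives $L_{j'} = L_{n^2}$ for all $j' \ge n^2$, completing the proof. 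I would close by noting this is exactly what is needed to keep the subsequent \PSPACE{} algorithm from exploring an exponentially tall hierarchy.
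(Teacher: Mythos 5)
Your first two steps are fine but only reprove what the paper already notes before the lemma: monotonicity ($L_{i-1} \subseteq B_i \subseteq L_i$) and the ``no new growth'' observation (if $L_i = L_{i-1}$ then all later levels coincide) give stabilization within $2^{n^2}$ levels, since each strict step adds at least one of the at most $2^{n^2}$ relations. The entire content of the lemma is the improvement from $2^{n^2}$ to $n^2$, and that part is missing. The two ideas you offer for it do not hold up. First, ``nested restrictions collapse'' is true only when the restrictions are applied directly on top of one another; in the hierarchy they are interleaved with composition and union, and $(R_1 \circ R_2)^{=}$ is in general not a restriction of any relation of $L_0$. Your stronger claim that every relation in $\bigcup_i L_i$ is a union of compositions of (possibly nested, hence collapsed) restrictions of $L_0$ relations would mean that the hierarchy always stabilizes already at $L_1$, which is false: on a suitable data graph one needs, e.g., the depth-two check $(a\cdot (b)_{=}\cdot c)_{=}$ to isolate a pair, and no union of compositions of single restrictions of $L_0$ relations defines it, because a depth-one expression can place only non-overlapping endpoint-equality blocks along a path. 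Second, ``each strict level must resolve the status of at least one fresh pair'' names no invariant at all: you never define which pair is resolved or why a level that resolves none must equal the previous level; the no-growth lemma does not supply this. You flag this yourself as the main obstacle, so the proposal is an honest outline with the key step absent.

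The paper's argument supplies exactly the missing monovariant, and it is a cardinality argument on the relations themselves. To every relation newly added at level $i$ it associates a relation of $L_{i-1}$ with at least one extra pair: for a newly added $S^{=}$ or $S^{\neq}$ in $B_i$ the associated relation is $S \in L_{i-1}$ (the restriction is added only if it differs from $S$, hence is a strict subset); for a newly added union of compositions $T$ of relations from $B_i$, one replaces each newly added restricted relation occurring in it by its unrestricted parent to get $T'$, and then $T \subseteq T'$, $T' \in L_{i-1}$ by closure under $+$ and $\circ$, and $T$ is new only if $T \neq T'$, so again the associated relation is strictly larger. Since no relation has more than $n^2$ pairs, new relations can keep appearing for at most $n^2$ levels. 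Some such ``newly added relations are strict subsets of relations one level down'' argument is what your outline needs and does not contain.
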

\begin{proof}
  To every newly added relation in $L_i$, we will associate a relation
  in $L_{i-1}$ having at least one extra pair of nodes. As the new
  relations that are added become strictly smaller each time we go up
  a level, no new relations can be added beyond $L_{n^2}$.

  A newly added relation $S$ in $L_i$ is either in $B_i$, or formed by
  union of compositions of relations from $B_i$:
   \begin{align}
    S & = T_1 + T_2 + \dots + T_m && \text{where } m \le n^2 \label{eq:1} \\
    \text{ and each } T_j & = (R_1 \comp R_2 \comp \dots \comp
    R_p) &&
    \text{s.t. } p \le 2^{n^2}, \text{ and } \nonumber\\
    &&& R_k \in B_i \text{ for $k \le p$. } \nonumber
  \end{align}
  The bounds on $m$ and $p$ above follow from the fact that each
  relation can have at most $n^2$ pairs.
  For the relations $S^=$ or $S^{\neq}$ added in $B_i$ let us
  associate the set $S \in L_{i-1}$. They are added only if
  they are strict subsets of $S$. This means there exists a pair
  $\struct{u,v} \in S$
  that does not belong to $S^=$ and another pair $\struct{u',v'} \in S$ that
  does not belong to $S^{\neq}$. Hence the cardinalities of $S^=$ and
  $S^{\neq}$ are strictly lesser than $S$.

  Let $T$ be a relation obtained by union of compositions of relations
  in $B_i$. The relation $T$ is new only if some of the relations in
  the underlying composition according to
  \eqref{eq:1} are the new basic sets
  $S^=$ and $S^{\neq}$. Consider the relation $T'$ where each of these
  $S^=$ and $S^{\neq}$ is replaced by the corresponding relation
  $S$. Clearly $T \incl T'$. Moreover $T$ is added only if it is
  different from $T'$. This shows that $T'$ has at least one pair more
  than $T$.
\end{proof}

% Let us now relate the above hierarchy of relations obtained by the
% action of operators to the relations defined by REE.

The motive behind defining these operations and the hierarchy of
relations is that this procedure resembles the way REEs are constructed
from its grammar.

% The crux is the following lemma.  The following lemma states an
% important observation that the above operators are compatible with
% the operations used in the construction of REE. The proof of the
% lemma is straightforward from the definitions.

% Main observation that composition is defined by composition of RE,
% same for union
\begin{lemma}\label{lem:ree-comp}
  For every two REE $e$ and $f$, we have: $S_e + S_f = S_{e+f}$, $S_e
  \comp S_f = S_{ef}$, $S_{e}^{=} = S_{e_=}$ and $S_{e}^{\neq} =
  S_{e_{\neq}}$.
\end{lemma}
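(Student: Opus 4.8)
The plan is to verify the four identities one at a time; each reduces to unwinding the semantics of REE together with the definition of $S_{(\cdot)}$ and the corresponding operator on $\bin$. For the union identity, recall that $\Ll(e+f) = \Ll(e) \cup \Ll(f)$, so $\struct{u,v} \in S_{e+f}$ iff some $w \in \Ll(e)$ connects $u$ to $v$ or some $w \in \Ll(f)$ connects $u$ to $v$, which is precisely the condition $\struct{u,v} \in S_e + S_f$.

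For the composition identity, recall $\Ll(ef) = \Ll(e) \cdot \Ll(f)$, where $w_1 \cdot w_2$ is defined precisely when the last data value of $w_1$ equals the first data value of $w_2$. If $\struct{u,v} \in S_{ef}$, take $w = w_1 w_2 \in \Ll(ef)$ with $w_1 \in \Ll(e)$, $w_2 \in \Ll(f)$, witnessed by a path $\xi$ from $u$ to $v$ with $w_\xi = w$; letting $z$ be the node of $\xi$ at which $w_1$ ends and $w_2$ begins, we get $u \xra{w_1} z$ and $z \xra{w_2} v$, hence $\struct{u,z} \in S_e$ and $\struct{z,v} \in S_f$, so $\struct{u,v} \in S_e \comp S_f$. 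Conversely, from $\struct{u,z} \in S_e$ and $\struct{z,v} \in S_f$ choose $w_1 \in \Ll(e)$ with $u \xra{w_1} z$ and $w_2 \in \Ll(f)$ with $z \xra{w_2} v$; since $w_1$ ends with $\rho(z)$ and $w_2$ starts with $\rho(z)$, the concatenation $w_1 \cdot w_2$ is well defined, belongs to $\Ll(ef)$, and connects $u$ to $v$, so $\struct{u,v} \in S_{ef}$.

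For the two restriction identities, the key observation is that whenever $u \xra{w} v$ the data path $w$ begins with $\rho(u)$ and ends with $\rho(v)$, directly from the definition of the data path associated to a path. Hence, for $w \in \Ll(e)$ connecting $u$ to $v$, membership of $w$ in $\Ll(e_=)$ is equivalent to $\rho(u) = \rho(v)$, and membership in $\Ll(e_{\neq})$ to $\rho(u) \neq \rho(v)$. Therefore $\struct{u,v} \in S_{e_=}$ iff $\struct{u,v} \in S_e$ and $\rho(u) = \rho(v)$, which is the definition of $S_e^=$; symmetrically $S_{e_{\neq}} = S_e^{\neq}$.

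I do not expect any genuine obstacle here: all four statements are essentially bookkeeping, and the lemma is exactly what makes the hierarchy $L_0, L_1, \dots$ mirror the inductive construction of REEs. The only place needing a little care is the composition case, where in the forward direction one must split the connecting path at the common node so that the resulting data paths have matching endpoints (making the concatenation in $\Dwords$ legal), and in the backward direction one must note that the two relation pairs share the node $z$, so that the end of the first data path and the start of the second both equal $\rho(z)$ and the concatenation is again legal.
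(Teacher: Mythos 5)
Your proof is correct and matches the paper's intent: the paper itself gives no details for this lemma, stating only that it is "quite straightforward from the definitions," and your argument is precisely that straightforward unwinding (with the right care taken at the composition step about splitting the path at the shared node $z$ and matching the data value $\rho(z)$). Nothing further is needed.
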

The proof of the above lemma is quite straightforward from the
definitions. However, the lemma is significant because it allows
to reason about $S_{ef}$ by independently reasoning about
$S_{e}$ and $S_{f}$. As mentioned before, this property is not true for
REMs. The above lemma can be used to show the important property that
all REE-definable relations can be generated by this hierarchical
construction that repeatedly applies the $=$ and $\neq$ restrictions
and closes under $+$ and $\comp$.

% To show: A relation $S$ is REE-definable iff $S$ belongs to
% $L_{n^2}$.
\begin{lemma}\label{lem:REE-definable-in-hierarchy}
  A relation is $\rdpqe$-definable iff it belongs to level $L_{n^2}$.
\end{lemma}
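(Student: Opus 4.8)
\medskip

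The plan is to prove both directions by induction, exploiting Lemma~\ref{lem:ree-comp} as the bridge between the syntactic construction of REEs and the semantic hierarchy $B_0, L_0, B_1, L_1, \dots$. First I would establish the easier inclusion: every relation in $L_{n^2}$ is $\rdpqe$-definable. Since $L_{n^2} = L_j$ for all $j \ge n^2$ by Lemma~\ref{lem:polynomial-height}, it suffices to show that every relation appearing in any $L_i$ or $B_i$ is of the form $S_e$ for some REE $e$. This goes by induction on $i$ and, within each level, by induction on the structure of the closure under $+$ and $\comp$. The base case $B_0$ is immediate: $S_\e$ is witnessed by the REE $\e$ and each $S_a$ by the REE $a$. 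For the inductive step, if $S = S_{e_1} + S_{e_2}$ then $S = S_{e_1 + e_2}$, if $S = S_{e_1} \comp S_{e_2}$ then $S = S_{e_1 e_2}$, and if $S = S_e^{=}$ (resp.\ $S_e^{\neq}$) then $S = S_{e_=}$ (resp.\ $S_{e_{\neq}}$) --- all four facts being exactly Lemma~\ref{lem:ree-comp}. So the closure operations on relations track the grammar productions of REE faithfully, and every relation reachable in the hierarchy is REE-definable, hence $\rdpqe$-definable.

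\medskip

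For the converse, suppose $S$ is $\rdpqe$-definable, say $S = S_e$ for some REE $e$. I would prove by structural induction on $e$ that $S_e \in L_{n^2}$. The atomic cases $e = \e$ and $e = a$ give $S_e \in B_0 \subseteq L_0 \subseteq L_{n^2}$. For $e = e_1 + e_2$, by induction $S_{e_1}, S_{e_2} \in L_{n^2}$; since $L_{n^2}$ is closed under $+$, and $S_{e_1+e_2} = S_{e_1} + S_{e_2}$ by Lemma~\ref{lem:ree-comp}, we get $S_e \in L_{n^2}$. The cases $e = e_1 \cdot e_2$ and $e = e_1^+$ are handled similarly using closure of $L_{n^2}$ under $\comp$ (note $S_{e_1^+} = S_{e_1} + S_{e_1} \comp S_{e_1} + \cdots$, a finite union since there are only finitely many relations, all lying in the $\comp$-closure of $\{S_{e_1}\}$). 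The interesting cases are $e = f_=$ and $e = f_{\neq}$: by induction $S_f \in L_{n^2}$, and by Lemma~\ref{lem:ree-comp} $S_e = S_f^{=}$ (resp.\ $S_f^{\neq}$). Now $S_f^{=}$ lies in $B_{n^2+1}$, hence in $L_{n^2+1}$ --- but by Lemma~\ref{lem:polynomial-height}, $L_{n^2+1} = L_{n^2}$, so $S_e \in L_{n^2}$ after all.

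\medskip

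The main obstacle I anticipate is precisely this last point: a naive structural induction shows only that $S_e$ lands in $L_{h}$ where $h$ is (essentially) the nesting depth of $=/\neq$ restrictions in $e$, and this depth is not a priori bounded by $n^2$ --- an adversarial REE could nest these operators arbitrarily deeply. The fix is to invoke the stabilization Lemma~\ref{lem:polynomial-height} at every application of a restriction operator: since $L_j = L_{n^2}$ for all $j \ge n^2$, once a relation is in $L_{n^2}$, applying $=/\neq$-restriction and re-closing keeps it in $L_{n^2}$. So the induction hypothesis should be stated directly as ``$S_e \in L_{n^2}$'' rather than ``$S_e \in L_{d(e)}$'', and each inductive step must be justified using that $L_{n^2}$ is already closed under all three operations ($+$, $\comp$, and the restrictions, the last via $B_{n^2+1} \subseteq L_{n^2+1} = L_{n^2}$). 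With that framing the argument is routine; the only care needed is to make sure the semantic identities of Lemma~\ref{lem:ree-comp} are applied in the right direction in each case.
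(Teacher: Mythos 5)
Your proposal is correct and follows essentially the same route as the paper: structural induction on the REE using Lemma~\ref{lem:ree-comp}, with Lemma~\ref{lem:polynomial-height} invoked to absorb each $=/\neq$-restriction back from $L_{n^2+1}$ into $L_{n^2}$, and an induction over levels for the reverse inclusion. The only difference is cosmetic: you spell out the $e^+$ case (as a stabilizing finite union of composition powers), which the paper leaves implicit.
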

\begin{proof}
  We prove the left-to-right direction by an induction on the
  structure of REE. Relations $S_\e$ and $S_a$ definable by the basic
  REE $\e$ and $a$ already belong to $L_0$, and hence belong to
  $L_{n^2}$.  For REE $e$ and $f$, let $S_e$ and $S_f$ belong to
  $L_{n^2}$.
  
  By Lemma~\ref{lem:ree-comp}, we have $S_{e+f} = S_e +
  S_f$ and $S_{ef} = S_e \comp S_f$. As $L_{n^2}$ is closed under $+$
  and $\comp$, the relations $S_{e+f}$ and $S_{ef}$ belong to
  $L_{n^2}$ as well. The relation $S_{e_=}$ equals $S_e^=$. Since
  $S_e$ belongs to $L_{n^2}$, the relation $S_e^=$ would be present in
  $B_{n^2+1}$ by definition and hence in the level $L_{n^2+1}$. But by
  Lemma~\ref{lem:polynomial-height}, this means that $S_e^=$ belongs
  to $L_{n^2}$ as well. Similar argument holds for $S_e^{\neq}$.

  The right-to-left direction can be proved by an easy induction on
  the level number, once again using Lemma~\ref{lem:ree-comp}.
\end{proof}

Let us define the \emph{height} of a relation $S$ to be the least $i$
such that $S \in L_i$. The fact that the height of an $\rdpqe$
definable relation is polynomially bounded can be used to give a
$\PSPACE$ upper bound.

% PSPACE algorithm

\begin{lemma}
  \label{prop:ree-upper-bound}
  $\rdpqe$ definability problem is in \PSPACE.
\end{lemma}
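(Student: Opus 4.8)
The plan is to read \lemref{lem:REE-definable-in-hierarchy} algorithmically: $S$ is $\rdpqe$-definable iff $S \in L_{n^2}$, and by \lemref{lem:polynomial-height} this is the top of a hierarchy of height only $n^2$, so it suffices to search for a ``generation'' of $S$ that peels off at most $n^2$ levels, at each level closing the previous one (together with its $=$- and $\neq$-restrictions) under $+$ and $\comp$. The obstacle is that even one level introduces objects that are too large to write down: distributing $\comp$ over $+$, a relation in $L_i$ is a union of at most $n^2$ compositions of elements of $B_i$ (at most $n^2$ summands because a relation has at most $n^2$ pairs), but a single composition may have length up to $2^{n^2}$, as in the proof of \lemref{lem:polynomial-height}. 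The way out is that we never store a composition: we build its running product one factor at a time, keeping only that product (a binary relation over $V$, i.e. $n^2$ bits) and a counter bounding the number of factors. Since each level refers only to the level below it, the resulting recursion has depth $n^2$ with polynomial-size frames, so the search runs in nondeterministic polynomial space, and Savitch's theorem gives \PSPACE.

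Concretely I would describe a nondeterministic recursive procedure $\textsc{Member}(i,U)$ that accepts iff $U \in L_i$. On input $(i,U)$ it guesses $m \le n^2$ relations $T_1,\dots,T_m$, verifies each $T_j \subseteq U$, and accepts iff $\bigcup_j T_j = U$ (the accumulator $C$ and the target $U$ being the only relations kept between the $T_j$'s). To verify that a given $T$ is a composition of elements of $B_i$, it maintains a running product $P$: guess and verify a first factor $R \in B_i$, set $P := R$, then repeatedly guess and verify a next factor $R \in B_i$ and set $P := P \comp R$, using a counter to stop after at most $2^{n^2}$ factors (a repeated value of $P$ can be deleted, as in \lemref{lem:polynomial-height}); finally check $P = T$. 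Verifying $R \in B_i$ is: for $i = 0$, check $R$ against the constantly many, precomputable relations $S_\e$ and $S_a$; for $i \ge 1$, guess whether $R$ equals some $R' \in L_{i-1}$, or $(R')^{=}$, or $(R')^{\neq}$ for some $R' \in L_{i-1}$, call $\textsc{Member}(i-1,R')$, and apply the chosen restriction (which uses only $\rho$). The top-level call is $\textsc{Member}(n^2,S)$; correctness is immediate from \lemref{lem:REE-definable-in-hierarchy}, \lemref{lem:polynomial-height} and (for the elementary identities $S_e + S_f = S_{e+f}$ and the like) \lemref{lem:ree-comp}.

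For the space bound, each activation of $\textsc{Member}$ keeps only a constant number of binary relations over $V$ — the target $U$, the accumulator $C$, the current $T_j$, the running product $P$, the current factor $R$, and when a restriction is taken the underlying $R'$ — together with counters of $\Oh(n^2)$ bits, and composing two relations or taking an $=$/$\neq$-restriction is polynomial time. Since a call at level $i$ only spawns calls at level $i-1$, the recursion stack has depth at most $n^2+1$, so the whole computation uses $\Oh(n^4)$ space; a global step counter of $\Oh(n^4)$ bits forces halting. Thus $\rdpqe$-definability is decided in nondeterministic polynomial space, and Savitch's theorem puts it in \PSPACE. The delicate point, I expect, is not correctness — a direct unwinding of the hierarchy and \lemref{lem:ree-comp} — but this space accounting: that polynomially many summands suffice at each level (because $|U| \le n^2$), that the exponentially long compositions are streamed through a polynomially bounded counter rather than stored, and that the level recursion contributes only $\Oh(n^2)$ to the stack depth because it strictly decreases the level.
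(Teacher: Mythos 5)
Your proposal is correct and follows essentially the same route as the paper: reduce to membership of $S$ in $L_{n^2}$ via Lemma~\ref{lem:REE-definable-in-hierarchy}, use the normal form of Lemma~\ref{lem:polynomial-height} with at most $n^2$ summands, stream the (possibly exponentially long) compositions factor by factor with an $\Oh(n^2)$-bit counter instead of storing them, and bound the recursion/guessing depth by the polynomial height of the hierarchy to stay in nondeterministic polynomial space, hence \PSPACE{}. Your recursive $\textsc{Member}$ procedure is just a reformulation of the paper's guessed path through the computation tree with the child $(R_2 \comp \cdots \comp R_p,\, p-1)$ and recorded heights.
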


\begin{proof}
  \begin{figure}[t]
  \centering
    \begin{tikzpicture}[vertex/.style={rectangle, inner sep=2pt, fill=gray},scale=0.8]
      %\draw [help lines] (0,0) grid (8,7);

      \node (s) at (4, 7) {\scriptsize $(S,h)$};
      \node (s1) at (4, 6.3) {\scriptsize $+$};
      \node (t1) at (1, 5.5) {\scriptsize $(T_1, h_1)$};
      \node  (ti) at (4, 5.5) {\scriptsize $(T_i, h_i)$};
      \node  (tm) at (7, 5.5) {\scriptsize $(T_m, h_m)$};
      \node (tic) at (4, 4.7) {$\comp$};
      \node (r1) at (3,4) {\scriptsize $(R_1, h_{i_1})$};
      \node (rp) at (6,4) {\scriptsize $(R_2 \comp \dots \comp R_p,
        h_{i_2}, p-1)$};
      \node (eq) at (3,3.3) {$=$};
      \node (x) at (3, 2.5) {\scriptsize $(X, h_{i_1} - 1)$};
      \node (b) at (3,0.6) {\scriptsize $(S_a, 0)$};

      \node at (2.5, 5.5) {$\dots$};
      \node at (5.5, 5.5) {$\dots$};

      \begin{scope}[thin]
      \draw [thin] (s) -- (s1);
      \draw [thin] (s1) -- (t1);
      \draw [thin] (s1) -- (ti);
      \draw [thin] (s1) -- (tm);
      \draw [thin] (ti) -- (tic);
      \draw (tic) -- (r1);
      \draw (tic) -- (rp);
      \draw (r1) -- (eq);
      \draw (eq) -- (x);
      \draw [decorate, decoration=snake] (x) -- (3,1);
      \end{scope}
    \end{tikzpicture}
    \caption{A part of execution of the algorithm for $\rdpqe$ definability}
\label{fig:ree-proof}
  \end{figure}
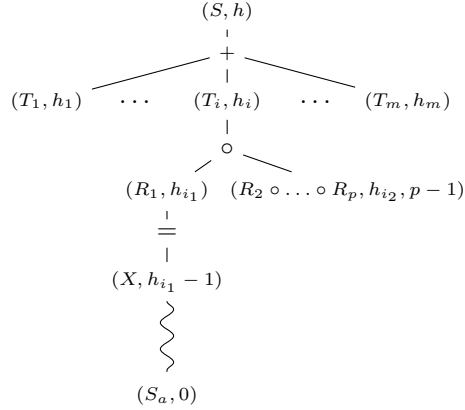

%%% Local Variables: 
%%% mode: latex
%%% TeX-master: "main"
%%% End: 

  The inputs are a data graph $G$ and a binary relation $S$. We will
  describe a non-deterministic algorithm that decides in polynomial
  space if $S$ is $\rdpqe$-definable. Due to
  Lemma~\ref{lem:REE-definable-in-hierarchy}, checking if $S$ is
  $\rdpqe$ definable is equivalent to checking if $S$ belongs to
  $L_{n^2}$. We will now explain how the algorithm can check the
  membership of $S$ in $L_{n^2}$.
  
  Every relation in $L_{n^2}$ has appeared due to a sequence of
  computations starting from the basic relations $S_\e$ and $S_a$ for
  each $a \in \S$. Thanks to
  \eqref{eq:1}, there is a specific
  structure to this computation that allows to look at it as a
  ``computation tree''. Nodes in this tree are relations in $L_{n^2}$
  and the children of a node are different relations in the same or
  smaller level that are used to construct the parent relation through
  the operations $+$, $\comp$, or the $=$ and $\neq$ restrictions. The
  leaves are the basic sets $S_\e$ and $S_a$. % The height
  % of this tree is polynomial. But the width could be exponential as
  % Lemma~\ref{lem:form-of-relation-union-comp} says that up to
  % $2^{n^2}$
  % relations could be composed to give a new relation.

  A naive approach would be to guess this entire computation tree and
  check two things: are leaves of the form $S_a$ or $S_\e$, and is
  each node either the union, composition or one of the (in)equality
  restrictions of its children. These checks can be done in polynomial
  time. However, we cannot hope to maintain the entire tree in
  \PSPACE{} as compositions can have exponentially many children
  \eqref{eq:1}.

  Instead of guessing the entire tree, the algorithm guesses, in a
  certain way, a path in the tree along with the children of each node
  in this path: if the node is a union of $T_1, \dots, T_m$, all these
  relations are guessed as children (there are only polynomially
  many); if a node is a composition of $R_1, \dots, R_p$, two children
  are guessed - the relation $R_1$ and the composed relation $R_2
  \comp \cdots \comp R_p$. The former is a proper child in the
  computation tree, and the latter is a different relation which needs
  to be further decomposed to get the actual children in the
  computation tree. The algorithm also maintains the number the
  decompositions left: so the child would be $(R_2 \comp \cdots \comp
  R_p, p-1)$. Storing $p-1$ needs only polynomially many bits. This is
  the main idea.  Additionally, the algorithm maintains the height of
  each node (see Figure~\ref{fig:ree-proof}). Each time an $=$ or
  $\neq$ restriction happens, the height reduces by one. At the leaf
  level, it is checked if the relation is one of $S_\e$ or
  $S_a$. Nodes whose heights have been certified can be removed: that
  is, leaf nodes can be removed after the basic check, and non-leaf
  nodes are removed once all its children are checked.  Hence at any
  point of time, the algorithm maintains some structure like
  Figure~\ref{fig:ree-proof}. Since the height is polynomial, this can
  be done in \PSPACE{}.
\end{proof}

The PSPACE{}-hardness of RPQ-definability~\cite{ANS2013} can be easily
extended to give the same lower bound for $\rdpqe$-definability.

\begin{theorem}\label{thm:ree-pspace-complete}
  $\rdpqe$-definability is $\PSPACE$-complete.
\end{theorem}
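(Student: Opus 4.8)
Since Lemma~\ref{prop:ree-upper-bound} already gives the \PSPACE{} upper bound, the plan is to establish \PSPACE{}-hardness by reducing from RPQ-definability, which is \PSPACE{}-hard by~\cite{ANS2013}. I may assume the given RPQ-definability instance has a nonempty relation (this holds for the instances produced by the reduction of~\cite{ANS2013}, and in any case can be enforced by padding the plain graph with a fresh alphabet letter $\#$ and a single $\#$-labeled edge between two new nodes, adding the corresponding pair to the relation). Given such an instance --- a plain graph $G$ without data values and a nonempty relation $S$ on its nodes --- I build a data graph $G'$ with the same nodes and edges and with $\rho(v) = d$ for every node $v$, where $d$ is an arbitrary fixed data value; the output of the reduction is $(G', S)$. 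This is trivially computable in polynomial time.

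For correctness I would show that over $G'$ the equality features of REE are powerless. Let $P$ denote the set of data paths of $G'$. Every data path in $P$ has first and last data value equal to $d$, so for any REE $f$ the set $\Ll(f_{=}) \cap P$ equals $\Ll(f) \cap P$, while $\Ll(f_{\neq}) \cap P = \emptyset$. Using this, together with the fact that intersection with $P$ commutes with $+$, with concatenation (any prefix, resp.\ suffix, of a data path of $G'$ that ends, resp.\ begins, at a node is again a data path of $G'$) and with $(\cdot)^{+}$, a structural induction on an REE $e$ shows that either $\Ll(e) \cap P = \emptyset$ or there is an ordinary regular expression $e'$ over $\Sigma$ with $\Ll(e') \cap P = \Ll(e) \cap P$ (informally: replace each subexpression $f_{=}$ by $f$ and each $f_{\neq}$ by a subexpression with empty language, then propagate emptiness upward). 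Since the data values play no role in which pairs a word-language connects in $G'$, an REE $e$ therefore defines, in $G'$, either the empty relation or an RPQ-definable relation of $G$. Conversely every regular expression is an REE and connects the same pairs over $G'$ as over $G$. As $S$ is nonempty, this yields: $S$ is $\rdpqe$-definable in $G'$ if and only if $S$ is RPQ-definable in $G$, which is exactly what the reduction needs. Combined with Lemma~\ref{prop:ree-upper-bound}, this gives \PSPACE{}-completeness.

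The step I expect to require the most care is the structural induction collapsing REE to ordinary regular expressions over a single-data-value graph. One must verify that the restriction to $P$ genuinely distributes through concatenation and $(\cdot)^{+}$, and --- crucially --- that an $f_{\neq}$ subexpression contributes nothing to $\Ll(e) \cap P$ that could be ``revived'' by a surrounding $+$, concatenation, or iteration, which is why the induction must carry along the two alternatives (``empty over $P$'' versus ``equal over $P$ to an ordinary regular expression''). Everything else is routine bookkeeping.
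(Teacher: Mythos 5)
Your proposal is correct and follows essentially the same route as the paper: the upper bound is quoted from Lemma~\ref{prop:ree-upper-bound}, and the lower bound reduces from RPQ-definability by giving every node the same data value, observing that subexpressions $f_{\neq}$ become empty and $f_{=}$ collapses to $f$, so any defining REE can be turned into an ordinary regular expression defining the same relation. Your structural induction (and the explicit justification of the nonemptiness assumption) merely spells out details the paper states more briefly.
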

\begin{proof}
  The $\PSPACE$ upper bound comes from
  Lemma~\ref{prop:ree-upper-bound}. For the lower bound,
  consider the RPQ-definability problem: given a graph $H$ and a
  relation $T$ on $H$, is $T$ definable by a regular expression? This
  problem is known to be $\PSPACE$-complete~\cite{ANS2013}. We will
  show that RPQ-definability can be reduced to $\rdpqe$ definability.

  Consider the data graph $H'$ obtained from $H$ by attaching the same data value to
  all the nodes. Consider the
  $\rdpqe$ definability of the same set $T$. If $T$ is RPQ-definable on $H$,
  then clearly $T$ would be $\rdpqe$ definable on $H'$ as well using the
  same expression. Suppose $T$ is $\rdpqe$ definable on $H'$ using the
  expression $e$. % If $T$ is empty, then it would be RE-definable on
  % $H$ as well.
  Without loss of generality, assume that $T$ is non-empty. We claim that the REE
  defining $T$ on $H'$ will not have sub-expressions of the form
  $f_{\neq}$. This is because $f_{\neq}$ defines the empty relation
  and hence can be eliminated from the defining REE. Secondly observe
  that for the graph $H'$, no matter which REE $f$ we choose, $S_{f} =
  S_{f_=}$. Hence all sub-expressions of the form $f_{=}$ in $e$ can be
  modified to $f$ and still the defined set remains the same. This
  way, we have obtained a regular expression that defines $H'$. The
  same regular expression would define $T$ on $H$ as well.

\end{proof}

%%% Local Variables: 
%%% mode: latex
%%% TeX-master: "main"
%%% End: 

\section{Union of Conjunctive Queries}
\label{sec:ucq}
In this section, we study the complexity of the definability problem
for data graphs using UCRDPQs. The notion of homomorphisms has been
used in relational databases to characterize relations definable by
union of conjunctive queries. We will now adapt it to data graphs.

\begin{definition}
    \label{def:dataHomomorphism}
    Let $G = (V, E, \rho)$ be a data graph and $h: V \to V$ be a
    mapping. We call $h$ a \emph{data graph homomorphism} if it
    satisfies the following two conditions.
    \begin{enumerate}
        \item (Single step compatibility) For every $p, q \in V$ and
            $a \in \Sigma$, $p \xrightarrow{a} q$ implies $h(p)
            \xrightarrow{a} h(q)$.
        \item (Data compatibility of reachable nodes) For every
            $p, q \in V$, if $q$ is reachable from $p$, then
            $\rho(p) = \rho(q) \Leftrightarrow \rho(h(p)) =
            \rho(h(q))$.
    \end{enumerate}
\end{definition}
Intuitively, a data graph homomorphism $h$ ensures that if there is an
edge labeled $a$ from $p$ to $q$, there is also an edge labeled $a$
from $h(p)$ to $h(q)$, thus preserving the relations induced by the
letters in the finite alphabet. In addition, suppose there is a path
from $p$ to $q$. Then the data values at $p$ and $q$ are same if, and
only if, the data values at $h(p)$ and $h(q)$ are same. This preserves
the relations induced by (in)equality of data values. The following
result characterizes UCRDPQ-definable sets in terms of data graph
homomorphisms.

\begin{lemma}
    \label{lem:definableSetsHomomorphisms}
    Let $G = (V, E, \rho)$ be a data graph and $S$ be a relation of
    any arity. Then the following are equivalent.
    \begin{enumerate}
        \item The set $S$ is UCRDPQ-definable.
        \item For every data graph homomorphism $h$ and every tuple
            $\bar{p} \in S$, $h(\bar{p})$ also belongs
            to $S$.
    \end{enumerate}
\end{lemma}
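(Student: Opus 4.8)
The plan is to prove the two implications separately, following the classical pattern from relational database theory adapted to data graphs. For the direction $(1) \Rightarrow (2)$, I would first argue that it suffices to treat a single CRDPQ (since a UCRDPQ is a finite union, and a homomorphism preserving each disjunct preserves the union). So suppose $S = Q(G)$ for a CRDPQ $Q$ of the form $\mathit{Ans}(\bar z) := \bigwedge_{1 \le i \le m} x_i \xrightarrow{e_i} y_i$, let $h$ be a data graph homomorphism, and let $\bar p \in S$ witnessed by a valuation $\mu$ with $\mu(\bar z) = \bar p$. I would then compose: set $\mu' = h \circ \mu$, and show $(G,\mu') \models Q$, i.e.\ for each $i$ there is a data path from $\mu'(x_i)$ to $\mu'(y_i)$ in $\Ll(e_i)$. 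The key sub-claim is that if a data path $w$ connects $u$ to $v$ in $G$, then some data path $w'$ connects $h(u)$ to $h(v)$ in $G$ with $w'$ ``equivalent'' to $w$ in a sense that REM/REE cannot distinguish. Concretely, single-step compatibility gives an edge-by-edge image path $\xi'$ from $h(u)$ to $h(v)$ with the same letter sequence; and data compatibility of reachable nodes guarantees that the \emph{equality type} (the partition of positions induced by equality of data values) of the image data path $w'_{\xi'}$ refines that of $w$ --- actually I should be careful: the definition only asserts the biconditional for a reachable pair, so I need that all nodes along the path $\xi$ are pairwise in the reachability relation in the appropriate direction, which holds since they all lie on one directed path. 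This shows $w$ and $w'$ have the same equality type, hence are related by an automorphism of $\Dd$; by Fact~\ref{fact:autDataPaths}, $w \in \Ll(e_i) \Rightarrow w' \in \Ll(e_i)$. Therefore $h(\bar p) = \mu'(\bar z) \in Q(G) = S$.

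For the harder direction $(2) \Rightarrow (1)$, I would use the standard ``canonical query from the instance'' construction. The idea is: build a UCRDPQ $Q_S$ whose disjuncts are, roughly, ``frozen copies'' of $G$ itself, one per tuple $\bar p \in S$, where the edges of $G$ become atoms $x_u \xrightarrow{a} x_v$ for each edge $(u,a,v)$, the (in)equalities among data values of reachable nodes are encoded using the $e_=$/$e_\neq$ operators (for REE) or register tests (for REM), and the output tuple is the tuple of variables corresponding to $\bar p$. One then shows $Q_S(G) = S$: the ``$\supseteq$'' direction is immediate since the identity valuation witnesses membership of each $\bar p \in S$; the ``$\subseteq$'' direction is where hypothesis $(2)$ enters --- any valuation $\mu$ satisfying one of the frozen disjuncts on $G$, when one reads off the map $u \mapsto \mu(x_u)$, is essentially a data graph homomorphism from the relevant sub-instance into $G$, and so $\mu(\bar z) = h(\bar p) \in S$. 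The delicate point is that paths in $\Ll(e_i)$ are witnessed by arbitrary data paths, not necessarily ones landing in the frozen copy, so I must make the atoms expressive enough to pin down a genuine homomorphism: the REE (or REM) on each atom must assert both the single letter \emph{and} the equality relationship of its endpoints' data values with all previously-seen nodes. Encoding ``data value equal to the value at node $u_j$'' where $u_j$ may be far away in the path requires either the $(\cdot)_=$ nesting of REE applied to the whole sub-path traversed so far, or registers in REM; this is exactly where the ``special privilege of the equivalence relation'' discussed in the Preliminaries is exploited, and it is the main obstacle in getting the construction right.

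The main obstacle, then, is the construction in $(2) \Rightarrow (1)$: one must design the defining query carefully so that satisfying it on $G$ forces a map that verifies \emph{both} conditions of Definition~\ref{def:dataHomomorphism} --- single-step edge compatibility (easy, one atom per edge) and the two-way data-equality condition on reachable pairs (subtle, because it is a global constraint that must be stitched together from path-local equality tests, and because reachability must itself be tracked). I expect to need a fixed enumeration of the nodes of $G$ so that each atom in the frozen query can refer back to earlier nodes, together with a short argument that the biconditional form of data compatibility follows from checking equality along directed paths in both forward and backward directions (or, more cleanly, from checking it along a spanning structure of the reachability preorder). Once the construction is in place, verifying $Q_S(G) = S$ is routine given hypothesis $(2)$ and Fact~\ref{fact:autDataPaths}.
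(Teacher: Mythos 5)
Your direction $(1) \Rightarrow (2)$ is essentially the paper's proof: reduce to a single CRDPQ, compose the witnessing valuation with $h$, use single-step compatibility to get an image path with the same letters, and use data compatibility (applicable because every later node on a directed path is reachable from every earlier one) to conclude the image data path is automorphic to the original, so Fact~\ref{fact:autDataPaths} keeps it in $\Ll(e_i)$. That part is correct.

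The gap is in $(2) \Rightarrow (1)$: you never actually construct the defining query, and the ``main obstacle'' you flag (encoding data-value equality with far-away, previously-seen nodes via nested $(\cdot)_=$ over traversed sub-paths, registers, a node enumeration, or a spanning structure of the reachability preorder) is a misreading of what the construction needs. A CRDPQ is a conjunction of \emph{free-standing binary atoms} between arbitrary pairs of variables; nothing forces the atoms to follow a path or to ``refer back''. The paper takes one variable $x_i$ per node $p_i$ and forms a single conjunctive query $\phi_G(\bar{x})$ with (a) an atom $x_i \xrightarrow{a} x_j$ for every edge $(p_i,a,p_j)$, and (b) for every pair with $p_j$ reachable from $p_i$, one extra atom $x_i \xrightarrow{(\Sigma^{+})_{=}} x_j$ if $\rho(p_i)=\rho(p_j)$ and $x_i \xrightarrow{(\Sigma^{+})_{\ne}} x_j$ otherwise; the UCRDPQ is the union, over $\bar{p} \in S$, of $\mathit{Ans}$ returning the corresponding variables with body $\phi_G(\bar{x})$. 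Your worry that witnessing paths ``need not land in the frozen copy'' is immaterial: the homomorphism definition only constrains reachability and the (in)equality of the \emph{endpoint} data values of reachable pairs, which is exactly what $(\Sigma^{+})_{=}$ and $(\Sigma^{+})_{\ne}$ express (and they assert reachability for free), while the atom $x_i \xrightarrow{a} x_j$ forces a genuine single edge since $\Ll(a)$ contains only one-edge data paths. Hence any satisfying valuation $\mu$ induces $h_\mu(p_i)=\mu(x_i)$ which is a data graph homomorphism, and hypothesis $(2)$ gives the inclusion $Q(G) \subseteq S$, while the identity valuation gives $S \subseteq Q(G)$. Without this (or some equally concrete) encoding, your proposal for $(2) \Rightarrow (1)$ remains an unproved plan, and the specific encodings you sketch point in a needlessly complicated and partly unworkable direction (a single REE/REM atom cannot relate its endpoint to a node not on its own path, which is precisely why the per-pair reachability atoms are the right fix).
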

\begin{proof}
    (1 $\Rightarrow$ 2). Suppose $S = Q_{1}(G) \cup \cdots \cup
    Q_{k}(G)$, where $Q_{1}, \ldots, Q_{k}$ are CRDPQs. Suppose
    $Q_{j}$ is of the form \eqref{eq:crdpq}(in~\defref{def:cdpq}) and
    $\bar{p} \in Q_{j}(G)$. Let $h$ be a data graph homomorphism. We
    will prove that $h(\bar{p}) \in Q_{j}(G)$. From the semantics of
    CRDPQs, we infer that there is a valuation $\mu: \cup_{1 \le i \le
    m} \{x_{i}, y_{i}\} \to V$ such that $(G, \mu) \models Q_{j}$ and
    $\bar{p} = \mu(\bar{z})$. Let $h \circ \mu$ be the valuation such
    that $h \circ \mu(x) = h(\mu(x))$ for all $x$. It is enough to
    prove that $(G, h \circ \mu) \models
    Q_{j}$ --- in that case,
    $h(\bar{p}) = h \circ \mu(\bar{z}) \in Q_{j}(G)$. Now let us prove that
    $(G, h \circ \mu) \models Q_{j}$. Since $Q_{j}$ is of the form
    \eqref{eq:crdpq} and $(G, \mu) \models Q_{j}$, we infer that for
    every $i= 1, \ldots, m$, there is a data path $w_{i} \in
    \Ll(e_{i})$ from $\mu(x_{i})$ to $\mu(y_{i})$. It is enough to prove that
    there is a data path $w_{i}'$ automorphic to $w_{i}$
    from $h \circ \mu(x_{i})$ to $h \circ \mu(y_{i})$ for every
    $i = 1, \ldots , m$ --- in that case $(G, h \circ \mu) \models
    Q_{j}$. So suppose $w_{i}$ is the data path associated with the
    path $ \mu(x_{i}) \xrightarrow{a_{1}} p_{2} \xrightarrow{a_{2}} \cdots
    \xrightarrow{a_{l-1}} p_{l} \xrightarrow{a_{l}} \mu(y_{i})$, where
    $p_{2}, \ldots, p_{l}$ are the intermediate nodes in the path from
    $\mu(x_{i}) = p_{1}$ to $\mu(y_{i})=p_{l+1}$. Since $h$ is a data
    graph homomorphism, we infer from the single step compatibility
    property that $G$ has a path $h \circ \mu(x_{i})
    \xrightarrow{a_{1}} h(p_{2}) \xrightarrow{a_{2}} \cdots
    \xrightarrow{a_{l-1}} h(p_{l}) \xrightarrow{a_{l}} h \circ
    \mu(y_{i})$, where $h(p_{2}), \ldots, h(p_{l})$ are intermediate
    nodes in a path from $h \circ \mu(x_{i})$ to $h \circ \mu(y_{i})$.
    We claim that the data path $w_{i}'$ associated with this path is
    automorphic to $w_{i}$.  If not, there would be positions $j_{1}$
    and $j_{2}$ such that $\rho(p_{j_{1}}) = \rho(p_{j_{2}})$ but
    $\rho(h(p_{j_{1}})) \ne \rho(h(p_{j_{2}}))$ (or vice-versa),
    violating the data compatibility property of the data graph
    homomorphism $h$. Thus, $w_{i}'$ is a data path automorphic to
    $w_{i}$ from $h \circ \mu(x_{i})$ to $h \circ \mu(y_{i})$. This
    concludes the proof that $h(\bar{p}) \in S$.

    (2 $\Rightarrow$ 1) Suppose $V = \{p_{1}, \ldots, p_{n} \}$. Let
    $\bar{x} = \struct{x_{1}, \ldots, x_{n}}$. Let
    $\phi_{G}(\bar{x})$ be defined as follows.
    \begin{align*}
        \phi_{G}(\bar{x}) & = \bigwedge_{(p_{i}, a, p_{j}) \in E}
        x_{i} \xrightarrow{a}x_{j} \quad \land \bigwedge_{(p_{i},
        p_{j}) \in (\Sigma^{+})_{=}(G)} x_{i}
        \xrightarrow{(\Sigma^{+})_{=}} x_{j}\\
        &\land  \bigwedge_{(p_{i}, p_{j}) \in (\Sigma^{+})_{\ne}(G)}
        x_{i} \xrightarrow{(\Sigma^{+})_{\ne}} x_{j}
    \end{align*}
    In the above definition, $(\Sigma^{+})_{=}$ is an REE;
    $(\Sigma^{+})_{=}(G)$ is the set of pairs of nodes $(p_{i},
    p_{j})$ such that $p_{j}$ is reachable from $p_{i}$ and both nodes
    have the same data value. Similarly, $(\Sigma^{+})_{\ne}(G)$ is
    the set of pairs of nodes $(p_{i}, p_{j})$ such that $p_{j}$ is
    reachable from $p_{i}$ and the two nodes have different data
    values. The valuation that assigns $p_{i}$ to $x_{i}$ for every
    $i=1, \ldots, n$ satisfies all the conditions in
    $\phi_{G}(\bar{x})$. If a valuation $\mu$ for $\bar{x}$ satisfies
    all the conditions in $\phi_{G}(\bar{x})$, then the mapping
    $h_{\mu}: V \to V$ such that $h_{\mu}(p_{i}) = \mu(x_{i})$ is a data
    graph homomorphism. Let $S'$ be the set of tuples defined by the
    UCRDPQ $\{\mathit{Ans}(\struct{x_{i_{1}}, \ldots, x_{i_{r}}
    }) := \phi_{G}(\bar{x}) \mid \struct{p_{i_{1}}, \ldots,
    p_{i_{r}} } \in S\}$. We claim that $S' = S$, which will
    prove that $S$ is UCRDPQ-definable.

    ($S \subseteq S'$): For every $\struct{p_{i_{1}}, \ldots, p_{i_{r}} }
    \in S$, we have that $\struct{p_{i_{1}}, \ldots, p_{i_{r}} } \in
    \mathit{Ans}(\struct{x_{i_{1}}, \ldots, x_{i_{r}} }) :=
    \phi_{G}(\bar{x})(G) \subseteq S'$.

    ($S' \subseteq S$): Suppose $\struct{p_{j_{1}}, \ldots, p_{j_{r}}
    } \in S'$. Then there is some $\struct{p_{i_{1}}, \ldots, p_{i_{r}} }
    \in S$ and a valuation $\mu$ for $\bar{x}$ such that $\mu$ satisfies
    all the conditions in $\phi_{G}(\bar{x})$ and $\struct{p_{j_{1}},
    \ldots, p_{j_{r}} } = \struct{\ \mu(x_{i_{1}}), \ldots, \mu(x_{i_{r}})
    }$. The mapping $h_{\mu}: V \to V$ such that $h_{\mu}(p_{i}) =
    \mu(x_{i})$ is a data graph homomorphism. Now we have
    $\struct{p_{j_{1}}, \ldots, p_{j_{r}} } = \struct{\mu(x_{i_{1}}),
    \ldots, \mu(x_{i_{r}})}$ and in addition $\struct{\mu(x_{i_{1}}), \ldots,
    \mu(x_{i_{r}})} = \struct{h_{\mu}(p_{i_{1}}), \ldots,
    h_{\mu}(p_{i_{r}})}$. Hence, we have $\struct{ p_{j_{1}}, \ldots, p_{j_{r}} } =
    h_{\mu} (\struct{ p_{i_{1}}, \ldots, p_{i_{r}}}) \in S$; the last
    inclusion follows from condition 2 of the lemma, as
    $\struct{p_{i_{1}}, \ldots, p_{i_{r}} } \in S$.  \end{proof}

Readers familiar with Global as View (GAV) schema mappings for
relational databases may note some similarities with the above result.
For a data graph $G$, consider the relational database
$D_{G}$ over the domain $V$ consisting of all the binary relations
that are definable by RDPQs. For a set of tuples $S$ over $V$, let
$D_{S}$ be the relational database consisting of the single relation
$S$. Then $S$ is UCRDPQ-definable on $G$ iff some GAV schema
mapping fits the source database $D_{G}$ and the target database
$D_{S}$. A characterization using homomorphisms similar to the one in
\lemref{lem:definableSetsHomomorphisms} is given for GAV schema
mappings in \cite{CKT2010, ATKT2011}. Here, we extend the notion of
homomorphisms to include data value compatibility. A
\coNP{}-completeness result for a subclass of GAV schema mappings is
given in \cite{CKT2010}. We give a similar result for
UCRDPQ-definability below. However, there is no obvious way of
directly using the upper bound in \cite{CKT2010} here, since the
relational database $D_{G}$ may be exponentially larger than $G$.
Preservation under homomorphism is a fundamental concept, which
appears in other contexts as well, for example querying databases with
incomplete information \cite{GheerbrantLS13}.

\begin{theorem}
    \label{thm:ucrdpqCoNP}
    UCRDPQ-definability is \coNP{}-complete.
\end{theorem}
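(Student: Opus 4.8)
The plan is to obtain both bounds from the homomorphism characterization in \lemref{lem:definableSetsHomomorphisms}, which says that a relation $S$ of any arity is UCRDPQ-definable on $G$ (whether the underlying queries use REM or REE — the characterization is insensitive to this) exactly when $S$ is closed under every data graph homomorphism of $G$. For the upper bound I would show that the \emph{complement} of UCRDPQ-definability has short, polynomially checkable certificates. By \lemref{lem:definableSetsHomomorphisms}, $S$ is not definable iff there exist a data graph homomorphism $h : V \to V$ and a tuple $\bar p \in S$ with $h(\bar p) \notin S$; such a pair $(h,\bar p)$ has size polynomial in $|G|$. Verifying it takes polynomial time: single-step compatibility is checked edge by edge; data compatibility is checked by first computing the reachability relation of $G$ and then testing, for each reachable pair $(p,q)$, the biconditional $\rho(p)=\rho(q)\Leftrightarrow\rho(h(p))=\rho(h(q))$; finally one tests $\bar p\in S$ and $h(\bar p)\notin S$ against the explicitly given $S$. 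Hence non-definability is in nondeterministic polynomial time, so UCRDPQ-definability is in \coNP{}.

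For the lower bound I would give a polynomial-time reduction from (the complement of) an NP-complete problem, along the lines of the \coNP{}-hardness proof for fitting GAV schema mappings in \cite{CKT2010}; a direct appeal to that result is not available, since — as noted above — the relational database $D_G$ associated with $G$ can be exponentially larger than $G$. Concretely, I would encode an NP-complete constraint problem, say $3$-colourability of an undirected graph $H$ (presented in the standard way as a directed $\Sigma$-graph). The constructed data graph $G_H$ would consist of a ``pattern'' gadget representing $H$, a rigid ``palette'' gadget representing $K_3$ taken to be a core (so that its only endomorphisms are automorphisms), plus scaffolding edges pinning down the remaining nodes, with a single data value on all nodes so that the data-compatibility clause of \defref{def:dataHomomorphism} is vacuous and a data graph homomorphism is just an edge-respecting map. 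The relation $S$ would be chosen so that the identity and all palette automorphisms preserve it, while any homomorphism of $G_H$ violating $S$ is forced to map the pattern gadget homomorphically into the palette, i.e. to exhibit a $3$-colouring of $H$. Then, by \lemref{lem:definableSetsHomomorphisms}, $S$ is UCRDPQ-definable iff no homomorphism violates $S$ iff $H$ is not $3$-colourable; since $G_H$ and $S$ are built in polynomial time this gives \coNP{}-hardness, and it already applies in the data-value-free setting of \cite{ANS2013}.

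I expect the lower-bound construction to be the main obstacle. It must be engineered so that the \emph{only} endomorphisms of $G_H$ that can move a tuple out of $S$ are precisely those encoding a colouring: one has to block spurious endomorphisms — for instance those that collapse part of the pattern onto itself, or that rearrange scaffolding nodes — while simultaneously guaranteeing that $S$ really is closed under all the harmless endomorphisms, starting with the identity. Making the palette a core and attaching enough rigid scaffolding around the rest of $G_H$ is the mechanism for this; once the construction is fixed, checking the two implications (violating homomorphism $\Rightarrow$ colouring, and colouring $\Rightarrow$ violating homomorphism) against \defref{def:dataHomomorphism} is routine but bookkeeping-heavy.
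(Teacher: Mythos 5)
Your upper bound is exactly the paper's: by \lemref{lem:definableSetsHomomorphisms}, a certificate of \emph{non}-definability is a pair $(h,\bar p)$ with $h$ a data graph homomorphism and $\bar p\in S$, $h(\bar p)\notin S$, and such a certificate is polynomial-size and polynomial-time checkable (reachability plus the data-value biconditional), giving membership in \coNP{}. No issues there.

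The lower bound is where you diverge from the paper and where your proposal is not yet a proof. The paper reduces from unsatisfiability of Boolean 3-CNF formulas, adapting the GAV-fitting construction of \cite{CKT2010}: it exhibits an explicit data graph (with edge labels $\alpha,\beta,\gamma,\top,\bot,l,l_1,l_2,l_3$) and relation $S$, and the labels force every homomorphism either to act identically on the relevant part or to encode a truth assignment, with the $\gamma$-chains enforcing an all-or-nothing escape of the clause nodes. You propose instead a reduction from non-3-colourability via a pattern/palette construction; the idea is viable, but everything that constitutes the hardness argument is deferred: the relation $S$ is never specified, the scaffolding is unspecified, and neither implication is verified --- you yourself flag the construction as ``the main obstacle.'' One concrete trap hiding in the ``routine bookkeeping'': if $H$ is disconnected and has a 3-colourable component, a homomorphism can map just that component into the palette while fixing the rest, so $S$ would fail to be definable even though $H$ is not 3-colourable; you must restrict to connected $H$ (where 3-colourability is still NP-complete) or otherwise block partial escapes. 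That said, once this is fixed your route closes easily and needs no scaffolding at all: take $G_H$ to be the disjoint union of connected $H$ and a triangle, every (bidirected) edge labeled by a single letter $a$, one data value on all nodes so that the data-compatibility clause of \defref{def:dataHomomorphism} is vacuous, and let $S$ be the set of unary tuples $\langle v\rangle$ for $v$ a node of $H$. A homomorphism sending some node of $H$ into the triangle must, by connectedness and because the triangle has no edges to $H$, send all of $H$ into the triangle, i.e.\ properly 3-colour it; homomorphisms keeping $H$ inside $H$ preserve $S$, and triangle nodes are outside $S$ so their images are irrelevant. Hence $S$ is UCRDPQ-definable iff $H$ is not 3-colourable, which is a genuinely different (and arguably simpler) reduction than the paper's clause-gadget construction --- but as submitted, your lower bound stops short of establishing it.
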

\begin{proof}
    We first prove the \coNP{} upper bound. Given a data graph
    $G = (V, E, \rho)$ and a set of tuples $S$, we can guess a mapping
    $h: V \to V$, verify that it is a data graph homomorphism and that
    there is some tuple $\bar{p} \in S$ such that $h(\bar{p}) \notin
    S$. If $S$ is not UCRDPQ-definable, then
    \lemref{lem:definableSetsHomomorphisms} ensures that at least one
    of the guesses will succeed. On the other hand, if $S$ is
    UCRDPQ-definable, then none of the guesses will succeed.

    For the \coNP{} lower bound, we reduce the unsatisfiability
    problem for Boolean 3-CNF formulas to the UCRDPQ-definability
    problem. This part of the proof is an adaptation of a similar
    proof from \cite{CKT2010} about relational databases (which can
    have ternary relations) to data graphs (which can only have binary
    relations). Given a 3-CNF formula $F$ consisting of clauses
    $C_{1}, \ldots, C_{m}$ over the variables $p_{1}, \ldots, p_{n}$,
    we map it to the data graph shown in \figref{fig:ucrdpqLowBound}.
    All nodes have the same data value, which is not shown explicitly.
    \begin{figure*}
        \centering
        \input{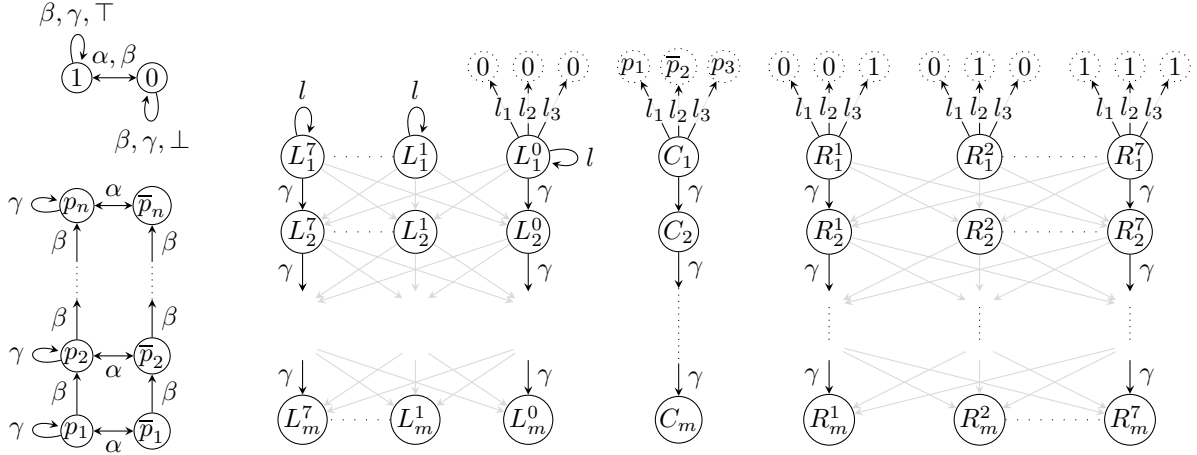}
        \caption{Data graph for the lower bound of
        UCRDPQ-definability}
        \label{fig:ucrdpqLowBound}
    \end{figure*}

    There is only one node {\tikz \node[state] {$1$};}, but the same
    node is drawn (dotted) at many places in the diagram to avoid the
    clutter of too many edges. Same comment applies to all the nodes
    drawn dotted. There is an edge labeled $\g$ from all nodes
    $R_{i}^{j}$ to $R_{i+1}^{k}$, but most of the edges are grayed out
    and the label $\g$ is not shown to reduce clutter. Same applies to
    edges from $L_{i}^{j}$ to $L_{i+1}^{k}$. In the diagram, the
    clause $C_{1}$ is assumed to be $(p_{1} \lor \lnot p_{2} \lor
    p_{3})$.  From every clause node $C_{i}$, there are edges labeled
    $l_{1}$, $l_{2}$ and $l_{3}$ to the nodes corresponding to the
    literals occurring in $C_{i}$. Only the edges from $C_{1}$ are
    shown and others are not shown. From every node $R_{i}^{j}$ and
    $L_{i}^{j}$, there is an edge labeled $l_{k}$ to either $0$ or
    $1$, depending on the $k$\textsuperscript{th} most significant bit
    of the binary representation of $j$. Only the edges from
    $L_{1}^{0}$, $R_{1}^{1}$, $R_{1}^{2}$ and $R_{1}^{7}$ are shown.
    Others are not shown. We claim that the given Boolean 3-CNF
    formula $F$ is not satisfiable iff the set of tuples of nodes $S =
    \{\struct{C_{1}}, \ldots, \struct{C_{m}}\} \cup
    \{\struct{L_{i}^{j}} \mid 1 \le i \le m, 0 \le j \le 7\}$ is
    UCRDPQ-definable.

    Suppose there is an assignment $\mathit{sa}: \{p_{1},
    \ldots, p_{n}\} \to \{0, 1\}$ satisfying $F$.  Consider the graph
    mapping $h$ that maps the node $p_{i}$ to the node
    $\mathit{sa}(p_{i})$ and $\bar{p}_{i}$ to $1 -
    \mathit{sa}(p_{i})$. For every $i=1, \ldots, m$, $h$ maps the node
    $C_{i}$ to the node $R_{i}^{j}$; here $j$ is the number whose
    binary representation is the one formed by the three literals of
    the clause $C_{i}$ according to the satisfying assignment
    $\mathit{sa}$. All other nodes are mapped to themselves by
    $h$. This mapping $h$ is a data graph homomorphism and
    $h(\struct{C_{1}}) = \struct{R_{1}^{j}}$ for some
    $j$. Since $\struct{C_{1}} \in S$ and $\struct{R_{1}^{j}} \notin
    S$, we infer from \lemref{lem:definableSetsHomomorphisms} that $S$
    is not UCRDPQ-definable.

    Conversely, suppose $F$ is not satisfiable. Let $h$ be any data
    graph homomorphism. We will prove that $h(\struct{p}) \in S$ for
    every tuple $\struct{p} \in S$. Since the only node with a self
    edge labeled $\top$ (resp.~$\bot$) is $1$ (resp.~$0$), $1$
    (resp.~$0$) is mapped to itself by $h$. Due to the self edges
    labeled $l$ and the edges labeled $l_{1}$, $l_{2}$ and $l_{3}$,
    $h$ maps $L_{1}^{j}$ to itself for every $j=0, \ldots, 7$. The
    edges labeled $\g, l_{1}, l_{2}, l_{3}$ then force $h$ to map $L_{i}^{j}$ to itself
    for every $i,j$. It remains to prove that $h(\struct{C_{i}}) \in
    S$ for every $i = 1, \ldots, m$. Due to the edges labeled
    $\b$ and the self edges labeled $\g$, $h$ maps $p_{1}$ to either
    itself or to $1$ or to $0$. If $h$ maps $p_{1}$ to itself, then
    the edges labeled $\a$ and $\b$ force $h$ to map $p_{i}$ to itself
    (and $\bar{p}_{i}$ to itself) for every $i = 1, \ldots, n$. The
    edges labeled $l_{1}$, $l_{2}$ and $l_{3}$ then force $h$ to map
    $C_{i}$ to itself for every $i=1, \ldots, m$. On the other hand,
    if $h$ maps $p_{1}$ to $1$ or $0$, the edges labeled $\a$ force
    $h$ to map $\bar{p}_{1}$ to $1 - h(p_{1})$. The edges labeled $\a$
    and $\b$ then force $h$ to map $p_{i}$ to $1$ or $0$ and
    $\bar{p}_{i}$ to $1 - h(p_{i})$ for every $i = 1, \ldots, n$. The
    homomorphism $h$ thus determines a truth assignment for $p_{1},
    \ldots, p_{n}$. For every $i=1, \ldots, m$, the edges labeled
    $l_{1}$, $l_{2}$ and $l_{3}$ force $h$ to map $C_{i}$ to
    either $L_{i}^{j}$ or $R_{i}^{j}$; here $j$ is the number whose
    binary representation is the one formed by the three literals of
    the clause $C_{i}$ according to the truth assignment determined by
    $h$. If $h$ maps $C_{i}$ to $R_{i}^{*}$ ($R_{i}^{*}$ could be any
    one of $R_{i}^{1}$, \ldots, $R_{i}^{7}$) for some $i = 1, \ldots,
    m$, then the edges labeled $\g$ force $h$ to map $C_{i}$ to
    $R_{i}^{*}$ for every $i = 1, \ldots, m$. This implies that the
    truth assignment determined by $h$ assigns at least one literal to
    $\mathit{true}$ in every clause, contradicting the hypothesis that
    $F$ is not satisfiable. Hence $h$ maps $C_{i}$ to $L_{i}^{*}$ for
    every $i = 1, \ldots, m$. Since this holds for every data graph
    homomorphism, we conclude that $h(\struct{p}) \in S$ for every
    data graph homomorphism $h$ and every tuple $\struct{p} \in S$.
    Hence, we can conclude from
    \lemref{lem:definableSetsHomomorphisms} that $S$ is
    UCRDPQ-definable.
\end{proof}

%%% Local Variables: 
%%% mode: latex
%%% TeX-master: "main"
%%% End: 

\section{Discussion}

A natural question to ask is how to synthesize a query that defines a
given relation. In principle, the decision procedures in the paper can
be converted into a procedure to synthesize a defining query. However
such queries would not have an interesting structure. For instance, in
the REM and REE cases, the synthesized queries do not make use of the
star operator.  Moreover, the lower bound for the decision problem
implies that the worst case size of the defining queries will be
doubly exponential for REMs, and exponential for REEs.  In the UCRDPQ
case, the defining query described in
Lemma~\ref{lem:definableSetsHomomorphisms} essentially constructs the
whole data graph using conditions and then picks out the required
tuples. This does not capture the essence of conjunctive queries,
which is to identify patterns that are much smaller than the graphs
themselves.

A possible future direction would be to find a notion of ``good''
queries and reformulate the definability problem to ask for the
existence of ``good'' defining queries. 

In some application domains, data graphs may have a special structure
(such as not too many cycles). An orthogonal direction would be to
study the definability problem for such data graphs.

%%% Local Variables: 
%%% mode: latex
%%% TeX-master: "main"
%%% End: 

\bibliographystyle{plain}
\bibliography{references}

\end{document}